\documentclass[copyright,creativecommons]{eptcs}
\providecommand{\event}{AiML 2026} 

\usepackage{aiml26}

\usepackage{iftex}

\usepackage{amsmath,amssymb,mathtools}
\usepackage{stmaryrd}
\usepackage{booktabs}
\ifpdf
\usepackage{underscore}         
\usepackage[T1]{fontenc}        
\else
\usepackage{breakurl}           
\fi
\newcommand{\K}{\ensuremath{\mathsf{K}}}

\newcommand{\Ex}{\ensuremath{\mathcal{E}}}

\newcommand{\SCUzero}{\ensuremath{\mathsf{SCU}_{\kappa}}}
\newcommand{\SCUi}{\ensuremath{\mathsf{SCU}}}
\newcommand{\UO}{\ensuremath{\mathsf{U}}}

\newcommand{\Kg}{\ensuremath{\mathtt{DISTK}}}
\newcommand{\Kgg}{\ensuremath{\mathtt{DISTU}}}
\newcommand{\T}{\ensuremath{\mathtt{T}}}

\newcommand{\IYB}{\ensuremath{\mathtt{DEG}}}
\newcommand{\UYC}{\ensuremath{\mathtt{UYC}}}
\newcommand{\CMP}{\ensuremath{\mathtt{CMP}}}
\newcommand{\CMPn}{\ensuremath{\mathtt{CMPn}}}

\newcommand{\NE}{\ensuremath{\mathtt{NE}}}

\newcommand{\Kyy}{\ensuremath{\mathsf{Ky}}}

\newcommand{\M}{\ensuremath{\mathcal{M}}}

\newcommand{\ELBU}{\ensuremath{\mathbf{CELU}}}

\newcommand{\TAUT}{\ensuremath{\mathtt{TAUT}}}
\newcommand{\4}{\ensuremath{\mathtt{4}}}
\newcommand{\5}{\ensuremath{\mathtt{5}}}
\newcommand{\IMP}{\ensuremath{\mathtt{UYK}}}

\newcommand{\Asym}{\ensuremath{\mathtt{ASYM}}}
\newcommand{\Trans}{\ensuremath{\mathtt{TRANS}}}
\newcommand{\Irref}{\ensuremath{\mathtt{IRREF}}}

\newcommand{\gr}{\ensuremath{\mathrm{gr}}}

	\newcommand{\KYU}{\ensuremath{\mathtt{KYU}}}
	\newcommand{\MP}{\ensuremath{\mathtt{MP}}}
	\newcommand{\Nn}{\ensuremath{\mathtt{N}}}
	\newcommand{\la}{\langle}
	\newcommand{\ra}{\rangle}
	
	\renewcommand{\phi}{\varphi}

	\makeatletter
	\newcommand*\bigcdot{\mathpalette\bigcdot@{.5}}
	\newcommand*\bigcdot@[2]{\mathbin{\vcenter{\hbox{\scalebox{#2}{$\m@th#1\bullet$}}}}}
	\makeatother
	
	\title{Better Understanding, Understanding Better}
	\author{Yu Wei
		\institute{Department of Philosophy\\
			East China Normal University
			\\
			Shanghai, China}
		\email{ywei@philo.ecnu.edu.cn}
	}

	\newcommand{\titlerunning}{Better Understanding, Understanding Better}
	\newcommand{\authorrunning}{Y. Wei}
	
	\hypersetup{
		bookmarksnumbered,
		pdftitle    = {\titlerunning},
		pdfauthor   = {\authorrunning},
		pdfsubject  = {EPTCS},               
		pdfkeywords = {Understanding why, Knowing why, Graded modalities, Comparative modal logic, Justification logic} 
	}
	
\begin{document}
		\maketitle
		\begin{abstract}
			``Any fool can know; the point is to understand.'' A well-known remark often attributed to Einstein captures a widely shared intuition: understanding is more than merely knowing. Yet epistemic logic has paid relatively little attention to understanding, despite its central role in contemporary epistemology, philosophy of science, and recent debates about AI. A recurring theme in the philosophical literature is that, unlike knowledge, understanding comes in degrees: one may understand something more or less well, and one's understanding may be better than another's. We introduce a comparative epistemic logic of understanding with graded modalities $\UO_i^\tau$ and a comparative connective $(i\succ j)\phi$ for ``$i$ understands why $\phi$ better than $j$''. Semantically, we enrich multi-agent epistemic models with agent-indexed graded explanation structures and a justification-style term algebra. This yields a unified framework for representing minimal, ordinary, more demanding, and ideal understanding, together with comparisons between agents with respect to the same formula at issue. We distinguish a finitary bounded-level calculus from an infinitary full-language companion system. We establish soundness and strong completeness, and show that each fixed finite-level fragment is decidable.
		\end{abstract}
		
		\section{Introduction}
			\emph{Why does the Earth orbit the Sun?}
			This question can be simple enough for a classroom.
			A child may be credited with understanding why by giving a basic gravity story
			(e.g., ``the Sun's gravity keeps the Earth going around''); in another context, say in an
			undergraduate astrophysics seminar, that attribution may be withdrawn once stricter explanatory
			standards are imposed.\footnote{We borrow this example from \cite{Egler2024-WUC}.}
			The withdrawal does not mean ``no understanding at all'': it means the required explanatory standard has shifted.
			The same pattern repeats at higher levels.
			An undergraduate physics student may count as understanding in a seminar, yet not in a panel of astrophysics experts discussing nearby phenomena.
			Again, the point is not that the student's understanding evaporates.
			The point is that what it takes to qualify as understanding in that context is more epistemically demanding. This is exactly the phenomenon of understanding coming in degrees.
			As noted in \cite{sliwa2015iv}, one prominent reason for distinguishing understanding from knowledge is that understanding is widely taken to admit degrees, whereas knowledge is typically not treated in this way.

	It already reveals the core logical issue of the present work. Besides, the same example also displays a \emph{comparative} dimension. It is natural to say that some people understand why something is the case better than others.
	Two agents may both explain why the Earth orbits the Sun, yet one explanation can be deeper,
	more integrated, or more counterfactually robust.
	We therefore need to express not only whether an agent understands why a proposition holds, but also whether one agent understands it better than another. Modern epistemic logic has rich tools for knowledge and belief, but much less machinery for these two features taken together: degree-sensitive understanding and comparative understanding.
	
		This logical gap mirrors a deeper epistemological distinction: standard philosophical cases separate knowing from understanding.
	A child may know via testimony that faulty wiring caused a fire while lacking the ability to
	explain the relevant mechanism.
	A scientist may identify oxygen as a decisive factor in a reaction yet still lack 
	understanding of why that dependence holds
	\cite{pritchard2014knowledge,lawler2019understanding}.
		These stories support, in particular, the contrast between knowing why and understanding why. They suggest not only a non-identity claim: understanding is often taken to require explanatory grasp beyond merely possessing true information, and is therefore a richer epistemic achievement than bare knowing.
	Given this distinction and its epistemic significance, understanding is now a central topic
	in contemporary epistemology and philosophy of science rather than a marginal one
	\cite{khalifa2013role,Baumberger2017-BAUWIU,grimm201610}.

	The issue is equally salient in AI-facing debates. In the wake of large language models,
	disputes about whether, and in what sense, AI genuinely understands have moved to the center
	of discussion \cite{mitchell2024debate,beckmann2025mechanistic}. Yet the pressure is not new:
	well before the recent LLM wave, authors had already noted that ``understanding'' is often
	invoked operationally without a stable theoretical account \cite{thorisson2016about}.
	If formal epistemology is to contribute here, it must represent not only knowledge but also
comparative understanding.

	There is also a historical reason to revisit the topic.
	Understanding was not alien to earlier logical traditions: medieval epistemic logic
	treated it as an epistemic mode in its own right \cite{boh1993epistemic,boh2000four}.
	As Boh emphasizes, ``the epistemic modality of understanding seems to be treated as even
	more basic than knowing or believing'' \cite[p.~100]{boh1993epistemic}.
	Our aim is to recover understanding as a formally disciplined notion within contemporary
	epistemic logic, while keeping close contact with current philosophical discussions.
	Formally, we build on \cite{yu2024understanding,xu2016logic} by moving to a graded
	and comparative setting that integrates an epistemic base, a justification-style
	explanation algebra, and an explicit comparative connective within one framework.

	\subsection{Background and Related Work}

	Philosophers distinguish multiple uses of ``understanding'':
	understanding-that, understanding-wh, and objectual understanding
	\cite{gordon2012there,baumberger2014types}.
Among these, understanding-wh (especially understanding-why) is often treated as the central case \cite{Khalifa2017-KHAUEA}.	We follow this line and focus on understanding-why.

	
	A central philosophical thesis is that understanding is explanation-involving.
	Understanding why is widely characterized as ``\textit{explanatory understanding}'',\footnote{
		For example, see \cite{Baumberger2017-BAUWIU,Khalifa2017-KHAUEA} and the
		bibliographies therein. We do not treat the ``why'' in ``understanding why'' as restrictive: some cases are
		more naturally phrased as \textit{understanding how} \cite{Khalifa2017-KHAUEA}.
		For example, one may say ``understanding how the dinosaurs went extinct'' rather than
		``understanding why they went extinct,'' without intending any substantial difference.
	}
	which already signals this connection.
	Wilkenfeld argues that explanations are the kinds of things that bring about understanding
	\cite{Wilkenfeld2014-WILFEA-4}, and Strevens puts the point succinctly: ``No understanding
	without explanation'' \cite{Strevens2013-STRNUW}.
	At the same time, theories of explanation differ sharply over what counts as explanans and
	explanatory support \cite{Hempel1948-HEMSIT-3,Hempel1965-HEMAOS,Salmon1985-SALSEA-9,sep-scientific-explanation}. For a logic with broad applicability, this suggests a methodological stance: explanatory structure should be
	represented abstractly enough to avoid commitment to any one substantive theory, but concretely enough for logical analysis. 
	A nearby line treats understanding as involving compression: understanding is not a matter of storing a long list of disconnected facts, but of having a compact representation of relevant structure that can be used to recover relevant
	information about the target phenomenon \cite{Wilkenfeld2018-UAC,Carbonell2025-CGCU}.
	This point fits the present project well: useful compression must still retain enough information to support explanation, and the formal framework below will distinguish more specific explanations from coarser explanatory resources that may provide weaker support than their more specific counterparts.

	
		Recent epistemic logic has extensively studied non-standard knowledge operators
		(what/how/why, etc.); see \cite{wang2018beyond}.
		A key predecessor for our setting is the logic of knowing why
		\cite{xu2016logic}, which combines epistemic accessibility with explanation terms.
		Intuitively, agent $i$ knows why $p$ when $i$ knows that $p$ and has a uniform explanation
		witness that works across all $i$-accessible worlds.
		In this sense, knowing-why has an existential explanation profile, which can be rendered schematically
		as $\exists t\,\K_i(t:p)$ in a justification-style metalanguage. 
		Philosophically inspired by \cite{lawler2019understanding} and technically by \cite{xu2016logic},
		Wei \cite{yu2024understanding} models understanding-why via higher-order explanation,
		schematically $\exists t_1\exists t_2\,\K_i\!\bigl(t_2:(t_1:\phi)\bigr)$, where
		$t_1:\phi$ means that $t_1$ is an explanation for $\phi$, and $t_2:(t_1:\phi)$ means that
		$t_2$ is a \textit{higher-order} explanation of ``$t_1$ explains $\phi$''.
		This logic thereby embodies the philosophical idea that understanding why requires at least two explanations at different levels, beyond
		what standard knowing-why requires.

	The present paper builds directly on this line. The predecessor captures the key insight that understanding requires higher-order explanatory support, but its formal architecture remains too coarse for a full hierarchy of explanatory depth
	and lacks an explicit comparative connective with a unified meta-theory. We therefore replace the packed operator with a level-indexed family, add a
	comparative connective, and use graded explanations in the semantics. Intuitively, higher grades mark explanations as eligible to support higher-level, and hence more demanding, understanding claims.

This work is best read as a synthesis of epistemic, comparative, and
justification-style ideas.
At its base, the framework remains an epistemic logic in the strict technical sense:
its semantics is based on multi-agent epistemic models, and modal interaction principles are developed inside that setting.
In this respect, the project continues the ``beyond knowing that'' line in
contemporary epistemic logic, but shifts the focus to understanding-why and comparative understanding.

The language is also genuinely comparative, because it contains an explicit
connective $(i\succ j)\phi$ stating that agent $i$ understands why $\phi$ better than $j$.
Comparative modal work such as \cite{ditmarsch2009knowingmore} already includes a comparative operator $i\succeq j$ to express that, locally at a given world, all formulas known by agent $i$ are also known by agent $j$.
In that framework, the global axiom scheme $\K_i\psi\to\K_j\psi$ is replaced by a local inference-rule treatment.
Our approach follows this local-comparative spirit, but with a different semantic source.
The comparative clause is not a primitive relation-inclusion postulate; it is induced by
explanation-sensitive understanding levels and evaluated relative to the formula at issue, with an epistemic side condition.
This design has substantial technical consequences, developed in detail later.

		The semantic machinery also borrows core ingredients from justification logic:
		explanation terms, algebraic term operations, and admissibility constraints
		\cite{fitting2005logic,artemov2008logic,artemov2019justification}. Earlier logics \cite{xu2016logic,yu2024understanding} omit the sum operator $+$, understandably: a disjunctive or choice term can be too indeterminate to serve as a determinate ``reason-why'' witness in ordinary knowledge-why ascriptions. 	\footnote{For example, in a simple finite case, the agent may consider several possible
			causes of a forest fire: lightning, arson, an unattended campfire, or a
			power-line fault. If each alternative has its own explanation term, repeated use
			of $+$ can combine these terms into one disjunctive witness. This may provide a uniform witness across these alternatives, but the combined
			term no longer specifies which cause explains the fire at the actual world. See Xu et al.\
			\cite{xu2016logic}.} The present framework nevertheless includes the full term algebra
			$(\cdot,+,!,c)$. For present purposes, $+$ has a natural explanation-theoretic reading: $t+s$ is available as an explanation of $\phi$ whenever either $t$ or $s$ is available. The compression perspective mentioned above explains why this operation should be treated with care. It does not require us to exclude $+$; rather, it shows that a sum term preserves only the fact that some explanation is available, while losing the information about which explanation supplies the support. For this reason, $+$ is included, but it is strictly grade-downgrading. Thus the system is justification-style, but not a standard justification logic: the object language has no formulas of the form $t:\phi$. Explanation terms enter only semantically, as graded witnesses governed by the agent-indexed
			functions $\Ex_i$ and the epistemic accessibility relations used for knowledge and comparison.

		\subsection{Framework and Contributions}

		We propose a comparative epistemic logic of understanding with two operator families:
	$\UO_i^\tau\phi$ for level-indexed understanding-why and $(i\succ j)\phi$ for comparative understanding with respect to a formula. 
			At the conceptual level, we take finite indices $\tau$ to range over $\mathbb N^{+}=\{1,2,3,\ldots\}$. This already lets the language capture the spectrum emphasized in
		\cite{Khalifa2017-KHAUEA}: minimal understanding $<$ everyday understanding
		$<$ typical scientist's understanding $<$ ideal understanding. Schematically, these may be represented by
		$\UO_i^1\phi,\UO_i^2\phi,\UO_i^m\phi,\UO_i^\omega\phi$.
		Level $1$ is intended to model a minimal explanatory foothold
		(a necessary but not yet typical notion of understanding),
		while level $2$ is intended to capture ordinary or everyday explanatory understanding
		in a higher-order sense \cite{lawler2019understanding,yu2024understanding}.
		Some finite level $m>2$ can then be used to represent typical scientific understanding.
		Higher levels correspond to explanations that are more systematically organized
		and more deeply integrated into scientific understanding.
		This level-sensitive hierarchy fits Khalifa's spectrum picture, in which ideal understanding
		is a limit notion \cite{Khalifa2017-KHAUEA}. 	Formally, we capture that limit by requiring every finite level:
		$\bigwedge_{n\in \mathbb N^{+}}\UO_i^n\phi$.
		
The numerical levels are a formal idealization. Finite indices on the understanding operators represent an ordered family of explanatory standards
fixed in a model; a larger index marks a more demanding standard for the same formula. Depending on the application, such standards may concern specificity, systematic integration, or counterfactual robustness. The logic does not decide which substantive features make one standard stronger than another in every
application; it only provides a framework for representing such standards once fixed. Meeting a stronger standard entails meeting weaker standards for the same formula, not possessing every particular explanation that might satisfy a weaker
standard. 

	Our main technical contributions are fourfold. First, we give a graded semantics for level-indexed and comparative understanding, with comparisons always relative to the formula at issue. Second, for each fixed finite understanding-level domain, we introduce a finitary calculus and establish soundness, strong completeness, and decidability. Third, for the full language, we move to an infinitary extension of the finitary calculus, reflecting the ideal-understanding level, and prove soundness and strong completeness. Fourth, we isolate the boundary between the bounded and full languages: in each fixed finite-level fragment the comparative connective is eliminable, while in the full language it is not; moreover, the full language is non-compact, so no sound finitary proof system can be strongly complete for it.

		
		The rest  follows this architecture.
		Section~\ref{setting} introduces syntax and semantics.
		Section~\ref{III} presents the calculi and soundness.
		Section~\ref{sec:completeness} proves bounded and full completeness and gives a
		bounded decidability result.
		Section~\ref{sec:conclusion} concludes.
	
	\section{Syntax and Semantics}\label{setting}
		
			\begin{definition}
				Given nonempty countable sets $P$ of proposition letters and $I$ of agents, and a
				nonempty understanding-level domain $L\subseteq \mathbb N^{+}\cup\{\omega\}$, where
				$\mathbb N^{+}:=\{1,2,3,\dots\}$, the comparative epistemic language of understanding
				$\ELBU(L)$ is defined by
				(where $p\in P$, $i,j\in I$, $\tau\in L$):
			\[
			\phi::=~p \mid \neg\phi\mid (\phi\wedge\phi)\mid \K_i\phi
			\mid \UO_i^\tau\phi \mid (i\succ j)\phi.
			\]
		We use only these two instances in what follows:
			\[
			\ELBU:=\ELBU(\mathbb N^{+}\cup\{\omega\}),\qquad
			\ELBU^\kappa:=\ELBU(\{1,\dots,\kappa\})\ \ (\kappa\geqslant2).
			\]
		\end{definition}
		
		Here $\tau$ is an \emph{understanding-level index}. As explained above, finite
		indices on $\UO_i^\tau$ are intended to distinguish explanatory standards of
		different strength. Thus $\UO_i^n\phi$ says that agent $i$ meets the $n$th such standard for understanding why $\phi$. Note that we set $\Kyy_i\varphi $ in \cite{xu2016logic,yu2024understanding} as $ \UO_i^1\varphi$ now, which can be read as minimal understanding.\footnote{It may be tempting to set $\K_i\varphi :=\UO_i^0\varphi $, i.e., knowledge is just level-0 understanding. However, our philosophical point is that understanding is more than knowledge, which does not require that $\K_i\varphi$ itself is already a very minimal form of understanding.}  The formula $\UO_i^\omega\phi$ expresses ideal understanding and is interpreted as ``for all finite levels $n$, $\UO_i^n\phi$''. By abuse of notation, we write \(\UO\) for the family of modalities \(\UO_i^\tau\). The comparative understanding formula $(i\succ j)\phi$ indicates that agent $i$ understands why $\phi$ better than agent $j$ does.
		
		
		Informally, the bounded language
		$\ELBU^\kappa$ keeps comparatives but restricts all understanding-level indices
		to $\{1,\dots,\kappa\}$, so ideal and unbounded distinctions are not expressible. 
		Technically, $\ELBU^\kappa$ is the base language for both the finitary calculus
		and the bounded decidability analysis.

		
		
		
		
		We accept the view in \cite{xu2016logic}  that although something is a tautology, one may still lack minimal understanding (knowledge why) of that tautology.
		A special set of ``self-evident'' tautologies $\Lambda$ is introduced,  which the agent is assumed to minimally understand. For example, we can let all the instances  of $\phi\wedge\psi\rightarrow\phi$ and $\phi\wedge\psi\rightarrow\psi$ be $\Lambda$. Such simple choices will behave well in the bounded decidability
		argument below. At present, we do not suppose any necessitation rule for $\UO$ in general.

		\begin{definition}\label{cu}
			Fix a level domain $L$ and the associated language $\ELBU(L)$.
			A graded explanatory epistemic $\ELBU(L)$-model $\M$ is a tuple
			$(W,\{R_i\mid i\in I\},V,E,\mathrm{gr},\{\Ex_i\mid i\in I\})$ where
			$(W,\{R_i\mid i\in I\},V)$ is a standard multi-agent epistemic model, i.e. for each $i\in I$, $R_i$ is an equivalence relation on $W$, and:
			\begin{itemize}
				\item $E$ is a nonempty set of explanations, closed under $\cdot$, $+$, and $!$, and containing the constant $c$.			

				\item $\gr:E\to \mathbb N$ is a grade map. For each $n\in\mathbb N$,
				let $E_n:=\{t\in E\mid \mathrm{gr}(t)\geqslant n\}$.
				The grade map satisfies:
				\begin{itemize}
					\item $\gr(t\cdot s)\geqslant \min\{\gr(t),\gr(s)\}$;
					\item 
					if $\min\{\gr(t),\gr(s)\}=0$ then $\gr(t+s)=0$;
					
					if $\min\{\gr(t),\gr(s)\}\geqslant 1$ then $\gr(t+s)<\min\{\gr(t),\gr(s)\}$.
					\item $\gr(!t)=0$ if $\gr(t)=0$, and $\gr(!t)=2$ if $\gr(t)\geqslant 1$;
					\item $\gr(c)=1$.
				\end{itemize}
					
					


				\item $\{\Ex_i\mid i\in I\}$ is a family of admissible explanation functions, each
				$\Ex_i:E\times \ELBU(L)\to 2^W$ satisfies:
				\begin{description}
					\item[\textit{Explanation Application}] $\Ex_i(t,\varphi\to\psi)\cap \Ex_i(s,\varphi)\subseteq \Ex_i(t\cdot s,\psi)$.
					\item[\textit{Explanation Sum}] $\Ex_i(t,\varphi)\cup \Ex_i(s,\varphi)\subseteq \Ex_i(t+s,\varphi)$.
					\item[\textit{Constant Specification}] If $\phi\in\Lambda$, then $\Ex_i(c,\phi)=W$.
					\item[\textit{Epistemic Introspection}] For all $t\in E$, $\Ex_i(t,\K_i\varphi)\subseteq \Ex_i(!t,\K_i\varphi)$.
				\end{description}

			\end{itemize}	
		\end{definition}
		
	The term operations $\cdot$, $+$, $!$, and the constant $c$ are standard in justification logic. Application $\cdot$ combines explanations, $+$ has the	usual disjunctive or choice reading, $!$ represents positive introspection, and
	the constant $c$ serves as a self-evident explanation for formulas in	$\Lambda$.
	Unlike the models in \cite{xu2016logic,yu2024understanding}, every explanation term here also carries a grade.

There is another generalization. In \cite{xu2016logic,yu2024understanding}, a single admissible explanation function, not indexed by agents, records for each term $t$ and formula $\phi$ the worlds at which $t$ explains $\phi$. Here we replace that single function with an agent-indexed family $\{\Ex_i\}_{i\in I}$. Thus $w\in\Ex_i(t,\phi)$ means that, at $w$, $t$ is
available as an explanation of $\phi$ for agent $i$. 
 In this respect, the present model is more general: explanatory support can vary with the subject whose understanding is at issue.

	The grade constraint on $+$ works together with the Sum condition on $\Ex_i$. The Sum condition preserves availability: if either $t$ or $s$ explains $\phi$ for agent $i$ at a world, then $t+s$ does too. But the sum term $t+s$ is less specific than either input, since it does not by itself specify whether $t$ or $s$ is the available explanation at that world. For this reason, $+$ is strictly
	grade-lowering whenever the input minimum is positive. Grade-$0$ terms are allowed as closure-generated explanation terms. They belong to $E_0$, but not to any positive-grade class $E_n=\{t\in E\mid \gr(t)\geqslant n\}$ with $n\geqslant1$. As the truth conditions below make explicit, grade-$0$ terms cannot support
	positive-level understanding claims.
	
	The first three conditions on $\Ex_i$ are standard. The fourth condition, $\Ex_i(t,\K_i\varphi)\subseteq \Ex_i(!t,\K_i\varphi)$,
	requires a separate reading. Its source is the positive-introspection principle from
	justification logic. In standard justification logic,
	$t:\varphi\to !t:(t:\varphi)$ says that if $t$ is a justification for
	$\varphi$, then $!t$ is a justification for the claim that $t$ justifies
	$\varphi$ \cite{fitting2004logic}. Intuitively, $!t$ records a reflective
	confirmation of the justificatory status of $t$.
	
	The introspection condition uses this idea only for knowledge claims about the same agent. Suppose $p_{\mathrm{orb}}$ says that the Earth orbits the Sun. A request to explain why agent $i$ knows $p_{\mathrm{orb}}$ is normally a request for $i$'s reasons or evidence for believing $p_{\mathrm{orb}}$, not a	request to explain why the Earth orbits the Sun. This is in line with the view
	that requests to justify knowledge claims commonly ask for the agent's reasons or evidence \cite{mckinnon2012you}. In such cases, these reasons can play a justification-like explanatory role: an explanation of $\K_i\varphi$	gives the agent's reasons or evidence for believing $\varphi$.

	The condition above says that whenever, at a world $w$, $t$ is available to
	agent $i$ as an explanation of $\K_i\varphi$, the reflected term $!t$ is also
	available to $i$ at $w$ as an explanation of the same knowledge claim. The
	point is not that $!t$ adds another explanation of $\varphi$ itself. Rather,
	$!t$ makes explicit that $i$ can cite $t$ as the support for knowing
	$\varphi$.
	
	This should be read together with the grading rule for $!$: if $\gr(t)=0$ then
	$\gr(!t)=0$, while if $\gr(t)\geqslant1$ then $\gr(!t)=2$. Level $1$ is minimal
 understanding, corresponding to knowing why; applied to $\K_i\phi$, it means that agent $i$ has an explanation of why she knows $\phi$. The term $!t$ represents reflecting on that reason as her reason, so for own-knowledge claims level $2$ marks this reflective grasp. Thus the semantic condition validates only the restricted introspection principle $\UO_i^1\K_i\phi\to\UO_i^2\K_i\phi$: it is not a general mechanism for producing ever higher levels, does not validate $\UO_i^n\K_i\phi\to\UO_i^{n+1}\K_i\phi$, and does not apply to $\UO_i^1\K_j\phi\to\UO_i^2\K_j\phi$ for $j\neq i$ or to non-epistemic formulas.

		\begin{definition}[Truth conditions]
			Fix a level domain $L$, a $\ELBU(L)$-model $\M$, and let
			$L_{\mathrm{fin}}:=L\cap\mathbb N^{+}$.
			The satisfaction relation for $\ELBU(L)$ is defined inductively as follows
			(for $p\in P$, $i,j\in I$, and $n\in L_{\mathrm{fin}}$):
			\begin{center}
				\begin{tabular}{|lrl|}	\hline
					$\M,w\vDash p$ &$\Leftrightarrow$ &$w\in V(p)$\\
					$\M,w\vDash \neg \phi$ &$\Leftrightarrow$ & $\M,w\not\vDash \phi$ \\
					$\M,w\vDash \phi\wedge \psi$ &$\Leftrightarrow$ & $\M,w\vDash \phi$ and $\M,w\vDash \psi$\\
					$\M,w\vDash \K_i \phi$ &$\Leftrightarrow$ & $\M,v\vDash \phi$ for all $v$ such that $wR_iv$ \\
					$\M,w\vDash\UO_i^n \phi$ &$\Leftrightarrow$ &
					(1) there exists $t\in E_n$ such that \\
					& & \quad for all $v\in W$ with $wR_iv$, $v\in \Ex_i(t,\phi)$,\\
					& & (2) for all $v\in W$ with $wR_iv$, $\M,v\vDash \phi$\\
					$\M,w\vDash\UO_i^\omega \phi$ &$\Leftrightarrow$ &
					(if $\omega\in L$) for all $n\in L_{\mathrm{fin}}$, $\M,w\vDash\UO_i^n \phi$\\
					$\M,w\vDash(i\succ j)\phi$ &$\Leftrightarrow$ &
					(1) $\deg_i(w,\phi)>\deg_j(w,\phi)$, and\\
					& & (2) $\M,w\vDash \K_j\phi$\\
					\hline
				\end{tabular}
			\end{center}
			where the \emph{understanding degree function} is $\deg_i(w,\varphi):=\sup\bigl(\{n\in L_{\mathrm{fin}}\mid \M,w\vDash \UO_i^n\varphi\}\cup\{0\}\bigr)$.
		\end{definition}

Thus $\deg_i(w,\varphi)$ is not a primitive measure, but the supremum of the finite levels at which $i$ understands why $\varphi$ at $w$. The existential
quantifier in $\UO_i^n\phi$ is a witness condition: a level-$n$ claim requires one and the same explanation term of grade at least $n$ to be available at every $i$-accessible world.


If $\omega\notin L$, as in the bounded languages $\ELBU^\kappa$, then $\UO_i^\omega\phi$ is not a well-formed formula, so the $\UO^\omega$ clause is absent. In the full language $\ELBU$, where
$L=\mathbb N^{+}\cup\{\omega\}$, $\UO_i^\omega\phi$ expresses ideal understanding as the limit case requiring $\UO_i^n\phi$ for every $n\in\mathbb N^{+}$. Equivalently, $\M,w\vDash \UO_i^\omega\varphi$ iff $\deg_i(w,\varphi)=\omega$.

Although $(i\succ j)\varphi$ is primitive in the syntax, its semantic value is
fixed by two independently defined notions: the induced degrees $\deg_i,\deg_j$
and the epistemic condition $\K_j\varphi$. The comparison is formula-relative:
it compares agents only with respect to the same $\varphi$, not globally. The
truth condition states $\K_j\varphi$ explicitly because
$\deg_i(w,\varphi)>\deg_j(w,\varphi)\geqslant0$ already implies
$\M,w\vDash\UO_i^1\varphi$, hence $\M,w\vDash\K_i\varphi$.

$(i\succ j)\varphi$ covers both cases where both agents understand $\varphi$ but at different levels, and cases where $i$ has minimal understanding while $j$ merely knows $\varphi$. The latter remains a comparison within a shared epistemic issue: having an explanation already supports the ordinary judgment ``I understand it better''.

Accordingly, for finite $m$ with $m,m+1\in L_{\mathrm{fin}}$, the conjunction
$\UO_i^m\varphi\wedge\neg\UO_i^{m+1}\varphi$ isolates the exact finite
understanding degree $m$ of agent $i$ with respect to $\varphi$.

Returning to the opening example, let $p_{\mathrm{orb}}\in P$ say that the Earth
orbits the Sun, and let $i_C,i_S,i_R\in I$ denote the child, the student, and the
researcher. In one natural formalization, a world $w$ may satisfy
$1\leqslant\deg_{i_C}(w,p_{\mathrm{orb}})<\deg_{i_S}(w,p_{\mathrm{orb}})<\deg_{i_R}(w,p_{\mathrm{orb}})$.
Then $\M,w\vDash(i_S\succ i_C)p_{\mathrm{orb}}$ and
$\M,w\vDash(i_R\succ i_S)p_{\mathrm{orb}}$. If a bystander $i_B$ merely knows
that $p_{\mathrm{orb}}$ while $\deg_{i_B}(w,p_{\mathrm{orb}})=0$, then
$\M,w\vDash(i_C\succ i_B)p_{\mathrm{orb}}$ holds.

		\begin{remark}
			$(i\succ j)\varphi$ becomes eliminable in the comparative-free fragment of $\ELBU^\kappa$.
			That is, for every model $\la\M,w\ra$ over level domain $L=\{1,\dots,\kappa\}$,
			\[
			\M,w\vDash (i\succ j)\varphi
			\Longleftrightarrow
			\M,w\vDash  \K_j\varphi\wedge
			\bigvee_{m=1}^{\kappa}\bigl(\UO_i^{m}\varphi \wedge \neg \UO_j^{m}\varphi\bigr).
			\]
		\end{remark}

		\begin{proposition}\label{prop:succ-not-def}
			In $\ELBU$, the $(i\succ j)\varphi$ is in general not eliminable by any comparative-free formula.
		\end{proposition}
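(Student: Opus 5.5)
The idea is to find two pointed models that agree on every comparative-free formula of $\ELBU$ but disagree on some $(i\succ j)p$. Since a comparative-free formula $\chi$ contains only finitely many level indices, and in particular only finitely many finite ones, it cannot tell apart degrees above its largest finite index. The comparative, by contrast, compares $\deg_i$ and $\deg_j$ exactly. We therefore aim for $\langle\M_1,w\rangle$ and $\langle\M_2,w\rangle$ with $\M_1,w\vDash(i\succ j)p$ and $\M_2,w\nvDash(i\succ j)p$, while every comparative-free formula has the same truth value in both. If $(i\succ j)p$ were equivalent to some comparative-free $\chi$, this would be a contradiction.

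Fix $N\in\mathbb N^+$ and build a family of one-world models $\M^{a,b}$ with $a,b\geqslant 1$. Take $W=\{w\}$, $R_i=R_j=\{(w,w)\}$, and $V(p)=\{w\}$. For the term algebra $E$, take all terms freely generated from $c$ and two atoms $t_i,t_j$. Set $\gr(t_i)=a$, $\gr(t_j)=b$, $\gr(c)=1$, $\gr(!t)=2$ or $0$ as required, $\gr(t\cdot s)=\min\{\gr(t),\gr(s)\}$, and $\gr(t+s)=0$. These clauses satisfy the grade constraints.

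For the explanation functions, let $\Ex_i(t,\phi)$ be the least family closed under the four conditions, generated by $w\in\Ex_i(t_i,p)$ together with the Constant Specification for $\Lambda$. Define $\Ex_j$ likewise from $t_j$. The intended verification is:
\begin{itemize}
\item The only positive-grade terms explaining $p$ for $i$ are $t_i$ and applications built from it with grade at most $a$. Hence $\deg_i(w,p)=a$ and, symmetrically, $\deg_j(w,p)=b$.
\item Since $\K_jp$ holds, $\M^{a,b},w\vDash(i\succ j)p$ if and only if $a>b$.
\end{itemize}

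The key claim is a lemma. If $\chi$ is comparative-free and every finite index occurring in $\chi$ (including the $\omega$ clauses, which unfold to all finite $n$) is below $N$, and if $a,b,a',b'\geqslant N$, then $\M^{a,b},w\vDash\chi$ iff $\M^{a',b'},w\vDash\chi$. The obstacle is that $\UO^\omega$ quantifies over all $n$, so any formula containing $\UO^\omega_i\psi$ looks at every finite level. We need $\UO^\omega$ to be false uniformly in all these models, and the finiteness of $a$ and $b$ provides this.

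The real work is proving, by induction on $\chi$, that truth of $\UO_i^n\psi$ and $\UO_i^\omega\psi$ does not depend on $a,b$ once $a,b\geqslant N>n$. For this we need a uniform description of $\Ex_i(\cdot,\psi)$ for arbitrary $\psi$, not just $p$. The generated closure only produces explanations via application from $\Lambda$ and $p$, via sums, and via $!$ on $\K_i$-formulas. So for each $\psi$ the maximal grade of a witness should be one of $0$, $1$, $2$, $a$, or $\min(a,\cdot)$, and comparisons of this value with $n<N$ do not depend on $a\geqslant N$. The $\omega$ case is always false, since no grade is infinite except through unbounded families, which we do not have.

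I expect the bookkeeping of this closure characterization to be the main difficulty. If it proves messy, the fallback is to set $\Ex_i(t,\phi)=\{w\}$ for all $\phi$ but only for terms built from $t_i$ and $c$. This makes $\deg_i(w,\psi)=a$ uniformly for every true $\psi$, which trivially makes all finite-index comparisons independent of $a,b\geqslant N$ and keeps $\UO^\omega$ always false. The admissibility conditions remain easy to check for this choice.

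Finally, suppose $(i\succ j)p\leftrightarrow\chi$ is valid. Choose $N$ above every finite index in $\chi$. The models $\M^{N+1,N}$ and $\M^{N,N+1}$ disagree on $(i\succ j)p$ but agree on $\chi$, which is the required contradiction.
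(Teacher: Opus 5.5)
Your main argument is correct and is essentially the paper's own proof: one-world models that differ only in the grades of the seed witnesses $t_i,t_j$, least admissible explanation functions, grades bounded so that every $\UO^\omega$-formula is false, and an induction showing that the two models agree on every comparative-free formula whose finite indices lie below those grades; the one cosmetic difference is that you swap the grades, $(N+1,N)$ versus $(N,N+1)$, whereas the paper keeps $\gr(t_j)=n$ and raises $\gr(t_i)$ from $n$ to $n+1$. The only slip is in your fallback, which is not needed: letting $\Ex_i(t,\phi)=\{w\}$ exactly for terms built from $t_i$ and $c$ violates Explanation Sum (since $w\in\Ex_i(t_i,\phi)$ but $w\notin\Ex_i(t_i+t_j,\phi)$), so that term set would have to be closed under $t+s$ whenever either summand lies in it, which is harmless because such sums have grade $0$.
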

		
			\begin{proof}
				It suffices to show non-eliminability for one instance, say $(i\succ j)p$.
				Suppose $\chi$ is a comparative-free formula. Since $\chi$ contains only finitely many finite understanding-level indices, let $n$ be greater
				than all those indices.
					Take two pointed models $(\M,w)$ and $(\M',w')$, each with just one world,
					with the same valuation, in particular with $p$ true. All accessibility
					relations are reflexive, so $\K_jp$ holds at both points. Specify the explanation functions by putting
				$w\in\Ex_i(t_i,p)$, $w\in\Ex_j(t_j,p)$,
				$w'\in\Ex'_i(t_i,p)$, and $w'\in\Ex'_j(t_j,p)$, and by adding only what is
				required by the admissibility clauses. Write $\gr$ and $\gr'$ for the grade maps of
					$\M$ and $\M'$, respectively, and let the only relevant difference be the grades:
					$\gr(t_i)=n$, $\gr'(t_i)=n+1$, and $\gr(t_j)=\gr'(t_j)=n$. Choose the remaining
					grades so that all generated terms have bounded grade; hence every $\UO^\omega$-formula is false
					at both points. Then
				$\deg_i(w,p)=n$, $\deg_i(w',p)=n+1$, and
					$\deg_j(w,p)=\deg_j(w',p)=n$. The two points satisfy the same comparative-free formulas whose finite indices are
					below $n$: this is proved by induction on formulas, with the $\UO_k^m$ case using
					the same available witnesses, if any, for every $m<n$, and with the
					$\UO^\omega$ case handled by the preceding boundedness observation. Hence
				$\M,w\vDash\chi \iff \M',w'\vDash\chi$.	But $\M,w\not\vDash (i\succ j)p$, and
				$\M',w'\vDash(i\succ j)p$.
			\end{proof}
		
		Below we show that 	the full language $\ELBU$ is non-compact.
		\begin{proposition}\label{prop:noncompact-full}
			Fix distinct agents $i\neq j$ and an atom $p$. Let
			$\Sigma_{\succ}:=\{(i\succ j)p\}\ \cup\ \{\UO_j^n p\mid n\in\mathbb N^{+}\}$.	Then every finite subset of $\Sigma_{\succ}$ is satisfiable over $\ELBU$-models,
			but $\Sigma_{\succ}$ itself is unsatisfiable.
		\end{proposition}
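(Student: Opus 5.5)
The proof has two halves: unsatisfiability of the whole set $\Sigma_{\succ}$, and a model for each finite subset.

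\emph{Unsatisfiability.} Suppose $\M,w\vDash\Sigma_{\succ}$. Then $\M,w\vDash\UO_j^n p$ for every $n\in\mathbb N^{+}$. Since in $\ELBU$ the finite level domain is $L_{\mathrm{fin}}=\mathbb N^{+}$, this gives $\deg_j(w,p)=\sup\mathbb N^{+}=\omega$. On the other hand, $\M,w\vDash(i\succ j)p$ requires $\deg_i(w,p)>\deg_j(w,p)$. Every degree is a supremum of a subset of $\mathbb N^{+}\cup\{0\}$, so its value lies in $\mathbb N\cup\{\omega\}$, and no such value exceeds $\omega$. So $\deg_i(w,p)>\omega$ is impossible, which is a contradiction.

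\emph{Finite satisfiability.} A finite subset $\Sigma_0\subseteq\Sigma_{\succ}$ mentions only finitely many levels, so let $N$ be an upper bound for them. It suffices to satisfy $\{(i\succ j)p\}\cup\{\UO_j^n p\mid n\leqslant N\}$, since every finite subset is contained in such a set. We use a one-world model $W=\{w\}$ in the style of Proposition~\ref{prop:succ-not-def}: all $R_k$ are reflexive and $V(p)=\{w\}$, so $\K_jp$ holds. We then take two basic terms $t_i,t_j$ with $\gr(t_j)=N$ and $\gr(t_i)=N+1$, and set $w\in\Ex_i(t_i,p)$ and $w\in\Ex_j(t_j,p)$. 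From these, $\deg_i(w,p)\geqslant N+1$, and condition (2) of the $\UO$ clause holds because $p$ is true at $w$. What remains is to show $\deg_j(w,p)=N$ exactly, i.e.\ that no term of grade $\geqslant N+1$ explains $p$ for $j$ at $w$.

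\emph{Main obstacle: closing the model under the admissibility conditions without giving $j$ a high-grade explanation of $p$.}

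\begin{enumerate}
\item Take $E$ to be the free term algebra generated by $c,t_i,t_j$ under $\cdot,+,!$.
\item Define $\gr$ recursively so that the grade constraints hold and grades stay bounded: $\gr(t\cdot s)=\min\{\gr(t),\gr(s)\}$; $\gr(t+s)=\max\{0,\min\{\gr(t),\gr(s)\}-1\}$; $!$ as stipulated; $\gr(c)=1$. Boundedness means no term has grade above $N+1$.
\item Define $\Ex_j$ as the least family closed under Application, Sum, Constant Specification and Introspection from the seed $w\in\Ex_j(t_j,p)$. Define $\Ex_i$ the same way from $w\in\Ex_i(t_i,p)$.
\end{enumerate}

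The claim needed is an invariant on the least closure for $j$: every term explaining $p$ for $j$ has grade $\leqslant N$. We prove it by induction on the closure stages, showing more generally that every term in $\Ex_j(\cdot,\cdot)$ is built from $t_j$ and $c$ only.

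\begin{itemize}
\item Terms built from $t_j$ and $c$ have grade $\leqslant N$. Application takes a minimum, $+$ strictly lowers a positive grade, and $!$ yields grade $2$ or $0$, so the bound persists. This needs $N\geqslant2$, and we may choose $N$ that large.
\item Terms for $j$ never contain $t_i$. $t_i$ is not a seed for $j$, and none of the four closure conditions introduces a new basic term.
\end{itemize}

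Hence $\deg_j(w,p)=N$. Since $\deg_i(w,p)\geqslant N+1>N$, the model satisfies $(i\succ j)p$ together with $\UO_j^n p$ for all $n\leqslant N$, which completes finite satisfiability.

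Alternatively, one can avoid the least-closure computation. Let $\Ex_j(t,\phi)=W$ exactly when $t$ contains no occurrence of $t_i$, and $\emptyset$ otherwise. Then one checks the four admissibility conditions directly: Application and Sum preserve the absence of $t_i$ on the relevant side, Constant Specification concerns $c$ only, and $!$ adds no basic term. This gives a simpler verification of the same invariant.
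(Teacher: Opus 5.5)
Your overall strategy is the paper's: unsatisfiability because $\UO_j^n p$ for all $n$ forces $\deg_j(w,p)=\omega$, which no degree can exceed, and finite satisfiability via a one-world reflexive model with $\deg_j(w,p)=N<N+1\leqslant\deg_i(w,p)$. The paper simply asserts that such a one-state model exists. Your attempt to verify it contains a real error.

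The claim that every term in $j$'s least store is built from $t_j$ and $c$ alone is false. The Sum condition is a union, $\Ex_j(t,\varphi)\cup\Ex_j(s,\varphi)\subseteq\Ex_j(t+s,\varphi)$, and it holds for \emph{every} $s\in E$. So from $w\in\Ex_j(t_j,p)$ it forces $w\in\Ex_j(t_j+t_i,p)$. Sum therefore does introduce foreign basic terms, contrary to your assertion that no closure condition does. For the same reason your ``alternative'' is not an admissible explanation function. With $t=t_j$ and $s=t_i$, the left side of Sum is $W$, while $\Ex_j(t_j+t_i,\varphi)=\emptyset$ because $t_j+t_i$ contains $t_i$.

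The conclusion $\deg_j(w,p)=N$ is still true. The repair uses the grade bookkeeping you already wrote, applied to grades rather than to which basic terms occur. First, every term of $E$ has grade $\leqslant N+1$: the generators have grades $1,N,N+1$, application takes minima, $+$ lowers grade, and $!$ yields $0$ or $2$. Hence every sum has grade $\leqslant N$. Now prove by induction on closure stages that every term in $j$'s least store has grade $\leqslant N$:
\begin{itemize}
\item the seeds $t_j$ and $c$ do;
\item application takes a minimum;
\item sums are automatically $\leqslant N$, whatever their summands;
\item $!$-terms have grade $\leqslant 2\leqslant N$.
\end{itemize}

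More simply, you can replace your alternative by a grade criterion. Put $\Ex_j(t,\varphi)=W$ iff $\gr(t)\leqslant N$, and put $\Ex_i(t,\varphi)=W$ for all $t,\varphi$. All four admissibility conditions then hold, precisely because $+$ is strictly grade-lowering and $E$ is bounded by $N+1$. You can read off $\deg_j(w,p)=N$ (witnessed by $t_j$) and $\deg_i(w,p)=N+1$ (witnessed by $t_i$).
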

		\begin{proof}[Proof sketch]
			For any finite $\Delta\subseteq\Sigma_{\succ}$, let $m$ be the largest $n$ such that $\UO_j^n p\in\Delta$. Then a one-state model with $p$ true, reflexive accessibility, and	$\deg_j(w,p)=m< m+1=\deg_i(w,p)$ satisfies $\Delta$ and $(i\succ j)p$.	
			
			If $\M,w\vDash\UO_j^n p$ for all $n\geqslant1$, then
			$\deg_j(w,p)=\sup\bigl(\{n\mid \M,w\vDash\UO_j^n p\}\cup\{0\}\bigr)=\omega$. So $\M,w\vDash(i\succ j)p$ is impossible, since it requires $\deg_i(w,p)>\deg_j(w,p)=\omega$.
			Hence $\Sigma_{\succ}$  is unsatisfiable.
		\end{proof}
	Therefore, 	no finitary proof system over $\ELBU$ can be both sound and strongly complete.
		

		
	As in \cite{xu2016logic,yu2024understanding}, explanation factivity, defined below, is not built into the model definition.
		\begin{definition}
			Fix a level domain $L$ and a $\ELBU(L)$-model $\M$.
			We say that $\M$ has \emph{explanation factivity} if whenever
			$w\in \Ex_i(t,\phi)$ for some $i\in I$ and $t\in E$, then $\M,w\vDash \phi$.
		\end{definition}
		
		Given a $\ELBU(L)$-model
		$\M=(W,\{R_i\mid i\in I\},V,E,\mathrm{gr},\{\Ex_i\mid i\in I\})$, define its factive companion
		$\M^F=(W,\{R_i\mid i\in I\},V,E,\mathrm{gr},\{\Ex_i^F\mid i\in I\})$ by $\Ex_i^{F}(t,\phi)=\Ex_i(t,\phi)\setminus\{w\mid \M,w\not\vDash \phi\}$.
		By direct checking, $\M^F$ is again a $\ELBU(L)$-model.
		The proposition below asserts that $\ELBU(L)$-truth is invariant under this factive transformation.
		The proof is routine and omitted for reasons of space.
			\begin{proposition}
			For any $\ELBU(L)$-formula $\phi$ and any $w\in W$, $\M,w\vDash \phi \Leftrightarrow \M^{F},w\vDash \phi$.
		\end{proposition}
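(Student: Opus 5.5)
The plan is to argue by induction on the structure of $\phi$, proving the biconditional simultaneously for all $w\in W$. The key observation is that $\M$ and $\M^F$ share the frame $(W,\{R_i\},V)$, the explanation set $E$, and the grade map $\gr$. They differ only in the explanation functions, so the only places where they can diverge are the $\UO$ clauses; every other clause will go through by the induction hypothesis.

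The atomic case is immediate from the common valuation. The cases for $\neg$, $\wedge$ and $\K_i$ follow from the induction hypothesis, because these clauses refer only to $V$, the relations $R_i$, and the truth of the subformulas. The clause for $\UO_i^\omega$ reduces to the finite clauses $\UO_i^n\phi$ at the same formula $\phi$, so I would treat the finite $\UO$ case first and derive the $\omega$ case from it. For the comparative $(i\succ j)\phi$, I first note that $\deg_i(w,\phi)$ is defined purely from the truth of the formulas $\UO_i^n\phi$ for $n\in L_{\mathrm{fin}}$. These formulas are not proper subformulas of $(i\succ j)\phi$. They are, however, of the form $\UO_i^n\phi$ with $\phi$ a proper subformula. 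So I would organise the induction by the complexity of the argument $\phi$: establish for every $\phi$ that all $\UO_i^n\phi$ agree, and then conclude that the degrees agree. Agreement of $\K_j\phi$ follows from the induction hypothesis on $\phi$, so the comparative clause agrees as well.

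The substantive case is $\UO_i^n\phi$, and it rests on condition (2) in the truth clause. Suppose $\M,w\vDash\UO_i^n\phi$, witnessed by some $t\in E_n$ with $v\in\Ex_i(t,\phi)$ for all $v$ such that $wR_iv$. By (2), every such $v$ satisfies $\M,v\vDash\phi$, so $v$ is not removed in passing to $\M^F$, and hence $v\in\Ex_i^F(t,\phi)$. By the induction hypothesis, $\M^F,v\vDash\phi$ for these $v$ as well. The same $t$ therefore witnesses $\M^F,w\vDash\UO_i^n\phi$. Conversely, $\Ex_i^F(t,\phi)\subseteq\Ex_i(t,\phi)$, so any witness in $\M^F$ is also a witness in $\M$, and condition (2) transfers by the induction hypothesis.

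I expect the main obstacle to be bookkeeping rather than mathematics. First, the induction must be on the complexity of $\phi$ (or on a rank in which $(i\succ j)\phi$ lies above every $\UO_k^n\phi$), so that the degree functions are covered. Second, I should check that the definition of $\Ex_i^F$ uses truth in $\M$, not in $\M^F$, which is what makes the argument non-circular. I would also briefly confirm the paper's claim that $\M^F$ is admissible: for Application, a world in both $\Ex_i^F(t,\phi\to\psi)$ and $\Ex_i^F(s,\phi)$ satisfies $\phi\to\psi$ and $\phi$, hence satisfies $\psi$. For Constant Specification, the formulas in $\Lambda$ are tautologies, hence true everywhere. Sum and Introspection are immediate because the formula is unchanged.
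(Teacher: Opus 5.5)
Your proof is correct and is the routine induction the paper omits. The key case is $\UO_i^n\phi$, where clause (2) of the truth condition ensures that the witnessing worlds survive the factive pruning; the $\UO_i^\omega$ and $(i\succ j)\phi$ cases then follow from agreement on all $\UO_k^n\phi$, which in turn follows from the hypothesis on $\phi$, and your checks that $\M^F$ is admissible and that $\Ex_i^F$ is defined via truth in $\M$ are also correct.
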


		
		%

		\section{Axiomatization}\label{III}
	We present two related systems over two aligned languages:
		the bounded finitary calculus $\SCUzero$ for $\ELBU^\kappa$, and
		the infinitary system $\SCUi$ for $\ELBU$.
		
		\begin{definition}[Finitary calculus]\label{def:scuzero-system}
			Fix $\kappa\geqslant2$.
			Let $\SCUzero$ be the finitary Hilbert system over $\ELBU^\kappa$ whose axiom schemas are
			(where $i,j,k\in I$, and understanding-level indices range over $L=\{1,\dots,\kappa\}$):
			\begin{alignat*}{2}
				(\TAUT)&\quad \text{Propositional tautologies} &&\\
				(\Kg)&\quad \K_i(\phi\to\psi)\to(\K_i\phi\to\K_i\psi) &&\\
				(\T)&\quad \K_i\phi\to\phi &&\\
				(\4)&\quad \K_i\phi\to\K_i\K_i\phi &&\\
				(\5)&\quad \neg\K_i\phi\to\K_i\neg\K_i\phi &&\\
				(\Kgg)&\quad \UO_i^n(\phi\to\psi)\to(\UO_i^n\phi\to\UO_i^n\psi) && \quad (n\in L)\\
				(\IYB)&\quad \UO_i^n\phi\to\UO_i^m\phi && \quad (n,m\in L,\ n>m)\\
				(\IMP)&\quad \UO_i^n\phi\to\K_i\phi && \quad (n\in L)\\
				(\4^*)&\quad \UO_i^n\phi\to\K_i\UO_i^n\phi && \quad (n\in L)\\
				(\KYU)&\quad \UO_i^1\K_i\phi\to\UO_i^2\K_i\phi&&\\
				(\UYC0)&\quad \UO_i^{1}\phi\wedge\K_j\phi\wedge\neg \UO_j^{1}\phi\rightarrow(i\succ j)\phi &&\\
				(\UYC)&\quad \UO_i^{n}\phi\wedge\UO_j^{m}\phi\wedge\neg \UO_j^{m+1}\phi\rightarrow(i\succ j)\phi
				&& \quad (n,m,m+1\in L,\ n>m)\\
				(\CMPn)&\quad (i\succ j)\phi\wedge\UO_j^n\phi\rightarrow \UO_i^{n+1}\phi
				&& \quad (n,n+1\in L)\\
				(\CMP1)&\quad (i\succ j)\phi\rightarrow \UO_i^1\phi\wedge\K_j\phi &&\\
				(\CMP\kappa)&\quad (i\succ j)\phi\rightarrow \neg \UO_j^\kappa\phi. &&
			\end{alignat*}
			The inference rules are:
			\begin{center}
				\begin{tabular}{llllll}	
					$(\MP)$& Modus Ponens&	$(\Nn)$& $\vdash \phi /\ \vdash \K_i\phi$&$(\NE)$& $\phi\in \Lambda /\ \vdash\UO_i^1\phi$\\
					
				\end{tabular}
			\end{center}
		\end{definition}
			The axiom $(\KYU)$ captures a restricted introspective step: from
			level-$1$ understanding of why agent $i$ knows $\phi$, agent $i$ can move
			to level-$2$ understanding of that same knowledge claim by reflecting on
			her own reason for knowing $\phi$. It is the proof-theoretic counterpart of the semantic
			\textit{epistemic introspection} condition on $\Ex_i$.
			Axiom $(\CMP\kappa)$ records the upper bound of the finite language
			$\ELBU^\kappa$: if $i$ understands $\phi$ better than $j$, then $j$ cannot
			already satisfy the top available level $\kappa$.In the full system below, it is replaced by $(\CMP\omega)$, which plays the same role for ideal understanding.
		
		\begin{definition}[Full calculus]\label{def:scui-system}
			The system $\SCUi$ over $\ELBU$ is the $\omega$-companion of $\SCUzero$:
			it has the same finitary schemas/rules as Definition~\ref{def:scuzero-system}
			with all finite level indices ranging over $\mathbb N^{+}$, except that the finite-language upper-bound axiom $(\CMP\kappa)$ is replaced by
			\[
			(\CMP\omega)\quad (i\succ j)\phi\rightarrow \neg \UO_j^\omega\phi,
			\]
			and it additionally includes the following axiom schema:
			\[
			(\IYB_\omega^n)\quad \UO_i^\omega\phi\rightarrow \UO_i^n\phi\ \ (n\in\mathbb N^{+}),
			\]
			and the following infinitary rule schema:
			\[
			(\omega \mathrm{I})\quad
			\frac{
				\{\chi\to
				\K_{j_1}(\theta_1\to\cdots\K_{j_m}(\theta_m\to\UO_i^n\phi)\cdots)
				\mid n\in\mathbb N^{+}\}
			}{
				\chi\to
				\K_{j_1}(\theta_1\to\cdots\K_{j_m}(\theta_m\to\UO_i^\omega\phi)\cdots)
			}.
			\]
				Here $m\geqslant0$, $j_1,\dots,j_m\in I$, and
				$\theta_1,\dots,\theta_m\in\ELBU$. If $m=0$, the rule is read without
				any outer $\K$-operators, i.e.,
				$\{\chi\to\UO_i^n\phi\mid n\in\mathbb N^{+}\}/(\chi\to\UO_i^\omega\phi)$.
				If $m=1$ and $\theta_1=\top$, the rule includes the instance
				$\{\chi\to\K_j\UO_i^n\phi\mid n\in\mathbb N^{+}\}/
				(\chi\to\K_j\UO_i^\omega\phi)$.
				That is, under the same assumption $\chi$, if $j$ knows that $i$ has
				understanding of $\phi$ at every finite level,
				then $j$ also knows that $i$ has ideal understanding of $\phi$.
				Thus $m>0$ makes the same limit step available
				inside epistemic implication contexts.
		\end{definition}
		

	Since $\omega$-introduction is an infinitary rule with countably many premises,
	we represent derivability of system $\SCUi$ by well-founded proof trees, following the treatment of infinitary formal proofs in
	\cite{karp1964}.

			\begin{definition}[$\SCUi$-derivability]\label{def:scui-proof}
				A \emph{proof tree from $\Sigma$} is a well-founded tree whose nodes are labeled by formulas.
				Each node is either an assumption leaf (label in $\Sigma$), an axiom leaf
				(instance of a $\SCUi$-axiom), or is obtained by one of the following rules:
				\begin{itemize}
					\item $(\MP)$: from children labeled $\psi\to\chi$ and $\psi$, conclude $\chi$;
					\item $(\NE)$: conclude $\UO_i^1\psi$ for $\psi\in\Lambda$;
					\item $(\Nn)$: from one \emph{closed} child labeled $\psi$, conclude $\K_i\psi$;
					\item $(\omega \mathrm{I})$: for $m\in\mathbb N$,  from children
					labeled
					$\chi\to \K_{j_1}(\theta_1\to\cdots\K_{j_m}(\theta_m\to\UO_i^n\psi)\cdots)$,
					one for each $n\in\mathbb N^{+}$, conclude the corresponding formula with
					$\UO_i^\omega\psi$ in place of $\UO_i^n\psi$; 
					\end{itemize}
			where ``closed'' means that the corresponding subtree has no assumption leaves.
			We write $\Sigma\vdash_{\SCUi}\phi$ iff there exists such a proof tree with root label $\phi$.
			A set $\Sigma$ is \emph{$\SCUi$-consistent} iff $\Sigma\nvdash_{\SCUi}\bot$.
		\end{definition}
		
		\begin{lemma}\label{lem:scui-cut}
			If $\Sigma\vdash_{\SCUi}\psi$ and $\Sigma\cup\{\psi\}\vdash_{\SCUi}\chi$, then
			$\Sigma\vdash_{\SCUi}\chi$.
		\end{lemma}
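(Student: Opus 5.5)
We prove the cut lemma by grafting proof trees. Fix a proof tree $T_\psi$ from $\Sigma$ with root label $\psi$, and a proof tree $T$ from $\Sigma\cup\{\psi\}$ with root label $\chi$. Form a new labelled tree $T^\ast$ by replacing every assumption leaf of $T$ labelled $\psi$ with a copy of $T_\psi$. The claim is that $T^\ast$ is a proof tree from $\Sigma$ in the sense of Definition~\ref{def:scui-proof}. It has root label $\chi$, which gives $\Sigma\vdash_{\SCUi}\chi$.

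The verification proceeds in three steps.

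\emph{Well-foundedness.} Any infinite descending path in $T^\ast$ either stays inside the original nodes of $T$, which contradicts the well-foundedness of $T$, or it enters some grafted copy of $T_\psi$ and then stays there forever, which contradicts the well-foundedness of $T_\psi$. Hence $T^\ast$ is well-founded.

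\emph{Leaves.} Every leaf of $T^\ast$ is one of three kinds: an axiom leaf, an assumption leaf of $T$ with label in $\Sigma$, or a leaf of some copy of $T_\psi$. In every case it is an axiom leaf or an assumption leaf with label in $\Sigma$.

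\emph{Rule applications.} Every interior node keeps its label and its children's labels, since each replaced leaf is replaced by a tree whose root is labelled $\psi$. So the local correctness of $(\MP)$, $(\NE)$ and $(\omega\mathrm{I})$ is inherited unchanged. The $(\omega\mathrm{I})$ case merely has countably many children, some of whose subtrees may now contain grafts. This does no harm, because the definition only requires well-foundedness, not finite height.

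The one point needing care, and the main obstacle, is the side condition on $(\Nn)$: its premise subtree must be \emph{closed}. Grafting could in principle introduce assumptions beneath an $(\Nn)$ node. It does not: in $T$, the child subtree of any $(\Nn)$ node is closed, so it contains no assumption leaves at all, in particular none labelled $\psi$. No grafting therefore occurs inside it, and it remains closed in $T^\ast$.

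If one prefers a more formal argument than the picture, the same reasoning can be run as a transfinite induction on the well-founded structure of $T$. The statement to prove is that for every node $x$ of $T$ with label $\theta$ we have $\Sigma\vdash_{\SCUi}\theta$, and that if the subtree at $x$ is closed then $\theta$ has a closed proof. The cases are as follows. An assumption leaf is either in $\Sigma$ or is $\psi$, handled by $T_\psi$. An axiom leaf is immediate. For $(\MP)$ and $(\omega\mathrm{I})$, combine the inductively obtained trees under the same rule; for $(\omega\mathrm{I})$ this uses countable choice of one proof tree per premise $n\in\mathbb N^{+}$. For $(\Nn)$, apply the rule to the original closed subtree, which is unchanged. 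This recursion is exactly the construction of $T^\ast$.
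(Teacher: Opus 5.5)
Your proposal is correct and follows the paper's own proof essentially verbatim. Like the paper, you graft a copy of the proof tree of $\psi$ onto each $\psi$-labelled assumption leaf, observe that closed $(\Nn)$-premise subtrees contain no assumption leaves and so are left untouched, and get well-foundedness because any descending path either stays in $T$ or eventually enters a single grafted copy. One minor remark: the direct grafting construction needs no countable choice, since the one fixed tree $T_\psi$ is copied everywhere; choice only enters your inductive reformulation.
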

			\begin{proof}
				Graft a fresh copy of a proof tree of $\psi$ from $\Sigma$ onto each assumption leaf
				labeled $\psi$ in a proof tree of $\chi$ from $\Sigma\cup\{\psi\}$, leaving all other nodes unchanged and preserving the original rule applications. The result is a proof tree from $\Sigma$. The only point requiring checking is
				the requirement in $(\Nn)$ that its premise child be closed: this is preserved because a closed child subtree contains no assumption leaves, so no grafting takes place inside it. Well-foundedness is also preserved: any descending chain either stays in the original tree or eventually enters a single grafted copy, both of which are well-founded.
			\end{proof}

	\begin{lemma}\label{lem:scui-neg-cons}
		If $\Sigma\nvdash_{\SCUi}\varphi$, then
		$\Sigma\cup\{\neg\varphi\}$ is $\SCUi$-consistent.
	\end{lemma}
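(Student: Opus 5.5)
We argue by contraposition: assuming $\Sigma\cup\{\neg\varphi\}\vdash_{\SCUi}\bot$, we derive $\Sigma\vdash_{\SCUi}\varphi$. The difficulty is that the deduction theorem has not been established for the infinitary system, so we cannot simply move $\neg\varphi$ to the right of an implication. Instead we route everything through Lemma~\ref{lem:scui-cut} together with purely propositional reasoning. The only rule allowed to act on assumption-dependent formulas is $(\MP)$, because $(\Nn)$ requires a closed premise and $(\NE)$ has no premises. The $(\omega\mathrm{I})$ rule needs separate handling: its premises may depend on assumptions.

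The cleanest approach avoids a full deduction theorem. We prove directly that if $\Sigma\cup\{\neg\varphi\}\vdash\bot$ then $\Sigma\vdash\varphi$, via the following claim, established by induction on the well-founded proof tree $T$ of $\psi$ from $\Sigma\cup\{\neg\varphi\}$:
\begin{quote}
for every node with label $\psi$, we have $\Sigma\vdash_{\SCUi}\neg\varphi\to\psi$.
\end{quote}
The cases run as follows.
\begin{itemize}
\item An assumption leaf in $\Sigma$, or an axiom leaf, gives $\Sigma\vdash\psi$. We then use the tautology $\psi\to(\neg\varphi\to\psi)$ and $(\MP)$.
\item The leaf $\neg\varphi$ itself is handled by the tautology $\neg\varphi\to\neg\varphi$.
\item For $(\MP)$, we combine the two induction hypotheses with the $\TAUT$ instance $(\neg\varphi\to(\psi\to\chi))\to((\neg\varphi\to\psi)\to(\neg\varphi\to\chi))$.
\item For $(\Nn)$ and $(\NE)$, the conclusion is derivable outright: the premise subtree of $(\Nn)$ is closed and is therefore a proof from $\emptyset$. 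We then weaken as in the first case.
\end{itemize}
The main obstacle is the $(\omega\mathrm{I})$ case. There the induction hypothesis gives $\Sigma\vdash\neg\varphi\to(\chi\to C_n)$ for every $n$, where $C_n$ abbreviates $\K_{j_1}(\theta_1\to\cdots\K_{j_m}(\theta_m\to\UO_i^n\phi)\cdots)$. Using the tautology that exchanges $\neg\varphi\to(\chi\to C_n)$ for $(\neg\varphi\wedge\chi)\to C_n$, we obtain the premises $\Sigma\vdash(\neg\varphi\wedge\chi)\to C_n$ for all $n$. Because the rule schema $(\omega\mathrm{I})$ allows an arbitrary antecedent $\chi$, it applies with antecedent $\neg\varphi\wedge\chi$. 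This yields $\Sigma\vdash(\neg\varphi\wedge\chi)\to C_\omega$, and un-currying gives $\Sigma\vdash\neg\varphi\to(\chi\to C_\omega)$. This is exactly where the generality of $\chi$ in $(\omega\mathrm{I})$ is essential. The induction over a well-founded tree is legitimate because each node's claim depends only on its children. To assemble the resulting derivations we use countably many subproofs, which is fine since proof trees may be infinitely branching.

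Applying the claim at the root gives $\Sigma\vdash\neg\varphi\to\bot$. The tautology $(\neg\varphi\to\bot)\to\varphi$ and $(\MP)$ then give $\Sigma\vdash\varphi$, contradicting the hypothesis. Hence $\Sigma\cup\{\neg\varphi\}$ is $\SCUi$-consistent. Lemma~\ref{lem:scui-cut} can be used to streamline the steps that combine separately obtained derivations, though direct tree assembly also suffices.
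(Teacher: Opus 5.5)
Your proof is correct and follows essentially the same route as the paper. The paper also proves by induction on proof trees that $\Sigma\cup\{\alpha\}\vdash_{\SCUi}\beta$ implies $\Sigma\vdash_{\SCUi}\alpha\to\beta$, which you do for the instance $\alpha=\neg\varphi$, and it treats the $(\omega\mathrm{I})$ case exactly as you do: it moves the hypothesis into the antecedent $\alpha\wedge\chi$, applies the rule, and un-curries. Your framing that this \emph{avoids} a deduction theorem is slightly misleading, since your claim is exactly that instance of it, and Lemma~\ref{lem:scui-cut} is not actually needed.
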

	\begin{proof}
		Suppose that
		$\Sigma\cup\{\neg\varphi\}\vdash_{\SCUi}\bot$. By induction on proof trees, we first obtain:
		if $\Sigma\cup\{\alpha\}\vdash_{\SCUi}\beta$, then
		$\Sigma\vdash_{\SCUi}\alpha\to\beta$.
	The finitary cases are standard. For
	an $\omega$-introduction instance, write its premises as $\chi\to B_n$
	$(n\in\mathbb N^{+})$ and its conclusion as $\chi\to B_\omega$, where
	$B_\omega$ is obtained from $B_n$ by replacing the displayed
	$\UO_i^n\psi$ with $\UO_i^\omega\psi$. By the induction hypothesis,
		$\Sigma\vdash_{\SCUi}\alpha\to(\chi\to B_n)$ for all $n$. Hence
		$\Sigma\vdash_{\SCUi}(\alpha\wedge\chi)\to B_n$ for all $n$, so
		$(\omega \mathrm{I})$ gives
		$\Sigma\vdash_{\SCUi}(\alpha\wedge\chi)\to B_\omega$, and therefore
		$\Sigma\vdash_{\SCUi}\alpha\to(\chi\to B_\omega)$. Then we get $\Sigma\vdash_{\SCUi}\neg\varphi\to\bot$, therefore $\Sigma\vdash_{\SCUi}\varphi$, contradicting the hypothesis. So $\Sigma\cup\{\neg\varphi\}$ is consistent.
	\end{proof}

		
	The next derived principles are useful later in the completeness proofs. They also show that the formula $(i\succ j)\phi$ behaves like a strict comparison: irreflexivity, transitivity, and asymmetry follow from the axioms connecting comparison with level-indexed understanding. For the full system, we also derive the ideal-level analogues of the epistemic principles for understanding formulas.

		\begin{proposition}\label{prop:derived-order}
		The following principles are provable in the indicated systems:
			\[
			\begin{array}{r@{\quad}l@{\qquad}r@{\quad}l}
			(\5^*) &
			\neg\UO^n_i\phi\rightarrow \K_i\neg\UO^n_i\phi
			&
			(\4^*_\omega) &
			\UO_i^\omega\phi\rightarrow \K_i\UO_i^\omega\phi
			\\[1mm]
			(\5^*_\omega) &
			\neg\UO_i^\omega\phi\rightarrow \K_i\neg\UO_i^\omega\phi
			&
			(\Irref) &
			\neg(i\succ i)\phi
			\\[1mm]
			(\Trans) &
			(i\succ j)\phi\wedge (j\succ k)\phi\rightarrow (i\succ k)\phi
			&
			(\Asym) &
			(i\succ j)\phi\rightarrow\neg (j\succ i)\phi
			\end{array}
			\]
			Here $(\5^*)$, $(\Irref)$, $(\Trans)$, and $(\Asym)$ are provable in both
			$\SCUzero$ and $\SCUi$; for $(\5^*)$, $n\in\{1,\dots,\kappa\}$ in
			$\SCUzero$ and $n\in\mathbb N^{+}$ in $\SCUi$.  $(\4^*_\omega)$ and $(\5^*_\omega)$ are provable in $\SCUi$.
		\end{proposition}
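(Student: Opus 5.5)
The six principles split into two groups. The first group, $(\5^*)$, $(\4^*_\omega)$ and $(\5^*_\omega)$, is purely epistemic and follows from $(\4^*)$, $(\IYB^n_\omega)$ and S5 reasoning. The second group, $(\Irref)$, $(\Trans)$ and $(\Asym)$, is comparative. In the bounded case it is obtained by climbing the finite levels until $(\CMP\kappa)$ is violated. In the full case the same climb is run for every $n$ and closed off by $(\omega\mathrm{I})$ with $m=0$, after which $(\CMP\omega)$ gives the contradiction.

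\emph{Epistemic principles.} For $(\5^*)$, write $U$ for $\UO_i^n\phi$.
\begin{itemize}
\item From $(\4^*)$ we have $U\to\K_iU$. By $(\Nn)$ and $(\Kg)$ we get $\neg\K_i\neg U\to\neg\K_i\neg\K_iU$.
\item The S5 theorem $\neg\K_i\neg\K_i\psi\to\psi$ (from $(\5)$ and $(\T)$) gives $\neg\K_i\neg\K_iU\to U$.
\item Hence $\neg\K_i\neg U\to U$, and contraposition yields $\neg U\to\K_i\neg U$.
\end{itemize}
For $(\4^*_\omega)$, combine $(\IYB^n_\omega)$ with $(\4^*)$ to get $\vdash\UO_i^\omega\phi\to\K_i(\top\to\UO_i^n\phi)$ for every $n\in\mathbb N^+$. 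Then apply $(\omega\mathrm{I})$ with $m=1$, $\theta_1=\top$ and $\chi=\UO_i^\omega\phi$, and simplify $\K_i(\top\to\UO_i^\omega\phi)$ to $\K_i\UO_i^\omega\phi$. Finally, $(\5^*_\omega)$ is the same S5 argument as $(\5^*)$, with $(\4^*_\omega)$ in place of $(\4^*)$.

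\emph{Irreflexivity.} Put $\chi:=(i\succ i)\phi$. By $(\CMP1)$, $\chi\to\UO_i^1\phi$. Repeated use of $(\CMPn)$ with $j=i$ gives $\chi\to\UO_i^{n}\phi$ for every available $n$.
\begin{itemize}
\item In $\SCUzero$ this reaches $\chi\to\UO_i^\kappa\phi$, which contradicts $(\CMP\kappa)$.
\item In $\SCUi$, $(\omega\mathrm{I})$ with $m=0$ gives $\chi\to\UO_i^\omega\phi$, which contradicts $(\CMP\omega)$.
\end{itemize}
So $\vdash\neg\chi$ in both systems.

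\emph{Transitivity.} This is the main obstacle. In $\SCUi$ a case split on the exact degree of $k$ would have infinitely many cases, so I argue by contradiction along the levels. Let $\chi:=(i\succ j)\phi\wedge(j\succ k)\phi\wedge\neg(i\succ k)\phi$. I prove $\chi\to\UO_k^n\phi$ by induction on $n$.
\begin{itemize}
\item \emph{Base case.} From $\chi$ we get $\UO_i^1\phi$ and $\K_k\phi$ by $(\CMP1)$, applied to each conjunct. The contrapositive of $(\UYC0)$ for $i,k$ then gives $\UO_k^1\phi$.
\item \emph{Step.} Assume $\chi\to\UO_k^n\phi$. From $(j\succ k)\phi$, $\UO_k^n\phi$ and $(\CMPn)$ we get $\UO_j^{n+1}\phi$.
\item In $\SCUzero$, if $n+1=\kappa$, this contradicts $(\CMP\kappa)$ applied to $(i\succ j)\phi$. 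So we may assume $n+2$ is available.
\item Then $(\CMPn)$ applied to $(i\succ j)\phi$ gives $\UO_i^{n+2}\phi$, and $(\IYB)$ gives $\UO_i^{n+1}\phi$.
\item Now, if $\neg\UO_k^{n+1}\phi$ held, $(\UYC)$ with levels $n+1>n$ would give $(i\succ k)\phi$, contradicting $\chi$. Hence $\chi\to\UO_k^{n+1}\phi$.
\end{itemize}
To conclude:
\begin{itemize}
\item In $\SCUzero$ the induction ends with $\chi\to\UO_k^\kappa\phi$, which contradicts $(\CMP\kappa)$ for $(j\succ k)\phi$.
\item In $\SCUi$, $(\omega\mathrm{I})$ with $m=0$ gives $\chi\to\UO_k^\omega\phi$, which contradicts $(\CMP\omega)$.
\end{itemize}
Thus $\vdash\neg\chi$, which is $(\Trans)$. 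Some care is needed here: each $\chi\to\UO_k^n\phi$ must be a genuine finite derivation, which it is, since the induction is carried out in the metatheory.

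\emph{Asymmetry.} Finally, $(\Asym)$ is immediate. Instantiating $(\Trans)$ with $k=i$ gives $(i\succ j)\phi\wedge(j\succ i)\phi\to(i\succ i)\phi$, and $(\Irref)$ refutes the consequent.
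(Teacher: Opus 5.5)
Your proposal is correct and matches the paper's proof essentially step for step. This includes the induction on $(i\succ j)\phi\wedge(j\succ k)\phi\wedge\neg(i\succ k)\phi\to\UO_k^n\phi$, closed off by $(\CMP\kappa)$ in $\SCUzero$ and by $(\omega\mathrm{I})$ with $m=0$ plus $(\CMP\omega)$ in the full system. The one difference is cosmetic: in the inductive step of $(\Trans)$ you apply $(\CMPn)$ to $\UO_j^{n+1}\phi$ to reach $\UO_i^{n+2}\phi$ and then descend, which forces the boundary case $n+1=\kappa$ in $\SCUzero$; the paper instead lowers $\UO_j^{n+1}\phi$ to $\UO_j^{n}\phi$ by $(\IYB)$ and applies $(\CMPn)$ at level $n$, so it needs no case split.
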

		\begin{proof}
			The derivation of $(\5^*)$ is routine from $(\T),(\5),(\4^*)$ together with classical reasoning.
			In $\SCUi$, $(\IYB_\omega^n)$ and $(\4^*)$ give
			$\UO_i^\omega\phi\to \K_i\UO_i^n\phi$ for every $n\in\mathbb N^{+}$.
			By modal reasoning, these yield the premises
			$\UO_i^\omega\phi\to \K_i(\top\to\UO_i^n\phi)$ of the $m=1$ instance of
			$(\omega \mathrm{I})$, which gives
			$\UO_i^\omega\phi\to \K_i(\top\to\UO_i^\omega\phi)$, and hence $(\4^*_\omega)$.
			Then $(\5^*_\omega)$ follows by the same argument as $(\5^*)$.
			Given $(\Irref)$ and $(\Trans)$, the derivation of $(\Asym)$ is routine. For $(\Irref)$ in $\SCUzero$:
			$$
			\begin{array}{lll}
				(1) & (i\succ i)\phi \rightarrow \UO_i^1\phi & (\CMP1)\\
				(2_m) & (i\succ i)\phi \wedge \UO_i^m\phi \rightarrow \UO_i^{m+1}\phi
				& (\CMPn)\qquad (1\leqslant m<\kappa)\\
				(3) & (i\succ i)\phi \rightarrow \UO_i^\kappa\phi
				& (1),(2_m),\text{ induction on }m\ (1\leqslant m<\kappa)\\
				(4) & (i\succ i)\phi \rightarrow \neg \UO_i^\kappa\phi & (\CMP\kappa)\\
				(5) & \neg(i\succ i)\phi & (3),(4),\text{ classical reasoning}
			\end{array}
			$$
			
				For $(\Trans)$ in $\SCUzero$, write
				$A:=(i\succ j)\phi\wedge (j\succ k)\phi$ and $B:=(i\succ k)\phi$. Then:
				$$
				\begin{array}{lll}
					(1) & A\rightarrow \neg \UO_k^\kappa\phi
					& (\CMP\kappa),\text{ propositional reasoning}\\
					(2) & A\rightarrow \UO_i^1\phi\wedge \K_k\phi
					& (\CMP1),\text{ propositional reasoning}\\
					(3) & A\wedge \neg \UO_k^1\phi\rightarrow B
					& (2),(\UYC0)\\[1mm]
					
					(4_m) & (j\succ k)\phi\wedge \UO_k^m\phi \rightarrow \UO_j^{m+1}\phi
					& (\CMPn)\qquad (1\leqslant m<\kappa)\\
					(5_m) & \UO_j^{m+1}\phi\rightarrow \UO_j^m\phi
					& (\IYB)\qquad (1\leqslant m<\kappa)\\
					(6_m) & (i\succ j)\phi\wedge \UO_j^m\phi \rightarrow \UO_i^{m+1}\phi
					& (\CMPn)\qquad (1\leqslant m<\kappa)\\
					(7_m) & A\wedge \UO_k^m\phi \rightarrow \UO_i^{m+1}\phi
					& (4_m),(5_m),(6_m)\\
						(8_m) & A\wedge \UO_k^m\phi\wedge \neg \UO_k^{m+1}\phi
						\rightarrow B
						& (7_m),(\UYC)\qquad (1\leqslant m<\kappa)\\
						(9_m) & A\wedge\neg B\wedge \UO_k^m\phi
						\rightarrow \UO_k^{m+1}\phi
						& (8_m),\text{ classical reasoning}\\[1mm]
						
						(10) & A\wedge\neg B\rightarrow \UO_k^\kappa\phi
						& (3),(9_m),\text{ induction on }m\\
					(11) & A\rightarrow B
					& (1),(10),\text{ classical reasoning}
				\end{array}
				$$

		For $(\Irref)$ in $\SCUi$:
		$$
		\begin{array}{lll}
			(1) & (i\succ i)\phi\rightarrow \UO_i^1\phi & (\CMP1)\\
			(2_m) & (i\succ i)\phi \wedge \UO_i^m\phi \rightarrow \UO_i^{m+1}\phi
			& (\CMPn)\qquad (m\in\mathbb N^{+})\\
			(3_n) & (i\succ i)\phi\rightarrow \UO_i^n\phi
			& (1),(2_m),\text{ induction on }n\\
			(4) & (i\succ i)\phi\rightarrow \UO_i^\omega\phi
			& (\omega \mathrm{I})\text{ applied to }(3_n)_{n\in\mathbb N^{+}}\\
			
			(5) & (i\succ i)\phi\rightarrow \neg \UO_i^\omega\phi & (\CMP\omega)\\
			(6) & \neg(i\succ i)\phi & (4),(5),\text{ classical reasoning}
		\end{array}
		$$

				For $(\Trans)$ in $\SCUi$, use the same abbreviations
				$A:=(i\succ j)\phi\wedge (j\succ k)\phi$ and $B:=(i\succ k)\phi$. Then:
				$$
				\begin{array}{lll}
					(1) & A\rightarrow \neg \UO_k^\omega\phi & (\CMP\omega),\text{ propositional reasoning}\\
					(2) & A\rightarrow \UO_i^1\phi\wedge \K_k\phi & (\CMP1),\text{ propositional reasoning}\\
					(3) & A\wedge \neg \UO_k^1\phi\rightarrow B & (2),(\UYC0)\\[1mm]
					
					(4_m) & (j\succ k)\phi\wedge \UO_k^m\phi \rightarrow \UO_j^{m+1}\phi
					& (\CMPn)\qquad (m\in\mathbb N^{+})\\
					(5_m) & \UO_j^{m+1}\phi\rightarrow \UO_j^m\phi
					& (\IYB)\qquad (m\in\mathbb N^{+})\\
					(6_m) & (i\succ j)\phi\wedge \UO_j^m\phi \rightarrow \UO_i^{m+1}\phi
					& (\CMPn)\qquad (m\in\mathbb N^{+})\\
					(7_m) & A\wedge \UO_k^m\phi \rightarrow \UO_i^{m+1}\phi
					& (4_m),(5_m),(6_m)\\
					(8_m) & A\wedge \UO_k^m\phi\wedge \neg \UO_k^{m+1}\phi\rightarrow B
					& (7_m),(\UYC)\qquad (m\in\mathbb N^{+})\\
					(9_m) & A\wedge\neg B\wedge \UO_k^m\phi
					\rightarrow \UO_k^{m+1}\phi
					& (8_m),\text{ classical reasoning}\\[1mm]
					
					(10_n) & A\wedge\neg B\rightarrow \UO_k^n\phi
					& (3),(9_m),\text{ induction on }n\\
					(11) & A\wedge\neg B\rightarrow \UO_k^\omega\phi
					& (\omega \mathrm{I})\text{ applied to }(10_n)_{n\in\mathbb N^{+}}\\
					(12) & A\rightarrow B & (1),(11),\text{ classical reasoning}
				\end{array}
				$$
		\end{proof}

		\begin{theorem}[Soundness for $\SCUzero$]\label{thm:soundness-fin}
			Fix $\kappa\geqslant2$.
			$\SCUzero$ is sound over $\ELBU^\kappa$-models.
		\end{theorem}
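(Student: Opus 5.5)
The proof is the standard one: show that every axiom schema of $\SCUzero$ is valid on all $\ELBU^\kappa$-models, and that $(\MP)$, $(\Nn)$ and $(\NE)$ preserve validity. We then argue by induction on the length of finitary derivations. Throughout we work over the level domain $L=\{1,\dots,\kappa\}$, so $\deg_i(w,\phi)\in\{0,\dots,\kappa\}$ is simply the largest $n\leqslant\kappa$ with $\M,w\vDash\UO_i^n\phi$, or $0$ if there is none.

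\textbf{Epistemic and rule part.} $(\TAUT)$, $(\Kg)$, $(\T)$, $(\4)$, $(\5)$, $(\MP)$ and $(\Nn)$ are sound because each $R_i$ is an equivalence relation. $(\NE)$ is sound by Constant Specification. Since $\gr(c)=1$, we have $c\in E_1$ and $\Ex_i(c,\phi)=W$. Condition (2) of the $\UO^1$ clause requires truth of $\phi$ at accessible worlds, which holds because $\phi\in\Lambda$ is a tautology.

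\textbf{Understanding axioms.}
\begin{itemize}
\item $(\Kgg)$: take witnesses $t,s\in E_n$ for $\phi\to\psi$ and for $\phi$. Explanation Application makes $t\cdot s$ a uniform witness for $\psi$, and $\gr(t\cdot s)\geqslant\min\{\gr(t),\gr(s)\}\geqslant n$. The truth condition follows from $\K_i$-closure.
\item $(\IYB)$: use $E_n\subseteq E_m$ for $n>m$.
\item $(\IMP)$: this is exactly condition (2).
\item $(\4^*)$: if $wR_iv$, then $v$ has the same $R_i$-class as $w$, so the same witness $t$ and the same truth condition work at $v$.
\item $(\KYU)$: let $t\in E_1$ uniformly explain $\K_i\phi$ on the $R_i$-class. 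Epistemic Introspection gives the same for $!t$, and $\gr(!t)=2$ because $\gr(t)\geqslant1$. Since $\kappa\geqslant2$, the index $2$ lies in $L$.
\end{itemize}

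\textbf{Comparative axioms.} These reduce to arithmetic on degrees, using the fact that the sets $\{n\in L\mid \M,w\vDash\UO_i^n\phi\}$ are downward closed. Downward closure holds by the soundness of $(\IYB)$ already shown.
\begin{itemize}
\item $(\UYC0)$: $\deg_i\geqslant1$ and $\deg_j=0$.
\item $(\UYC)$: $\deg_j=m$ exactly, by downward closure and $\neg\UO_j^{m+1}\phi$, while $\deg_i\geqslant n>m$. The $\K_j\phi$ conjunct follows from $(\IMP)$ applied to $\UO_j^m\phi$.
\item $(\CMPn)$: from $\deg_i>\deg_j\geqslant n$ we get $\deg_i\geqslant n+1$. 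Since $n+1\in L$ and degrees are attained maxima in the finite domain, $\UO_i^{n+1}\phi$ holds.
\item $(\CMP1)$: $\deg_i>\deg_j\geqslant0$, so $\deg_i\geqslant1$, and $\K_j\phi$ is a conjunct of the truth condition.
\item $(\CMP\kappa)$: if $\UO_j^\kappa\phi$ held, then $\deg_j=\kappa$, and no $\deg_i$ in $\{0,\dots,\kappa\}$ could exceed it.
\end{itemize}

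\textbf{Main obstacle.} The only delicate step is the degree bookkeeping in $(\CMPn)$ and $(\UYC)$. It relies on the supremum in the definition of $\deg$ being attained, which holds because $L$ is finite, and on downward closure of the satisfied levels. I would isolate this as a preliminary observation: $\deg_i(w,\phi)\geqslant n$ iff $\M,w\vDash\UO_i^n\phi$ for $n\in L$. With that observation in hand, all the comparative cases are one-line checks.
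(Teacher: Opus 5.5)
Your proof is correct and follows the same route as the paper: you check validity axiom by axiom and show the rules preserve it. Your treatment of $(\KYU)$ and $(\CMP\kappa)$ matches the only two cases the paper writes out, and your preliminary observation that $\deg_i(w,\phi)\geqslant n$ iff $\UO_i^n\phi$ holds (finite $L$ plus downward closure from $(\IYB)$) fills in the comparative cases the paper's sketch leaves implicit.
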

		\begin{proof}[Proof sketch]
			Below we record only two representative nontrivial cases.
			
			
			For $(\KYU)$, assume $\M,w\vDash \UO_i^1\K_i\phi$, witnessed by $t\in E_1$.
			Then $\gr(t)\geqslant1$, so $\gr(!t)=2$ and hence $!t\in E_2$.
			For each $v$ with $wR_iv$, we have $v\in\Ex_i(t,\K_i\phi)$ and $\M,v\vDash\K_i\phi$;
			by epistemic introspection of $\Ex_i$, also $v\in\Ex_i(!t,\K_i\phi)$.
			Thus $!t$ witnesses $\M,w\vDash \UO_i^2\K_i\phi$.
			
			For $(\CMP\kappa)$, if $\M,w\vDash(i\succ j)\phi\wedge\UO_j^\kappa\phi$, then
			$\deg_j(w,\phi)=\kappa$ while over $\ELBU^\kappa$ always $\deg_i(w,\phi)\leqslant\kappa$,
			contradicting $\deg_i(w,\phi)>\deg_j(w,\phi)$; hence $\M,w\vDash\neg\UO_j^\kappa\phi$.
		\end{proof}
		
		\begin{theorem}[Soundness for $\SCUi$]\label{thm:soundness-inf}
			$\SCUi$ is sound over $\ELBU$ models.
		\end{theorem}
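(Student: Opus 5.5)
The plan is to prove that every $\SCUi$-derivation from $\Sigma$ preserves truth at every pointed $\ELBU$-model satisfying $\Sigma$. I will use induction on the well-founded proof trees of Definition~\ref{def:scui-proof}. More precisely, I show that for every proof tree, every node label is true at every point at which all assumption leaves of the subtree above it are true. For closed subtrees, this means the label is valid.

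\textbf{Finitary axioms and rules.} These carry over from Theorem~\ref{thm:soundness-fin}, now with finite indices ranging over $\mathbb N^{+}$. I recheck the cases that depend on the level domain.
\begin{itemize}
\item $(\IYB)$ holds because $E_n\subseteq E_m$ for $n>m$.
\item $(\Kgg)$ holds by taking $t\cdot s$ as the witness. Explanation Application places every accessible world in $\Ex_i(t\cdot s,\psi)$, and $\gr(t\cdot s)\geqslant\min\{\gr(t),\gr(s)\}\geqslant n$.
\item $(\IMP)$ is immediate from clause (2) of the truth condition.
\item $(\4^*)$ holds because $R_i$ is an equivalence relation, so the witness $t$ and clause (2) transfer to $R_i$-successors.
\item $(\KYU)$ is as in Theorem~\ref{thm:soundness-fin}.
\item $(\NE)$ uses the constant $c$: $\gr(c)=1$ and $\Ex_i(c,\phi)=W$, and every $\phi\in\Lambda$ is a tautology, hence true everywhere.
\end{itemize}

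\textbf{Comparison axioms.} These are read off the degree function.
\begin{itemize}
\item For $(\UYC0)$, $\deg_i\geqslant1>0=\deg_j$, and $\K_j\phi$ is given.
\item For $(\UYC)$, downward closure from $(\IYB)$ gives $\deg_j(w,\phi)=m$ exactly, while $\deg_i(w,\phi)\geqslant n>m$. Since $\deg_j\geqslant1$ already forces $\K_j\phi$, clause (2) holds.
\item For $(\CMPn)$, $\UO^n_j\phi$ gives $\deg_j\geqslant n$. Then $\deg_i>n$, so $\deg_i\geqslant n+1$, and downward closure yields $\UO_i^{n+1}\phi$. If $\deg_i=\omega$, the conclusion holds trivially.
\item $(\CMP1)$ follows from $\deg_i\geqslant1$ together with clause (2).
\item For $(\CMP\omega)$, $\UO_j^\omega\phi$ means $\deg_j=\omega$, and no degree exceeds $\omega$.
\item $(\IYB^n_\omega)$ is immediate from the $\omega$-clause.
\end{itemize}

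\textbf{The rule $(\omega\mathrm{I})$.} I expect this to be the only real obstacle, because of the nested contexts. Fix a point $(\M,w)$ at which the relevant assumptions hold and $\chi$ is true. I argue by induction on $m$ that the nested formula holds. Define the set $S_0=\{w\}$ and $S_{k}=\{v\mid \exists u\in S_{k-1}\,(uR_{j_k}v \text{ and } \M,v\vDash\theta_k)\}$. By the induction hypothesis, every premise is true at $w$. Unwinding the semantics of $\K$ and $\to$, the premise for $n$ is equivalent to $\UO_i^n\phi$ holding at every $v\in S_m$. This holds for all $n$ simultaneously, so every $v\in S_m$ satisfies $\UO_i^\omega\phi$. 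Folding the semantics back gives the conclusion at $w$.

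For derivations from assumptions, the premises are true at every point satisfying $\Sigma$, because the induction hypothesis applies to each child subtree. The $(\Nn)$ case needs the closedness requirement: the child is valid, so $\K_i\psi$ is valid. Well-foundedness of the proof tree justifies the transfinite induction. Hence $\Sigma\vdash_{\SCUi}\phi$ implies $\Sigma\vDash\phi$.
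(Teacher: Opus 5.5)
Your proposal is correct and takes essentially the same route as the paper. You use induction on the well-founded proof trees, inherit the finitary cases from the bounded soundness proof, and read $(\IYB_\omega^n)$ and $(\CMP\omega)$ off the $\omega$-clause and the degree function. You handle $(\omega\mathrm{I})$ by unwinding the nested $\K_{j_k}(\theta_k\to\cdot)$ context down to the worlds reached along $\theta$-satisfying paths, which is exactly what your sets $S_k$ encode and what the paper does with its path $w_1,\dots,w_m$. Your version is also more explicit than the paper's sketch in stating the local-consequence induction hypothesis and explaining why the closedness condition on $(\Nn)$ is needed.
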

	\begin{proof}[Proof sketch]
			By induction on $\SCUi$ proof trees from Definition~\ref{def:scui-proof}. The shared finitary axioms/rules are sound as in Theorem~\ref{thm:soundness-fin}.
			For $(\IYB_\omega^n)$, assume $\M,w\vDash\UO_i^\omega\phi$.
			By the semantic clause of $\UO^\omega$, this means $\M,w\vDash\UO_i^n\phi$ for every $n\geqslant1$.
			$(\CMP\omega)$ is valid because $(i\succ j)\phi$ requires $\deg_i(w,\phi)>\deg_j(w,\phi)$, impossible if $\UO_j^\omega\phi$ holds (then $\deg_j(w,\phi)=\omega$).
			For $(\omega \mathrm{I})$, assume all its premises are
			true at $w$. If $\M,w\not\vDash\chi$, the conclusion is immediate. Suppose
			$\M,w\vDash\chi$. If $m=0$, the premises give
			$\M,w\vDash\UO_i^n\phi$ for every $n\geqslant1$, hence
			$\M,w\vDash\UO_i^\omega\phi$ by the semantics.
			If $m>0$, take arbitrary worlds $w_1,\dots,w_m$ with
			$wR_{j_1}w_1,\dots,w_{m-1}R_{j_m}w_m$. If there is a first $r$ such that
			$\theta_r$ fails at $w_r$, the corresponding implication is true. If all $\theta_r$ hold, then the
			premises give $\M,w_m\vDash\UO_i^n\phi$ for every $n\geqslant1$, so
			$\M,w_m\vDash\UO_i^\omega\phi$. Thus the conclusion is true at $w$.
		\end{proof}
		
		\section{Completeness and Bounded Decidability}\label{sec:completeness}
		This section proves strong completeness for the bounded finitary and full infinitary
		calculi, and also establishes bounded decidability. We therefore use two
		canonical-model constructions: a bounded one, following the witness-based strategy of
		\cite{xu2016logic,yu2024understanding} but adapted to graded witnesses, agent-indexed
		explanation stores, and comparative formulas, and a full-language construction based on maximal consistent sets closed under the infinitary rule.


	We first construct the bounded canonical model for $\SCUzero$.	Let $\Omega$ be the set of all maximal $\SCUzero$-consistent sets of
		$\ELBU^\kappa$-formulas.
		
		\begin{definition}\label{def:cm-xu}
			
		The canonical model  for $\SCUzero$ is
		$\M^c=(W^c,\{R_i^c\mid i\in I\},V^c,E^c,\gr^c,\{\Ex_i^c\mid i\in I\})$, where:
		
			\begin{itemize}
				\item $E^c$ is generated by the BNF:
				$	t ::= c \mid \varphi^n \mid (t\cdot t)\mid (t+t)\mid !t$,
				where $\varphi\in\ELBU^\kappa$, $n\in\{1,\dots,\kappa\}$.
			
				\item The canonical grade map $\gr^c:E^c\to\mathbb N$ is defined inductively by:
				\[
				\begin{array}{rcl}
					\gr^c(c)&=&1,\\
					\gr^c(\varphi^n)&=&n,\\
					\gr^c(t\cdot s)&=&\min\{\gr^c(t),\gr^c(s)\},\\[1mm]
					\gr^c(!t)&=&
					\begin{cases}
						0, & \text{if }\gr^c(t)=0,\\
						2, & \text{if }\gr^c(t)>0,
					\end{cases}\\[3mm]
					\gr^c(t+s)&=&
					\begin{cases}
						0, & \text{if }\min\{\gr^c(t),\gr^c(s)\}=0,\\
						\min\{\gr^c(t),\gr^c(s)\}-1, & \text{otherwise.}
					\end{cases}
				\end{array}
				\]
				For each $n\in\mathbb N$, let $E_n^c:=\{t\in E^c\mid \gr^c(t)\geqslant n\}$.
				
			\item $W^c$ is the set of all triples $\langle\Gamma,\mathcal F,\vec f\rangle$
			such that $\Gamma\in\Omega$ and:
				\begin{itemize}
					\item $\vec f=\{f_i^n\mid i\in I,\ n\in\{1,\dots,\kappa\}\}$ with
					$f_i^n:\{\varphi\mid \UO_i^n\varphi\in\Gamma\}\to \{t\in E^c\mid \gr^c(t)=n\}$;
					\item $\mathcal F=\{F_i\mid i\in I\}$, where each
					$F_i\subseteq E^c\times \ELBU^\kappa$ is the \emph{agent-$i$ explanation store}
					satisfying the following base and admissibility conditions relative to
					$\Gamma$ and $\vec f$:
					\[
					\begin{array}{@{}lp{0.78\linewidth}@{}}
						\textit{(Base$_i$)} &
						$\{\langle c,\phi\rangle\mid \phi\in\Lambda\}\subseteq F_i$ and
						$\{\langle f_i^n(\varphi),\varphi\rangle\mid \UO_i^n\varphi\in\Gamma\}\subseteq F_i$;\\[1mm]
						\textit{(App$_i$)} &
						if $\langle t,\varphi\to\psi\rangle,\langle s,\varphi\rangle\in F_i$ then $\langle t\cdot s,\psi\rangle\in F_i$;\\[1mm]
						\textit{(Sum$_i$)} &
						if $\langle t,\varphi\rangle\in F_i$ or $\langle s,\varphi\rangle\in F_i$ then $\langle t+s,\varphi\rangle\in F_i$;\\[1mm]
						\textit{(Int$_i$)} &
						if $\langle t,\K_i\varphi\rangle\in F_i$, then $\langle !t,\K_i\varphi\rangle\in F_i$.
					\end{array}
					\]
				\end{itemize}
				
				\item 
				$\langle\Gamma,\mathcal F,\vec f\rangle R_i^c \langle\Delta,\mathcal G,\vec g\rangle
				$ iff $
				\{\varphi\mid \K_i\varphi\in\Gamma\}\subseteq\Delta\ $ and $
				\forall n\in\{1,\dots,\kappa\}\ (f_i^n=g_i^n)$.
				
				\item For each $i\in I$, $\Ex_i^c: E^c\times \ELBU^\kappa\to 2^{W^c}$ is defined by
				$	\Ex_i^c(t,\varphi):=\{\langle\Gamma,\mathcal F,\vec f\rangle\in W^c \mid \langle t,\varphi\rangle\in F_i\}$.

				\item $V^c(p):=\{\langle\Gamma,\mathcal F,\vec f\rangle\in W^c\mid p\in \Gamma\}$.
			\end{itemize}
		\end{definition}
		
		
		In this construction, each canonical world $\langle \Gamma,F,\vec{f}\rangle\in W^c$
		packages exact-grade witnesses for the $\UO$-formulas in $\Gamma$.
		For each $\UO_i^n\varphi\in\Gamma$, the map $f_i^n$ selects a term
		$f_i^n(\varphi)\in E^c$ with $\gr^c(f_i^n(\varphi))=n$, and the pair
		$\langle f_i^n(\varphi),\varphi\rangle$ is placed into the base of $F_i$.
		Each $F_i$ is then closed under \textup{(App$_i$)}, \textup{(Sum$_i$)}, and \textup{(Int$_i$)}.
		The specific clauses in the canonical grading $\gr^c$ are chosen so that the canonical model
		still satisfies the original semantic constraints and, at the same time, supports the later
		technical arguments.

		
		We next define the closure operation used to build explanation stores from
		their base pairs. It adds exactly the pairs forced by the admissibility
		conditions for application, sum, and epistemic introspection.
		\begin{definition}\label{def:iter-closure}
			For $X\subseteq E^c\times\ELBU^\kappa$ and $i\in I$, define the one-step closure operator 	$\mathsf{cl}^{\,i}(X)\subseteq E^c\times\ELBU^\kappa$ by
			$	\mathsf{cl}^{\,i}(X):=X\cup \mathsf{cl}_{\mathrm{app}}(X)\cup \mathsf{cl}_{\mathrm{sum}}(X)\cup \mathsf{cl}^{\,i}_{\mathrm{int}}(X)$,	where
			\begin{align*}
				\mathsf{cl}_{\mathrm{app}}(X)
				&:=\{\langle t\cdot s,\psi\rangle \mid \exists \chi\ (\langle t,\chi\to\psi\rangle\in X\ \wedge\ \langle s,\chi\rangle\in X)\},\\
				\mathsf{cl}_{\mathrm{sum}}(X)
				&:=\{\langle t+s,\varphi\rangle \mid \langle t,\varphi\rangle\in X\ \text{or}\ \langle s,\varphi\rangle\in X\},\\
				\mathsf{cl}^{\,i}_{\mathrm{int}}(X)
				&:=\{\langle !t,\K_i\varphi\rangle \mid \langle t,\K_i\varphi\rangle\in X\}.
			\end{align*}
			
			Given $S\subseteq E^c\times\ELBU^\kappa$, let $S_i^0:=S$ and
			$S_i^{k+1}:=\mathsf{cl}^{\,i}(S_i^k)$ for $k\in\mathbb N$.
			Define $S_i^\infty:=\bigcup_{k\in\mathbb N}S_i^k$.

		\end{definition}
		
		\begin{lemma}\label{lem:canonical-realization}
			For every maximal $\SCUzero$-consistent set $\Gamma\in\Omega$,
			there exist $\mathcal F=\{F_i\mid i\in I\}$ and a witness family $\vec f$
			such that $\langle\Gamma,\mathcal F,\vec f\rangle\in W^c$.
			In particular, $W^c\neq\emptyset$.
		\end{lemma}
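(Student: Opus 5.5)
We construct $\vec f$ and $\mathcal F$ explicitly from $\Gamma$, with no appeal to any property of $\Gamma$ beyond its being a set of formulas. For the witness family we take canonical exact-grade terms: for each $i\in I$, $n\in\{1,\dots,\kappa\}$ and each $\varphi$ with $\UO_i^n\varphi\in\Gamma$, set $f_i^n(\varphi):=\varphi^n$. The term $\varphi^n$ is a generator of $E^c$ by the BNF in Definition~\ref{def:cm-xu}, and $\gr^c(\varphi^n)=n$ by the defining clause of $\gr^c$. Hence each $f_i^n$ is a well-defined map into $\{t\in E^c\mid \gr^c(t)=n\}$, as the definition of $W^c$ requires. (Any other choice of exact-grade terms would also do; this one is simply the most transparent.)

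For the stores we fix $i\in I$ and let $S_i$ be the base set
\[
S_i:=\{\langle c,\phi\rangle\mid \phi\in\Lambda\}\cup\{\langle f_i^n(\varphi),\varphi\rangle\mid \UO_i^n\varphi\in\Gamma,\ 1\leqslant n\leqslant\kappa\}.
\]
We then put $F_i:=(S_i)_i^\infty$, the iterated closure of Definition~\ref{def:iter-closure}. Condition (Base$_i$) holds because $S_i=(S_i)_i^0\subseteq F_i$. Each closure condition is checked by the usual finitary-closure argument. For (App$_i$): if $\langle t,\varphi\to\psi\rangle$ and $\langle s,\varphi\rangle$ lie in $F_i$, they lie in some $(S_i)_i^{k_1}$ and $(S_i)_i^{k_2}$. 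Since the stages form an increasing chain (because $X\subseteq\mathsf{cl}^{\,i}(X)$), both pairs lie in $(S_i)_i^k$ with $k=\max\{k_1,k_2\}$. Then $\langle t\cdot s,\psi\rangle\in\mathsf{cl}_{\mathrm{app}}((S_i)_i^k)\subseteq(S_i)_i^{k+1}\subseteq F_i$. Conditions (Sum$_i$) and (Int$_i$) are single-premise conditions and are handled the same way with one stage. Finally $\langle\Gamma,\{F_i\}_{i\in I},\vec f\rangle\in W^c$. Nonemptiness of $W^c$ then follows once $\Omega\neq\emptyset$, and this holds by Lindenbaum's lemma applied to the consistent set $\emptyset$. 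Consistency of $\emptyset$ is given by Theorem~\ref{thm:soundness-fin} together with the existence of any one-world $\ELBU^\kappa$-model, for instance one with $E$ the free term algebra over $c$, a grade map satisfying the constraints, and all $\Ex_i$ equal to $W$.

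The argument has no genuine obstacle, but two points need care. The first is that no constraint is imposed on $F_i$ beyond base and closure. In particular, nothing forbids $F_i$ from containing pairs whose formulas are outside $\Gamma$, so no consistency reasoning is needed here. That burden is deferred to the truth lemma, where the grades of closure-generated terms matter. The second is checking that $\varphi^n$ really has exact grade $n$ and that $\{t\mid\gr^c(t)=n\}$ is nonempty for each $n\leqslant\kappa$. Both are immediate from the generator clause. Countability of $I$ is not even needed, since we choose terms canonically and never invoke the axiom of choice.
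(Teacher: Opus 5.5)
Your proposal is correct and follows essentially the same route as the paper: the same canonical witnesses $f_i^n(\varphi):=\varphi^n$, the same base sets closed under the iterated operator of Definition~\ref{def:iter-closure}, and Lindenbaum's lemma for nonemptiness. You add two details the paper leaves implicit: an explicit increasing-stage check of (App$_i$), (Sum$_i$), (Int$_i$), and a justification of the consistency the paper simply asserts for $\{\top\}$, obtained from Theorem~\ref{thm:soundness-fin} together with a one-world model.
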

		\begin{proof}
			For each $i\in I$ and $n\in \{1,\dots,\kappa\}$, define
			$f_i^n(\varphi):=\varphi^n
			\text{ for }\varphi\text{ with }\UO_i^n\varphi\in\Gamma$.
			This is well-typed since $\gr^c(\varphi^n)=n$.
			For each fixed $i\in I$, let
			\[
			S_{\Gamma,\vec f,i}
			:=\{\langle c,\phi\rangle\mid \phi\in\Lambda\}\ \cup\
			\{\langle f_i^n(\varphi),\varphi\rangle\mid n\in\{1,\dots,\kappa\},\ \UO_i^n\varphi\in\Gamma\}.
			\]
			
			Using the notation of Definition~\ref{def:iter-closure}, set $S_{\Gamma,\vec f,i}^{0}:=S_{\Gamma,\vec f,i}$,
			$S_{\Gamma,\vec f,i}^{k+1}:=\mathsf{cl}^{\,i}(S_{\Gamma,\vec f,i}^{k})$,
			$F_i:=S_{\Gamma,\vec f,i}^{\infty}$.
			Let $\mathcal F:=\{F_i\mid i\in I\}$. Then each $F_i$ is base-extending and closed under \textup{(App$_i$)}, \textup{(Sum$_i$)}, and \textup{(Int$_i$)} by construction.
			Hence $\langle\Gamma,\mathcal F,\vec f\rangle\in W^c$.
			
			For nonemptiness, take any theorem $\top$ of $\SCUzero$.
			Then $\{\top\}$ is $\SCUzero$-consistent, so by Lindenbaum there exists
			$\Gamma\in\Omega$. Applying the first part to this $\Gamma$ gives
			$\langle\Gamma,\mathcal F,\vec f\rangle\in W^c$. Hence $W^c\neq\emptyset$.
		\end{proof}
		\begin{lemma}\label{lem:Rc-equivalence-xu}
			For each $i\in I$, $R_i^c$ is an equivalence relation on $W^c$.
		\end{lemma}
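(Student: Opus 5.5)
The relation $R_i^c$ is a conjunction of two conditions. The first is the standard $\K_i$-inclusion $\{\varphi\mid\K_i\varphi\in\Gamma\}\subseteq\Delta$. The second is equality of the witness maps $f_i^n=g_i^n$ for all $n\in\{1,\dots,\kappa\}$. Equality of functions is trivially an equivalence relation. Since the intersection of two equivalence relations is an equivalence relation, it suffices to show that the first condition alone is reflexive, symmetric and transitive on $\Omega$ (and hence on $W^c$, since that condition depends only on the first components). This is the usual S5 canonical-model argument, and I will run it using $(\T)$, $(\4)$, $(\5)$ together with $(\Kg)$, $(\Nn)$ and maximal consistency.

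\emph{Reflexivity.} If $\K_i\varphi\in\Gamma$, then $(\T)$ gives $\K_i\varphi\to\varphi\in\Gamma$. By closure of maximal consistent sets under modus ponens, $\varphi\in\Gamma$. Also $f_i^n=f_i^n$ trivially.

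\emph{Transitivity.} Suppose $\Gamma R\Delta$ and $\Delta R\Theta$ in the $\K_i$-sense, and let $\K_i\varphi\in\Gamma$. Then $(\4)$ yields $\K_i\K_i\varphi\in\Gamma$, so $\K_i\varphi\in\Delta$, and hence $\varphi\in\Theta$. The witness equalities compose by transitivity of equality.

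\emph{Symmetry.} This is the only step needing a little care. Assume $\{\varphi\mid\K_i\varphi\in\Gamma\}\subseteq\Delta$ and $\K_i\varphi\in\Delta$; suppose toward a contradiction that $\K_i\varphi\notin\Gamma$. By maximality, $\neg\K_i\varphi\in\Gamma$. Axiom $(\5)$ then gives $\K_i\neg\K_i\varphi\in\Gamma$, so $\neg\K_i\varphi\in\Delta$, contradicting the consistency of $\Delta$. Hence $\K_i\varphi\in\Gamma$, and by reflexivity (axiom $(\T)$) $\varphi\in\Gamma$. Symmetry of the witness equalities is immediate.

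I do not expect a genuine obstacle. The only point to flag is that the explanation stores $\mathcal F$ play no role in $R_i^c$, and the witness condition involves only agent $i$'s maps, so no interaction with the closure construction of Definition~\ref{def:iter-closure} or with the $\UO$-axioms arises. All the work is the standard S5 reasoning, applied componentwise.
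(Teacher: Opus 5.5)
Your proof is correct, and it is the argument the paper intends; the paper omits the proof as routine. The $\K_i$-inclusion component is handled by the standard S5 canonical-model reasoning with $(\T)$, $(\4)$, $(\5)$ and closure of maximal consistent sets under modus ponens, while equality of agent $i$'s witness maps is trivially an equivalence relation, so their intersection is one too (you mention $(\Kg)$ and $(\Nn)$, but they are never actually used).
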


	We omit the proof.	Regarding $\Ex_i^c$ in canonical models, it is obvious to check the following:
		\begin{lemma}\label{lem:exi-adm-c}
			For each $i\in I$, the function $\Ex_i^c$ satisfies all conditions in
			Definition~\ref{cu} for $L=\{1,\dots,\kappa\}$.
		\end{lemma}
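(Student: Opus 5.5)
We must verify that $\Ex_i^c$ satisfies the four admissibility conditions of Definition~\ref{cu}, and also that the underlying structure $(E^c,\gr^c)$ meets the grade constraints, since the lemma is only meaningful once $\M^c$ is a model candidate. Throughout, the approach is pointwise: because $\Ex_i^c(t,\varphi)$ is the set of canonical worlds whose $i$-th store contains $\langle t,\varphi\rangle$, each inclusion between $\Ex_i^c$-sets reduces to a closure property of the store $F_i$ of an arbitrary world $\langle\Gamma,\mathcal F,\vec f\rangle\in W^c$. These closure properties are exactly the conditions imposed in Definition~\ref{def:cm-xu}.

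First, the term-level part. $E^c$ is nonempty, contains $c$, and is closed under $\cdot,+,!$ because it is given by the BNF. For the grade map, I check each clause of Definition~\ref{cu} against the inductive definition of $\gr^c$.
\begin{itemize}
\item $\gr^c(t\cdot s)=\min\{\gr^c(t),\gr^c(s)\}$, which gives the required inequality.
\item If the minimum of $\gr^c(t),\gr^c(s)$ is $0$, then $\gr^c(t+s)=0$. Otherwise $\gr^c(t+s)$ is that minimum minus $1$, which is strictly smaller than the minimum and still lies in $\mathbb N$.
\item The clause for $!$ is copied verbatim from the definition.
\item $\gr^c(c)=1$.
\end{itemize}
The one point needing care is that $\gr^c$ is well defined. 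This holds because terms are uniquely readable from the BNF, so the recursion on term structure is legitimate.

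Second, the four conditions on $\Ex_i^c$. Fix $w=\langle\Gamma,\mathcal F,\vec f\rangle$.
\begin{itemize}
\item \emph{Application.} If $w\in\Ex_i^c(t,\varphi\to\psi)\cap\Ex_i^c(s,\varphi)$, then $\langle t,\varphi\to\psi\rangle$ and $\langle s,\varphi\rangle$ are both in $F_i$. By (App$_i$), $\langle t\cdot s,\psi\rangle\in F_i$, so $w\in\Ex_i^c(t\cdot s,\psi)$.
\item \emph{Sum.} This follows in the same way from (Sum$_i$).
\item \emph{Epistemic Introspection.} This follows from (Int$_i$).
\item \emph{Constant Specification.} (Base$_i$) puts $\langle c,\phi\rangle$ in $F_i$ for every $\phi\in\Lambda$ and every world. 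Hence $\Ex_i^c(c,\phi)=W^c$.
\end{itemize}

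The only potential obstacle is bookkeeping rather than mathematics. Definition~\ref{cu} requires $\Ex_i$ to have domain $E\times\ELBU(L)$ with $L=\{1,\dots,\kappa\}$. The stores, however, are subsets of $E^c\times\ELBU^\kappa$, so I need to confirm that this is the same formula set. Once it is, all the inclusions above are immediate, because membership in $W^c$ was defined precisely by these closure conditions. Whether a particular world's stores actually arise as closures (Lemma~\ref{lem:canonical-realization}) plays no role here.
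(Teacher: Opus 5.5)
Your proposal is correct and matches the paper's approach. The paper only says the lemma is ``obvious to check'', and the intended check is exactly your pointwise reduction: each admissibility inclusion follows from the store conditions (Base$_i$), (App$_i$), (Sum$_i$) and (Int$_i$), which are imposed on every world of $W^c$ by definition. Your extra verification of the grade constraints is also correct, though the paper places it in Proposition~\ref{prop:cm-well-defined} rather than in this lemma.
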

	
		We conclude that the canonical model is well-defined, based on the above lemmas.
		\begin{proposition}\label{prop:cm-well-defined}
			$\M^c$ is a $\ELBU^\kappa$-model (equivalently, a $\ELBU(L)$-model with
			$L=\{1,\dots,\kappa\}$).
		\end{proposition}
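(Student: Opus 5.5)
To show that $\M^c$ is a $\ELBU^\kappa$-model, I would simply run through the clauses of Definition~\ref{cu} and match each one against the earlier lemmas. There are four groups of requirements: the epistemic frame, the term domain, the grade map, and the explanation functions.

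\emph{Epistemic frame.} $W^c$ is nonempty by Lemma~\ref{lem:canonical-realization}. Each $R_i^c$ is an equivalence relation by Lemma~\ref{lem:Rc-equivalence-xu}. $V^c$ is a function from $P$ to $2^{W^c}$ by definition. So $(W^c,\{R_i^c\},V^c)$ is a standard multi-agent epistemic model.

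\emph{Term domain.} $E^c$ is generated by the BNF in Definition~\ref{def:cm-xu}, so it contains $c$ and is closed under $\cdot$, $+$ and $!$.

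\emph{Grade map.} $\gr^c$ is well defined by structural recursion on terms, since each term has a unique formation tree. Its values lie in $\mathbb N$: the only place subtraction occurs is $\min-1$ when the minimum is $\geqslant1$, which stays in $\mathbb N$. I would then check the four constraints:
\begin{itemize}
\item For application, $\gr^c(t\cdot s)=\min\{\gr^c(t),\gr^c(s)\}$, which meets the $\geqslant$ requirement with equality.
\item For sum, if the minimum is $0$ the grade is $0$. Otherwise the grade is $\min\{\gr^c(t),\gr^c(s)\}-1$, which is strictly below the minimum.
\item For $!$, the clause is copied verbatim from the definition.
\item For the constant, $\gr^c(c)=1$.
\end{itemize}

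\emph{Explanation functions.} Each $\Ex_i^c$ has the right type, $E^c\times\ELBU^\kappa\to 2^{W^c}$. The four admissibility conditions hold by Lemma~\ref{lem:exi-adm-c}. For instance, Application holds because, at a world $\langle\Gamma,\mathcal F,\vec f\rangle$ in both sets, the pairs $\langle t,\varphi\to\psi\rangle$ and $\langle s,\varphi\rangle$ lie in $F_i$, and (App$_i$) then puts $\langle t\cdot s,\psi\rangle$ in $F_i$. Sum and Introspection follow in the same way from (Sum$_i$) and (Int$_i$). Constant Specification follows from (Base$_i$), which places $\langle c,\phi\rangle$ in every $F_i$ at every world.

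\emph{Expected obstacle.} The only point needing real care is that membership in $W^c$ must itself be well defined. Each $f_i^n$ has to land in terms of exact grade $n$, and Lemma~\ref{lem:canonical-realization} supplies this through $\gr^c(\varphi^n)=n$. Once that is granted, the verification is pure bookkeeping. Combining the four groups gives that $\M^c$ is a $\ELBU(L)$-model for $L=\{1,\dots,\kappa\}$.
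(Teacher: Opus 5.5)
Your proposal is correct and follows the paper's route. The paper's proof is the same checklist: BNF closure of $E^c$, the grade clauses, Lemma~\ref{lem:canonical-realization} for $W^c\neq\emptyset$, Lemma~\ref{lem:Rc-equivalence-xu} for the equivalence relations, and Lemma~\ref{lem:exi-adm-c} for admissibility; you spell out the grade and admissibility checks in more detail, and your ``expected obstacle'' is really just the nonemptiness of $W^c$, which you already handle with Lemma~\ref{lem:canonical-realization}.
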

		\begin{proof}
			$E^c$ is nonempty and closed under $\cdot,+, !$ by its BNF generation.
			The clauses defining $\gr^c$ satisfy all grade constraints in Definition~\ref{cu}.
			By Lemma~\ref{lem:canonical-realization}, $W^c\neq\emptyset$.
			By Lemma~\ref{lem:Rc-equivalence-xu}, each $R_i^c$ is an equivalence relation on $W^c$.
			By Lemma~\ref{lem:exi-adm-c}, each $\Ex_i^c$ satisfies the required admissibility clauses.
		\end{proof}
		
		We now establish the existence ingredients for the Truth Lemma.
		
		\begin{lemma}[$\K_i$-Existence Lemma]\label{k}
			For any canonical world $w=\langle \Gamma, \mathcal F, \vec f\rangle\in W^c$, if $\neg \K_i\phi \in \Gamma$, then there exists
			a canonical world $w'=\langle\Delta, \mathcal G, \vec g\rangle\in W^c$ such that
			$wR_i^c w'$ and $\neg\phi\in \Delta$.
		\end{lemma}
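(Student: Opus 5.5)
The plan is to follow the standard existence-lemma pattern. The only new feature is that $R_i^c$ also requires the witness families to agree: $f_i^n=g_i^n$ for all $n$. So the successor world must be built so that its $i$-indexed witness maps coincide with those of $w$.

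First, let $\Delta_0:=\{\psi\mid \K_i\psi\in\Gamma\}\cup\{\neg\phi\}$. This set is $\SCUzero$-consistent by the usual argument. Otherwise there are finitely many $\psi_1,\dots,\psi_k$ with $\vdash\psi_1\wedge\cdots\wedge\psi_k\to\phi$. By $(\Nn)$ and $(\Kg)$ we then get $\K_i\psi_1\wedge\cdots\wedge\K_i\psi_k\to\K_i\phi\in\Gamma$, hence $\K_i\phi\in\Gamma$, contradicting $\neg\K_i\phi\in\Gamma$. Lindenbaum's lemma then extends $\Delta_0$ to some $\Delta\in\Omega$. We have $\neg\phi\in\Delta$ and $\{\psi\mid\K_i\psi\in\Gamma\}\subseteq\Delta$.

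Next we choose $\vec g$. For agent $i$ we set $g_i^n:=f_i^n$ for every $n\in\{1,\dots,\kappa\}$. This is well typed only if the domains agree: $\{\varphi\mid\UO_i^n\varphi\in\Gamma\}=\{\varphi\mid\UO_i^n\varphi\in\Delta\}$. I expect this to be the main obstacle, and I would settle it with $(\4^*)$ and the derived $(\5^*)$ of Proposition~\ref{prop:derived-order}.
\begin{itemize}
\item If $\UO_i^n\varphi\in\Gamma$, then $\K_i\UO_i^n\varphi\in\Gamma$ by $(\4^*)$, so $\UO_i^n\varphi\in\Delta$.
\item If $\UO_i^n\varphi\notin\Gamma$, then $\neg\UO_i^n\varphi\in\Gamma$, so $\K_i\neg\UO_i^n\varphi\in\Gamma$ by $(\5^*)$. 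Hence $\neg\UO_i^n\varphi\in\Delta$, and by consistency $\UO_i^n\varphi\notin\Delta$.
\end{itemize}
The codomain condition (exact grade $n$) is inherited from $f_i^n$. For agents $j\neq i$ we are unconstrained, so we take $g_j^n(\varphi):=\varphi^n$ as in Lemma~\ref{lem:canonical-realization}.

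Finally, for each agent $j$ we build the store $G_j$ as in Lemma~\ref{lem:canonical-realization}. We take the base set $\{\langle c,\phi\rangle\mid\phi\in\Lambda\}\cup\{\langle g_j^n(\varphi),\varphi\rangle\mid\UO_j^n\varphi\in\Delta\}$ and let $G_j$ be its closure $(\cdot)^\infty_j$ from Definition~\ref{def:iter-closure}. By construction $G_j$ contains the base and is closed under (App$_j$), (Sum$_j$) and (Int$_j$), so $w':=\langle\Delta,\mathcal G,\vec g\rangle\in W^c$. Both clauses of $R_i^c$ hold: the knowledge inclusion holds by choice of $\Delta$, and $f_i^n=g_i^n$ holds by choice of $\vec g$. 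Together with $\neg\phi\in\Delta$, this completes the argument.

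The definition of $W^c$ places no requirement that $F_i$ and $G_i$ agree. So the witness-agreement step, justified via $(\4^*)$ and $(\5^*)$, is the only point beyond the textbook proof.
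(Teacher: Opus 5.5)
Your proposal is correct and follows the paper's proof essentially step for step. It uses the same consistency argument for $\{\psi\mid\K_i\psi\in\Gamma\}\cup\{\neg\phi\}$ via $(\Nn)$ and $(\Kg)$. It relies on $(\4^*)$ and the derived $(\5^*)$ to show that $\Gamma$ and $\Delta$ contain the same $\UO_i^n$-formulas, so that $g_i^n:=f_i^n$ is well typed. It also uses the same base-and-closure construction of the stores $G_j$, with $g_j^n(\psi):=\psi^n$ for $j\neq i$.
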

		\begin{proof}
			Fix $w=\langle \Gamma, \mathcal F, \vec f\rangle\in W^c$ with $\neg \K_i\phi\in\Gamma$. Let
			$\Delta^- \ :=\ \{\psi\mid \K_i\psi\in \Gamma\}\ \cup\ \{\neg\phi\}$.
			It is routine to show that $\Delta^-$ is $\SCUzero$-consistent, by $(\Nn)$ and $(\Kg)$.
			Extend $\Delta^-$ to a maximal $\SCUzero$-consistent set $\Delta\in\Omega$.
			Then $\{\psi\mid \K_i\psi\in\Gamma\}\subseteq \Delta$ and $\neg\phi\in\Delta$.
			
			For each $n\in\{1,\dots,\kappa\}$ and formula $\chi$, we have
			$\UO_i^n\chi\in\Gamma\Leftrightarrow \UO_i^n\chi\in\Delta$.
			Indeed, if $\UO_i^n\chi\in\Gamma$, then $(\4^*)$ gives $\K_i\UO_i^n\chi\in\Gamma$,
			so $\UO_i^n\chi\in\Delta$.
			Conversely, if $\UO_i^n\chi\notin\Gamma$, then $\neg\UO_i^n\chi\in\Gamma$ by maximality;
			by theorem $(\5^*)$, $\K_i\neg\UO_i^n\chi\in\Gamma$, hence
			$\neg\UO_i^n\chi\in\Delta$, so $\UO_i^n\chi\notin\Delta$.
			Thus setting $g_i^n:=f_i^n$ is type-correct for the $\Delta$-domain requirement in $W^c$.
			
			Define $\vec g=\{g_j^n\mid j\in I,\ n\in\{1,\dots,\kappa\}\}$ by:
			for $j=i$, let $g_i^n:=f_i^n$; for $j\neq i$, let $g_j^n(\psi):=\psi^n$ whenever
			$\UO_j^n\psi\in\Delta$.
			This is well-defined because $\psi^n\in E^c$ and $\gr^c(\psi^n)=n$.
			
			To construct $\mathcal G=\{G_j\mid j\in I\}$, for each $j\in I$ define
			\[
			S_{\Delta,\vec g,j}:=\{\langle c,\phi\rangle\mid \phi\in\Lambda\}\ \cup\
			\{\langle g_j^n(\psi),\psi\rangle\mid n\in\{1,\dots,\kappa\},\ \UO_j^n\psi\in\Delta\},
			\]
			and then let $S_{\Delta,\vec g,j}^{0}:=S_{\Delta,\vec g,j}$,
			$S_{\Delta,\vec g,j}^{m+1}:=\mathsf{cl}^{\,j}(S_{\Delta,\vec g,j}^{m})$,
			$S_{\Delta,\vec g,j}^{\infty}:=\bigcup_{m\in\mathbb N}S_{\Delta,\vec g,j}^{m}$, and
			$G_j:=S_{\Delta,\vec g,j}^{\infty}$.
			Each $G_j$ is base-extending and closed under \textup{(App$_j$)}, \textup{(Sum$_j$)}, and \textup{(Int$_j$)}.
			Set $\mathcal G:=\{G_j\mid j\in I\}$ and let $w':=\langle\Delta,\mathcal G,\vec g\rangle$.
			Then $w'\in W^c$.
			
			By the definition of $R_i^c$, since $\{\psi\mid \K_i\psi\in\Gamma\}\subseteq \Delta$ and
			$f_i^n=g_i^n$ for all $n\in\{1,\dots,\kappa\}$, we have $wR_i^c w'$.
			Finally, $\neg\phi\in\Delta$ by construction.
		\end{proof}
		
		\begin{lemma}[$\UO_i^n$-Existence Lemma]\label{lem:uo-refute-transfer}
			Let $w=\langle \Gamma,\mathcal F,\vec f\rangle\in W^c$, fix $i\in I$ and $n\in\{1,\dots,\kappa\}$, and assume
			$\UO_i^n\phi\notin\Gamma$ and $\langle t,\phi\rangle\in F_i$ for some $t\in E_n^c$.
			There exists $w'=\langle\Delta,\mathcal G,\vec g\rangle\in W^c$ such that
			$wR_i^c w'$ and $\langle t,\phi\rangle\notin G_i$.
		\end{lemma}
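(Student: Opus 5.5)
The plan is to stay at the same maximal consistent set and only \emph{shrink the explanation store}. Put $\Delta:=\Gamma$ and $\vec g:=\vec f$. For each agent $j$, take $G_j:=S_{\Gamma,\vec f,j}^{\infty}$, the least store generated by the base pairs through the closure operator of Definition~\ref{def:iter-closure}. This is the same construction as in Lemma~\ref{lem:canonical-realization}, but now it uses the given $\vec f$ instead of the default witnesses $\varphi^n$. The base condition only needs $\gr^c(f_j^n(\psi))=n$, which holds by the typing of $\vec f$. So $w'=\langle\Gamma,\mathcal G,\vec f\rangle\in W^c$, and each $G_j$ is base-extending and closed under (App$_j$), (Sum$_j$), (Int$_j$) by construction. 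For $wR_i^cw'$ we need $\{\psi\mid\K_i\psi\in\Gamma\}\subseteq\Gamma$, which follows from $(\T)$. The equality $f_i^n=g_i^n$ holds trivially. It remains to show $\langle t,\phi\rangle\notin G_i$.

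The key step is a soundness invariant for the minimal store:
\[
\text{if }\langle s,\psi\rangle\in G_i\text{ and }\gr^c(s)=m\geqslant1,\text{ then }\UO_i^m\psi\in\Gamma.
\]
First we check that every term occurring in $G_i$ has grade at most $\kappa$. Base terms have grade $1$ or some $n\leqslant\kappa$. The operation $\cdot$ takes a minimum, $+$ lowers the grade, and $!$ yields $0$ or $2\leqslant\kappa$. So the index $m$ in the invariant always lies in $L$.

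We prove the invariant by induction on the stage $k$ with $\langle s,\psi\rangle\in S^k_{\Gamma,\vec f,i}$.
\begin{itemize}
\item Base, constant $c$: here $\psi\in\Lambda$, and $(\NE)$ gives $\vdash\UO_i^1\psi$, so $\UO_i^1\psi\in\Gamma$.
\item Base, witness term $f_i^{n'}(\chi)$: the pair is $\langle f_i^{n'}(\chi),\chi\rangle$ with $\UO_i^{n'}\chi\in\Gamma$, and $m=\gr^c(f_i^{n'}(\chi))=n'$ because $\vec f$ assigns exact-grade terms. A term may serve as a witness for several formulas, but each base pair carries its own membership fact, so this is harmless.
\item Application: $s=s_1\cdot s_2$ comes from $\langle s_1,\chi\to\psi\rangle$ and $\langle s_2,\chi\rangle$ with grades $a,b\geqslant m=\min\{a,b\}$. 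The induction hypothesis gives $\UO_i^a(\chi\to\psi),\UO_i^b\chi\in\Gamma$. Then $(\IYB)$ lowers both to level $m$ (or they are already at $m$), and $(\Kgg)$ yields $\UO_i^m\psi$.
\item Sum: $s=s_1+s_2$ with $m\geqslant1$ forces $\min\{\gr^c(s_1),\gr^c(s_2)\}=m+1$. Whichever summand produced the pair has grade $\geqslant m+1$, so the induction hypothesis and $(\IYB)$ give $\UO_i^m\psi$.
\item Introspection: $s={!}s_1$ with $\psi=\K_i\chi$ and $\gr^c(s_1)=m_1\geqslant1$, so $m=2$. The induction hypothesis and $(\IYB)$ give $\UO_i^1\K_i\chi$, and $(\KYU)$ gives $\UO_i^2\K_i\chi$.
\item Pairs whose term has grade $0$ are irrelevant to the invariant.
\end{itemize}

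Now suppose $\langle t,\phi\rangle\in G_i$ with $\gr^c(t)=m\geqslant n$. The invariant gives $\UO_i^m\phi\in\Gamma$, and $(\IYB)$ (when $m>n$) gives $\UO_i^n\phi\in\Gamma$, contradicting the hypothesis. Hence $\langle t,\phi\rangle\notin G_i$, as required.

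The main obstacle I expect is the invariant itself. Two points need care. First, the grade bookkeeping for $+$ has to produce exactly the strict gap $\gr^c(t+s)=\min-1$, so the surviving summand sits strictly above $m$. Second, the given $F_i$ may contain arbitrary extra pairs that have no syntactic support in $\Gamma$. This is why we discard $F_i$ entirely and keep only the least closed store over the base. Note that the original $\langle t,\phi\rangle\in F_i$ is never used, beyond fixing which pair must be excluded.
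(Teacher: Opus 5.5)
Your proof is correct and follows essentially the same route as the paper. Both keep $\Delta=\Gamma$ and $g_i^m=f_i^m$, replace $F_i$ by the least closed store $C^\infty$ over the base, and prove by induction on closure stages that any pair of positive grade $m$ in that store has $\UO_i^m\psi\in\Gamma$, using $(\NE)$, $(\IYB)$, $(\Kgg)$, $(\KYU)$ and the strict grade drop for $+$. The differences are only cosmetic: you keep $\vec f$ for every agent instead of resetting $g_j^m(\psi)=\psi^m$ for $j\neq i$, and you make explicit two points the paper leaves implicit, namely the grade bound $\leqslant\kappa$ and the use of $(\T)$ to obtain $wR_i^cw'$.
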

		\begin{proof}
			Fix $w=\langle \Gamma,\mathcal F,\vec f\rangle$.
			Define, for this proof:
			\[
			C^0:=\{\langle c,\phi\rangle\mid \phi\in\Lambda\}\ \cup\
			\{\langle f_i^m(\psi),\psi\rangle\mid m\in\{1,\dots,\kappa\},\ \UO_i^m\psi\in\Gamma\},
			\]
		and $C^{k+1}:=\mathsf{cl}^{\,i}(C^k)$,	$C^\infty:=\bigcup_{k\in\mathbb N}C^k$. Thus $C^\infty$ is the least agent-$i$ explanation store generated from
		$\Gamma$ and the witness maps $\vec f$ by the admissibility conditions.

			\textbf{Claim 1.}
			If $\langle t,\phi\rangle\in C^\infty$ and $\gr^c(t)\geqslant n\geqslant 1$, then
			$\UO_i^n\phi\in\Gamma$.
			
			\textbf{Claim 2.}
			Given the fixed $w=\langle \Gamma,\mathcal F,\vec f\rangle\in W^c$, $i\in I$ and $t\in E^c$, if
			$\langle t,\phi\rangle\notin C^\infty$, then there exists
			$w'=\langle\Delta,\mathcal G,\vec g\rangle\in W^c$ such that
			$wR_i^c w'$
			and $\langle t,\phi\rangle\notin G_i$.

			\smallskip
			\noindent\emph{Proof of Claim 1.}
			We prove by induction on $k\in\mathbb N$ that each pair in $C^k$ has, in
			$\Gamma$, all positive understanding formulas up to the grade of its term:
			\[
			P(k):\ \forall\langle u,\psi\rangle\in C^k\ \forall \ell\
			(1\leqslant \ell\leqslant \gr^c(u)\Rightarrow \UO_i^\ell\psi\in\Gamma).
			\]

			For $k=0$, there are two kinds of pairs in $C^0$.
			If $\langle u,\psi\rangle=\langle c,\lambda\rangle$ with $\lambda\in\Lambda$, then
			$\gr^c(c)=1$ and $\UO_i^1\lambda$ is derivable by $(\NE)$, hence belongs to $\Gamma$.
			If $\langle u,\psi\rangle=\langle f_i^m(\psi),\psi\rangle$, then $\UO_i^m\psi\in\Gamma$ by
			definition of the domain of $f_i^m$; for any $1\leqslant \ell\leqslant m$, $(\IYB)$ yields
			$\UO_i^\ell\psi\in\Gamma$.

			For the induction step, take $\langle u,\psi\rangle\in C^{k+1}=\mathsf{cl}^{\,i}(C^k)$ and
			$1\leqslant \ell\leqslant \gr^c(u)$. If $\langle u,\psi\rangle\in C^k$, apply IH. Otherwise:
			\begin{description}
				\item[\textup{(App)}]
				$u=r\cdot s$ with $\langle r,\chi\to\psi\rangle,\langle s,\chi\rangle\in C^k$.
				Let $a:=\gr^c(r)$, $b:=\gr^c(s)$, and $h:=\min\{a,b\}$; then $\gr^c(u)=h$.
				By IH and $(\IYB)$, $\UO_i^h(\chi\to\psi),\UO_i^h\chi\in\Gamma$.
				By $(\Kgg)$, $\UO_i^{h}\psi\in\Gamma$, and then $(\IYB)$ yields
				$\UO_i^\ell\psi\in\Gamma$.

				\item[\textup{(Sum)}]
				$u=r+s$ and either $\langle r,\psi\rangle\in C^k$ or $\langle s,\psi\rangle\in C^k$.
				Assume $\langle r,\psi\rangle\in C^k$.
				Set $a:=\gr^c(r)$ and $h:=\gr^c(u)$.
				Since $1\leqslant \ell\leqslant h$, we have $h\geqslant1$.
				By the grade definition for $+$, this yields $h<a$, hence $\ell<a$.
				By IH, $\UO_i^a\psi\in\Gamma$, and then $(\IYB)$ gives $\UO_i^\ell\psi\in\Gamma$.

				\item[\textup{(Int)}]
				$u=!r$ with $\langle r,\K_i\psi\rangle\in C^k$.
				If $\gr^c(r)=0$, then $\gr^c(u)=0$, impossible since $1\leqslant \ell\leqslant \gr^c(u)$.
				If $\gr^c(r)\geqslant1$, then $\gr^c(u)=2$.
				By IH, $\UO_i^1\K_i\psi\in\Gamma$.
				By $(\KYU)$, $\UO_i^2\K_i\psi\in\Gamma$, and then $(\IYB)$ yields
				$\UO_i^\ell\K_i\psi\in\Gamma$.
			\end{description}
			So $P(k+1)$ holds. Hence $P(k)$ holds for all $k$. If $\langle t,\phi\rangle\in C^\infty$, then
			$\langle t,\phi\rangle\in C^k$ for some $k$, and applying $P(k)$ with $\ell:=n$
			(since $1\leqslant n\leqslant \gr^c(t)$) yields $\UO_i^n\phi\in\Gamma$.

			\smallskip
			\noindent\emph{Proof of Claim 2.}
			Set $\Delta:=\Gamma$, and for each
			$m\in\{1,\dots,\kappa\}$ let $g_i^m:=f_i^m$. For $j\neq i$, define
			$g_j^m(\psi):=\psi^m$ for $\UO_j^m\psi\in\Delta$. Then $\vec g$ satisfies the typing condition
			in the definition of $W^c$, and $g_i^m=f_i^m$ for all $m$.

			Let $G_i:=C^\infty$. By hypothesis, $\langle t,\phi\rangle\notin G_i$.
			For each $j\neq i$, let
			\[
			S_{\Delta,\vec g,j}:=\{\langle c,\phi\rangle\mid \phi\in\Lambda\}\ \cup\
			\{\langle g_j^m(\psi),\psi\rangle\mid m\in\{1,\dots,\kappa\},\ \UO_j^m\psi\in\Delta\},
			\]
			and define
			$S_{\Delta,\vec g,j}^{0}:=S_{\Delta,\vec g,j}$,
			$S_{\Delta,\vec g,j}^{k+1}:=\mathsf{cl}^{\,j}(S_{\Delta,\vec g,j}^{k})$,
			$S_{\Delta,\vec g,j}^{\infty}:=\bigcup_{k\in\mathbb N}S_{\Delta,\vec g,j}^{k}$, and
			$G_j:=S_{\Delta,\vec g,j}^{\infty}$.
			Let $\mathcal G:=\{G_j\mid j\in I\}$ and define $w':=\langle\Delta,\mathcal G,\vec g\rangle$.
			For $j=i$, $G_i=C^\infty$ is base-extending and closed under
			\textup{(App$_i$)}, \textup{(Sum$_i$)}, \textup{(Int$_i$)} by construction.
			For $j\neq i$, each $G_j$ has the same closure properties by the iterative construction above.
			Hence $w'\in W^c$.
			Also, $\{\psi\mid \K_i\psi\in\Gamma\}\subseteq\Delta$ and $f_i^m=g_i^m$ for all $m$, so
			$wR_i^c w'$.
			Finally, $\langle t,\phi\rangle\notin G_i$.
			
			Now we conclude the lemma.
			If $\langle t,\phi\rangle\in C^\infty$, then Claim 1 gives
			$\UO_i^n\phi\in\Gamma$, contradiction.
			Hence $\langle t,\phi\rangle\notin C^\infty$.
			By Claim 2 there exists an $i$-successor
			$w'=\langle\Delta,\mathcal G,\vec g\rangle\in W^c$ with $\langle t,\phi\rangle\notin G_i$.
		\end{proof}
		
	With these existence lemmas in place, we prove the bounded Truth Lemma.

		\begin{lemma}[Truth Lemma for $\ELBU^\kappa$]\label{lem:truth-ku}
			For $\varphi\in\ELBU^\kappa$ and
			$w=\langle\Gamma,\mathcal F,\vec f\rangle\in W^c$,
		$\M^c,w\vDash\varphi$ iff $\varphi\in\Gamma$.
		\end{lemma}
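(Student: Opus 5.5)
The proof will go by induction on the complexity of $\varphi$, simultaneously for all canonical worlds. The cases for atoms, $\neg$ and $\wedge$ are immediate from $V^c$ and maximal consistency of $\Gamma$. For $\K_i\varphi$, the left-to-right direction uses the $\K_i$-Existence Lemma~\ref{k} contrapositively: if $\K_i\varphi\notin\Gamma$ then $\neg\K_i\varphi\in\Gamma$, which yields an $R_i^c$-successor whose first component contains $\neg\varphi$, so the induction hypothesis refutes $\K_i\varphi$. The converse is immediate from the first conjunct of the definition of $R_i^c$ together with the induction hypothesis.

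For $\UO_i^n\varphi$ with $n\in\{1,\dots,\kappa\}$, I handle the two directions as follows.

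\emph{From $\UO_i^n\varphi\in\Gamma$ to truth.} The witness is $t:=f_i^n(\varphi)$, which has $\gr^c(t)=n$ and so lies in $E_n^c$. For any $w'=\langle\Delta,\mathcal G,\vec g\rangle$ with $wR_i^cw'$ we have $g_i^n=f_i^n$. By $(\4^*)$, $\UO_i^n\varphi\in\Delta$, so $\langle g_i^n(\varphi),\varphi\rangle$ lies in $G_i$ by (Base$_i$). Hence $w'\in\Ex_i^c(t,\varphi)$, which gives clause (1). Clause (2) follows from $(\IMP)$: $\K_i\varphi\in\Gamma$, so $\varphi\in\Delta$, and the induction hypothesis applies.

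\emph{From truth to $\UO_i^n\varphi\in\Gamma$.} Suppose $\UO_i^n\varphi\notin\Gamma$ but some $t\in E_n^c$ satisfies $w'\in\Ex_i^c(t,\varphi)$ for all $R_i^c$-successors $w'$. By reflexivity (Lemma~\ref{lem:Rc-equivalence-xu}), $\langle t,\varphi\rangle\in F_i$. The $\UO_i^n$-Existence Lemma~\ref{lem:uo-refute-transfer} then produces a successor $w'$ with $\langle t,\varphi\rangle\notin G_i$, which is a contradiction. The $\UO$ case therefore needs no induction hypothesis in this direction; the hard combinatorics was already done in Claim~1.

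The comparative case $(i\succ j)\varphi$ is where I expect the main work. Using the $\UO$ and $\K$ cases, which hold for the same $\varphi$ by the induction hypothesis (they are subformula-level cases, so I treat $\UO_k^m\varphi$ and $\K_j\varphi$ as already settled), truth at $w$ amounts to $\deg_i(w,\varphi)>\deg_j(w,\varphi)$ and $\K_j\varphi\in\Gamma$. Here $\deg_k(w,\varphi)$ equals the largest $m\le\kappa$ with $\UO_k^m\varphi\in\Gamma$, or $0$ if there is none. This is well-behaved because of downward closure under $(\IYB)$.

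\emph{From $(i\succ j)\varphi\in\Gamma$ to truth.} $(\CMP1)$ gives $\UO_i^1\varphi$ and $\K_j\varphi$. Let $d:=\deg_j$; then $(\CMP\kappa)$ gives $d<\kappa$. If $d=0$ we are done, since $\deg_i\geq1$. Otherwise $\UO_j^d\varphi\in\Gamma$, and $(\CMPn)$ gives $\UO_i^{d+1}\varphi\in\Gamma$, so $\deg_i>d$.

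\emph{From truth to membership.} If $\deg_j=0$, then $\UO_i^1\varphi$, $\K_j\varphi$ and $\neg\UO_j^1\varphi$ are all in $\Gamma$, and $(\UYC0)$ applies. If $\deg_j=m\geq1$, then $m<\deg_i\le\kappa$, so $m+1\in L$ and $\UO_i^{\deg_i}\varphi,\UO_j^m\varphi,\neg\UO_j^{m+1}\varphi\in\Gamma$; now $(\UYC)$ applies with $n=\deg_i>m$.

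The main obstacle is the ordering of the induction. The $(i\succ j)\varphi$ case appeals to the truth lemma for $\UO_k^m\varphi$ and $\K_j\varphi$, which are not subformulas of $(i\succ j)\varphi$. I will resolve this by defining the complexity measure so that $(i\succ j)\varphi$ has higher complexity than every $\UO_k^m\varphi$ and $\K_k\varphi$; for instance, a comparative node can count for $\kappa+2$ extra units. Alternatively, I can note that the $\UO$ and $\K$ cases only use the induction hypothesis on $\varphi$ itself, which is available.
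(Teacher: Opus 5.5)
Your proposal is correct and follows essentially the same route as the paper: the $\K_i$-Existence Lemma for the $\K$ case, the witness $f_i^n(\varphi)$ together with reflexivity and the $\UO_i^n$-Existence Lemma for the $\UO$ case, and $(\CMP1)$, $(\CMP\kappa)$, $(\CMPn)$, $(\UYC0)$, $(\UYC)$ for the comparative case. Your ``alternatively'' remark, that the $\K$ and $\UO$ cases for $\varphi$ use the induction hypothesis only on $\varphi$ itself, is exactly what the paper's ``by IH'' tacitly relies on. Your uniform comparative argument also covers $i=j$, where the paper instead appeals separately to $(\Irref)$.
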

		\begin{proof}
			We use induction on formula structure.
			The Boolean cases are standard. For $\varphi=\K_i\psi$:
			\begin{itemize}
				\item
				Assume $\K_i\psi\in\Gamma$, and let
				$w'=\langle\Delta,\mathcal G,\vec g\rangle$ satisfy
				$wR_i^c w'$.
				By definition of $R_i^c$, $\{\chi\mid \K_i\chi\in\Gamma\}\subseteq\Delta$, hence $\psi\in\Delta$.
				By IH, $\M^c,w'\vDash\psi$.
				So $\M^c,w\vDash \K_i\psi$.
				\item
				Assume $\M^c,w\vDash \K_i\psi$.
				If $\K_i\psi\notin\Gamma$, then $\neg \K_i\psi\in\Gamma$ by maximality.
				By Lemma~\ref{k}, there is $w'=\langle\Delta,\mathcal G,\vec g\rangle$ with
				$wR_i^c w'$
				and $\neg\psi\in\Delta$.
				By IH, $\M^c,w'\not\vDash\psi$, contradiction.
				Hence $\K_i\psi\in\Gamma$.
			\end{itemize}
			
			For $\varphi=\UO_i^n\psi$:
			\begin{itemize}
				\item
				Assume $\UO_i^n\psi\in\Gamma$.
				Let $t:=f_i^n(\psi)$; then $\gr^c(t)=n$, so $t\in E_n^c$.
				We show that $t$ witnesses $\UO_i^n\psi$ at $w$. Let $w'=\langle\Delta,\mathcal G,\vec g\rangle$ satisfy $wR_i^c w'$.
				By the definition of $R_i^c$, $g_i^n=f_i^n$. Since
				$\psi\in\mathrm{dom}(f_i^n)$, we have $\psi\in\mathrm{dom}(g_i^n)$, hence
				$\UO_i^n\psi\in\Delta$. The base condition for $G_i$ gives
				$\langle g_i^n(\psi),\psi\rangle\in G_i$, and therefore
				$\langle t,\psi\rangle\in G_i$. Thus $w'\in\Ex_i^c(t,\psi)$. From $(\IMP)$ and $\UO_i^n\psi\in\Gamma$, we get $\K_i\psi\in\Gamma$.
				By the $\K_i$-case already proved, $\M^c,w\vDash\K_i\psi$.
				Therefore $\M^c,w\vDash\UO_i^n\psi$.
				
				\item
				Assume $\M^c,w\vDash\UO_i^n\psi$. Then there exists $t\in E_n^c$ such that
				$\forall v\in W^c\;(	wR_i^c v\Rightarrow v\in\Ex_i^c(t,\psi))$. Since $R_i^c$ is reflexive, $w\in\Ex_i^c(t,\psi)$,
				i.e.\ $\langle t,\psi\rangle\in F_i$. If $\UO_i^n\psi\notin\Gamma$, then Lemma~\ref{lem:uo-refute-transfer}
				gives $w'=\langle\Delta,\mathcal G,\vec g\rangle$ with $wR_i^c w'$ and
				$\langle t,\psi\rangle\notin G_i$. Hence
				$w'\notin\Ex_i^c(t,\psi)$, contradiction.
				Therefore $\UO_i^n\psi\in\Gamma$.
				
			\end{itemize}
			
			For $\varphi=(i\succ j)\psi$. First handle the special case $i=j$.
			By theorem $(\Irref)$, no $\Gamma$ contains $(i\succ i)\psi$.
			Semantically, $\M^c,w\vDash(i\succ i)\psi$ is impossible since it would require
			$\deg_i(w,\psi)>\deg_i(w,\psi)$.
			Hence, $\M^c,w\vDash(i\succ i)\psi
			\Leftrightarrow
			(i\succ i)\psi\in\Gamma$. Now assume $i\neq j$.
			\begin{itemize}
				\item Assume $(i\succ j)\psi\in\Gamma$.
				From $(\CMP1)$, $\K_j\psi\in\Gamma$ and $\UO_i^1\psi\in\Gamma$.
				From $(\CMP\kappa)$, $\neg\UO_j^\kappa\psi\in\Gamma$.
				By IH, $\M^c,w\vDash\K_j\psi$.
				Let $S_j:=\{n\leqslant\kappa\mid \UO_j^n\psi\in\Gamma\}$.
				By $(\IYB)$, $S_j$ is downward closed.
				Since $\neg\UO_j^\kappa\psi\in\Gamma$, we have $S_j\neq\{1,\dots,\kappa\}$.
				If $S_j=\emptyset$, then
				$\deg_j(w,\psi)=0$, while $\UO_i^1\psi\in\Gamma$ gives
				$\deg_i(w,\psi)\geqslant1$ by IH.
				If $S_j\neq\emptyset$, let $m+1\leqslant\kappa$ be the least  not in $S_j$.
				Then $m\leqslant\kappa-1$ and $S_j=\{1,\dots,m\}$.
				Hence $\UO_j^m\psi\in\Gamma$ and $\UO_j^{m+1}\psi\notin\Gamma$; then
				$\neg\UO_j^{m+1}\psi\in\Gamma$, and by $(\CMPn)$,
				$\UO_i^{m+1}\psi\in\Gamma$.
				By IH,
				$\deg_j(w,\psi)=m$
				and
				$\deg_i(w,\psi)\geqslant m+1$.
				In either case,
				$\deg_i(w,\psi)>
				\deg_j(w,\psi)$.
				So $\M^c,w\vDash(i\succ j)\psi$.
				\item Assume
				$\M^c,w\vDash(i\succ j)\psi$.
				Then $\M^c,w\vDash\K_j\psi$, so $\K_j\psi\in\Gamma$ by IH.
				If $\deg_j(w,\psi)=0$, then
				$\deg_i(w,\psi)\geqslant1$, hence $\M^c,w\vDash\UO_i^1\psi$, and
				$\M^c,w\vDash\neg\UO_j^1\psi$; by IH, both belong to $\Gamma$, hence
				$(i\succ j)\psi\in\Gamma$ by $(\UYC0)$.
				If $\deg_j(w,\psi)=m\geqslant1$, then necessarily $m\leqslant\kappa-1$.
				Then
				$\M^c,w\vDash\UO_j^m\psi\wedge\neg\UO_j^{m+1}\psi$
				and
				$\M^c,w\vDash\UO_i^{m+1}\psi$.
				By IH these formulas are in $\Gamma$, and $(\UYC)$ yields
				$(i\succ j)\psi\in\Gamma$.
			\end{itemize}
		\end{proof}


		\begin{theorem}[Completeness for $\SCUzero$]\label{thm:ku-strong}
			Fix $\kappa\geqslant2$.
			For $\Sigma\cup\{\varphi\}\subseteq\ELBU^\kappa$,
		$\Sigma\models\varphi$ implies $\Sigma\vdash_{\SCUzero}\varphi$.

		\end{theorem}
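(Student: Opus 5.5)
The plan is the standard contrapositive argument through the canonical model $\M^c$ of Definition~\ref{def:cm-xu}. Suppose $\Sigma\nvdash_{\SCUzero}\varphi$. Since $\SCUzero$ is a finitary Hilbert system with only $(\MP)$ applied to assumptions, the deduction theorem holds in the usual way. The rules $(\Nn)$ and $(\NE)$ are applied only to theorems and to members of $\Lambda$, not to hypotheses, so they cause no difficulty. Hence $\Sigma\cup\{\neg\varphi\}$ is $\SCUzero$-consistent.

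By Lindenbaum's lemma (ordinary, since the language $\ELBU^\kappa$ is countable and derivations are finite) we extend it to some $\Gamma\in\Omega$. Lemma~\ref{lem:canonical-realization} then provides a store family $\mathcal F$ and witness maps $\vec f$ with $w:=\langle\Gamma,\mathcal F,\vec f\rangle\in W^c$.

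By Proposition~\ref{prop:cm-well-defined}, $\M^c$ is a genuine $\ELBU^\kappa$-model. By the Truth Lemma (Lemma~\ref{lem:truth-ku}), $\M^c,w\vDash\chi$ iff $\chi\in\Gamma$ for every $\chi\in\ELBU^\kappa$. Consequently $\M^c,w\vDash\Sigma$ and $\M^c,w\vDash\neg\varphi$. This gives a countermodel to $\Sigma\models\varphi$, which proves the contrapositive.

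The steps that need checking are the ones the paper states without detail: Lemma~\ref{lem:Rc-equivalence-xu} and Lemma~\ref{lem:exi-adm-c}.
\begin{itemize}
\item Lemma~\ref{lem:Rc-equivalence-xu}. Reflexivity comes from $(\T)$. Symmetry and transitivity on the $\Gamma$-component come from $(\4)$ and $(\5)$ in the usual S5 way. The condition $f_i^n=g_i^n$ is itself an equivalence.
\item Lemma~\ref{lem:exi-adm-c}. The admissibility clauses follow directly from the closure conditions (App$_i$), (Sum$_i$) and (Int$_i$) together with (Base$_i$) for $c$.
\end{itemize}
The grade constraints hold by construction of $\gr^c$. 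In particular, $\min-1<\min$ whenever the minimum is positive, and the $!$ clause matches Definition~\ref{cu} exactly.

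I expect the main obstacle, if any, to be the verification that the Truth Lemma already covers the comparative case. That verification was done above, using $(\CMP\kappa)$ to keep $\deg_j$ below $\kappa$. So the remaining real content of the theorem is the deduction theorem and Lindenbaum step for $\SCUzero$. These are routine because all rules besides $(\MP)$ are restricted to theorems.
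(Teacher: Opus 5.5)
Your proof is correct and follows the paper's argument exactly. Like the paper, you argue by contraposition, extend $\Sigma\cup\{\neg\varphi\}$ to some $\Gamma\in\Omega$, realize $\Gamma$ as a canonical world via Lemma~\ref{lem:canonical-realization}, and conclude with the Truth Lemma~\ref{lem:truth-ku}; your checks of the deduction theorem, Lemma~\ref{lem:Rc-equivalence-xu} and Lemma~\ref{lem:exi-adm-c} only fill in details the paper leaves implicit or omits.
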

		\begin{proof}
			Assume $\Sigma\nvdash_{\SCUzero}\varphi$.
			Then $\Sigma\cup\{\neg\varphi\}$ is $\SCUzero$-consistent.
			Extend it to a maximal $\SCUzero$-consistent set $\Gamma\in\Omega$.
			By Lemma~\ref{lem:canonical-realization}, choose $\mathcal F,\vec f$ with
			$w:=\langle\Gamma,\mathcal F,\vec f\rangle\in W^c$.
			By Lemma~\ref{lem:truth-ku},
			$\M^c,w\vDash\Sigma$
			and
			$\M^c,w\vDash\neg\varphi$.
			So $\Sigma\not\models\varphi$.
		\end{proof}

	Before stating decidability, we isolate the effective assumption on
	$\Lambda$ used in the finite search below. As in the finitary-model strategy of
	justification logic, what is needed is effective control of generated evidence,
	not merely decidable membership in the constant specification
	\cite{Studer2012-JL}. Say that $\Lambda$ is
	\emph{$\kappa$-effective} if the following can be done effectively.
	Given finite $C\subseteq\ELBU^\kappa$, finite $J\subseteq I$, and finite
	sets $X_i$ $(i\in J)$ of pairs $\langle s,\psi\rangle$,
	with $s$ a graded explanation term and $\psi\in C$, one can decide, for
	each $i\in J$, each $\phi\in C$, and each $1\leqslant m\leqslant\kappa$,
	whether there exists a term $t$ such that
	$\gr(t)\geqslant m$ and $\langle t,\phi\rangle\in
	\bigl(X_i\cup\{\langle c,\lambda\rangle\mid \lambda\in\Lambda\}\bigr)_i^\infty$, where the iterated closure is as in Definition~\ref{def:iter-closure}.
	Simple effective finite-schema choices of $\Lambda$ satisfy this condition; for
	example, $\Lambda$ may be generated by the schemata
	$\phi\wedge\psi\to\phi$ and $\phi\wedge\psi\to\psi$.
	
	\begin{theorem}[Decidability of fixed-level fragments]\label{thm:dec-kappa}
		Fix $\kappa\geqslant2$ and assume that $\Lambda$ is $\kappa$-effective.
		Then satisfiability and validity over $\ELBU^\kappa$ are decidable.
	\end{theorem}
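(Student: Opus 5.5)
Satisfiability and validity are interdefinable: $\varphi$ is valid iff $\neg\varphi$ is unsatisfiable. So I will decide satisfiability only. By the Remark after the truth conditions, every $(i\succ j)\psi$ in $\ELBU^\kappa$ can be replaced, semantically equivalently, by $\K_j\psi\wedge\bigvee_{m=1}^{\kappa}(\UO_i^m\psi\wedge\neg\UO_j^m\psi)$. Applying this innermost-first gives an effective translation of $\varphi$ into an equisatisfiable comparative-free formula $\varphi'$. Hence it suffices to decide satisfiability of comparative-free formulas of $\ELBU^\kappa$.

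\textbf{Finite model property.} Let $C$ be the subformula closure of $\varphi'$ (closed under single negation), and let $J$ be the finite set of agents occurring in $\varphi'$. I would filtrate the canonical model $\M^c$ of Definition~\ref{def:cm-xu} through $C$. Worlds are the intersections $\Gamma\cap C$. For $i\in J$, put $R_i$ to relate two such sets iff they agree on every formula of the form $\K_i\chi$ and every $\UO_i^n\chi$ in $C$. This is the standard S5 filtration, which also keeps the $\UO_i^n$-profile constant along $R_i$, exactly as $R_i^c$ required equal witness maps.

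For explanations I would take finitely many atomic terms $\chi^n$ with $\UO_i^n\chi\in C$, together with $c$, and close them under the operations. At a world with $\UO_i^n\chi$ in its $C$-part, the store $F_i$ contains the base pair $\langle\chi^n,\chi\rangle$, and $F_i$ is then closed under App, Sum and Int as in Definition~\ref{def:iter-closure}. The truth lemma for $C$ then follows as in Lemma~\ref{lem:truth-ku}:
\begin{itemize}
\item the left-to-right direction of the $\UO$-case uses that the witness $\chi^n$ is shared along $R_i$;
\item the right-to-left direction uses Claim~1 of Lemma~\ref{lem:uo-refute-transfer}, since any pair generated in the closure with grade $\geqslant n$ forces $\UO_i^n\chi\in\Gamma$ via $(\Kgg)$, $(\IYB)$, $(\KYU)$ and $(\NE)$.
\end{itemize}

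Only pairs $\langle t,\psi\rangle$ with $\psi\in C$ matter for truth. So each model can be represented finitely: a valuation on at most $2^{|C|}$ worlds, S5 partitions, and, for each world, agent $i\in J$ and $\psi\in C$, the maximal grade $m\leqslant\kappa$ attained by a generated pair $\langle t,\psi\rangle$.

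\textbf{Algorithm and main obstacle.} The procedure enumerates all candidate finite structures of this bounded size. For each candidate it must decide whether the grade profiles are realizable, i.e.\ whether they are generated from the chosen base pairs together with the infinite set $\{\langle c,\lambda\rangle\mid\lambda\in\Lambda\}$. This realizability check is the main obstacle, because $\Lambda$ is infinite and App can chain through formulas outside $C$. It is exactly what $\kappa$-effectiveness of $\Lambda$ provides: given $C$, $J$ and the finite base sets $X_i$, we can decide for each $\phi\in C$ and $m\leqslant\kappa$ whether some $t$ with $\gr(t)\geqslant m$ has $\langle t,\phi\rangle$ in the closure.

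Two checks remain for each candidate.
\begin{itemize}
\item \emph{Model-hood.} The finite structure must be an $\ELBU^\kappa$-model. Admissibility of $\Ex_i$ holds because the stores are closures. Grade constraints hold because we use the canonical grade map $\gr^c$ on the generated terms.
\item \emph{Truth.} Evaluating $\varphi'$ is then a finite computation: the $\UO_i^n$ clause asks whether some witness of grade $\geqslant n$ exists uniformly over the $R_i$-class, and this is precisely the decided profile.
\end{itemize}
Conversely, if $\varphi'$ is satisfiable, then by Theorem~\ref{thm:ku-strong} it is $\SCUzero$-consistent, so the filtration yields one of the enumerated candidates. The search is therefore complete, and satisfiability is decidable.
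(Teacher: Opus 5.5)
\textbf{The gap is in eliminating comparatives.} The Remark gives only a pointwise truth equivalence between $(i\succ j)\psi$ and $\theta_\psi:=\K_j\psi\wedge\bigvee_{m=1}^{\kappa}(\UO_i^m\psi\wedge\neg\UO_j^m\psi)$. It does not license substituting $\theta_\psi$ for $(i\succ j)\psi$ inside the scope of some $\UO_k^n$, or inside another comparative, whose truth depends on such $\UO$-formulas. Outside those scopes the substitution is harmless.

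The reason is that each $\Ex_k$ is indexed by \emph{syntactic} formulas. Neither the semantics nor $\SCUzero$ gives $\UO_k^n\alpha\leftrightarrow\UO_k^n\beta$ for equivalent $\alpha,\beta$.

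Here is a concrete counterexample, with $i\neq j$:
\begin{itemize}
\item Let $\Lambda$ be the conjunction-elimination instances.
\item Take one reflexive world $w$ with $p$ true.
\item Take the free term algebra over $c,a,b$ with the canonical grading and $\gr(a)=\gr(b)=1$.
\item Use least admissible stores generated, besides the $\Lambda$-pairs, by $\langle a,p\rangle$ for agent $i$ and by $\langle b,(i\succ j)p\rangle$ for some agent $k$.
\end{itemize}
No stored formula is a conjunction, so (App) never fires, and the stores contain only their base formulas. Hence $\deg_i(w,p)=1>0=\deg_j(w,p)$, so $(i\succ j)p$ and $\UO_k^1((i\succ j)p)$ hold, while $\UO_k^1\theta_p$ fails. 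Thus $\UO_k^1((i\succ j)p)\wedge\neg\UO_k^1\theta_p$ is satisfiable. Your innermost-first translation, however, turns it into the contradiction $\UO_k^1\theta_p\wedge\neg\UO_k^1\theta_p$, so your procedure would give the wrong answer.

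\textbf{How to repair it.} Do not eliminate comparatives.
\begin{itemize}
\item Close $C$ so that $(i\succ j)\psi\in C$ brings in $\K_j\psi$ and $\UO_i^m\psi,\UO_j^m\psi$ for $1\leqslant m\leqslant\kappa$.
\item Prove the comparative case of your filtration truth lemma as in Lemma~\ref{lem:truth-ku}, using $(\CMP1)$, $(\CMP\kappa)$, $(\CMPn)$, $(\UYC0)$, $(\UYC)$ and $(\IYB)$ in an underlying maximal consistent set.
\item In a candidate structure, evaluate comparatives by computing degrees.
\end{itemize}
This is exactly what the paper's closure conditions and acceptance clause for $(i\succ j)\psi$ do.

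\textbf{How the repaired route compares.} With this fix your argument is a legitimate variant of the paper's. The paper filtrates an arbitrary model into type structures, with one fresh witness per $R_i$-class and formula. You instead filtrate maximal consistent sets, using world-independent witnesses $\chi^n$. You then get the refutation direction of the $\UO$-case from Claim~1 of Lemma~\ref{lem:uo-refute-transfer}. This makes explicit why the generated stores never exceed the $\UO$-labels, at the cost of going through the axiomatization.

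\textbf{Two smaller corrections.}
\begin{itemize}
\item ``Satisfiable implies consistent'' is soundness (Theorem~\ref{thm:soundness-fin}), not Theorem~\ref{thm:ku-strong}.
\item Restrict candidates to agent-$i$ bases that are constant on $R_i$-classes, as your filtration produces. Only then do per-world grade profiles decide the uniform-witness clause for $\UO_i^n$.
\end{itemize}
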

	\begin{proof}[Proof sketch]
		We prove an effective finite-model property. Given $\varphi\in\ELBU^\kappa$, form the finite closure $C$ of $\varphi$ under subformulas, negation, downward closure for understanding levels ($\UO_i^{m+1}\psi$ brings in $\UO_i^m\psi$), the	$\K_i\psi$ associated with $\UO_i^m\psi$, and, for each comparative subformula $(i\succ j)\psi$, the formulas $\K_j\psi,\UO_i^m\psi,\UO_j^m\psi$ for $1\leqslant m\leqslant\kappa$. Only proposition letters and agents occurring in $C$ are relevant.
		
		Call $X\subseteq C$ a complete $C$-type if it decides every formula in $C$
		and respects the Boolean connectives. Enumerate finite structures whose
		states are complete $C$-types, with equivalence relations $R_i$ matching the
		$\K_i$-formulas. For $\UO$-formulas, require class-uniformity: if $X R_i Y$,
		then $X$ and $Y$ contain the same formulas $\UO_i^m\psi$ from $C$. Let
		$\delta_i(X,\psi)$ be the largest such $m$, or $0$ if none exists.	For each $\delta_i(X,\psi)>0$, add one fresh witness
		$e_{i,[X]_i,\psi}$ of that grade, shared across the $R_i$-class $[X]_i$.
		Using $\kappa$-effectiveness, decide whether the least store generated from
		$\Lambda$ and these shared witness pairs contains a pair
		$\langle t,\psi\rangle$ with $\gr(t)\geqslant m$. Accept exactly those
		structures in which the $\UO_i^m\psi$-labels match this generated-witness
		condition and the truth of $\psi$ throughout the relevant $R_i$-class, and in which $(i\succ j)\psi\in X$ iff $\delta_i(X,\psi)>\delta_j(X,\psi)$ and $\K_j\psi\in X$.

		Filtration of any model satisfying $\varphi$ through $C$ yields an accepted
		finite structure, and any accepted structure is realized as a finite model with its states, relations, atom valuation, and least generated explanation stores. Induction on formulas in $C$ gives agreement between truth and membership in the corresponding type. Hence $\varphi$ is
		satisfiable iff some accepted finite structure contains $\varphi$. Since finitely many such structures are checked and all checks are effective, satisfiability is decidable. Validity follows.
	\end{proof}
	
			For the system $\SCUi$, the remaining key step is to build
			maximal consistent sets closed under all $\omega$-introduction instances. We use a countable Lindenbaum--Henkin construction with finite failure witnesses, along the lines of
			\cite{doder2024}.

			\begin{lemma}[$\omega$-Lindenbaum extension]\label{lem:lindenbaum-omega}
				Every $\SCUi$-consistent set extends to a maximal $\SCUi$-consistent set that is
				closed under all $\omega$-introduction instances.
			\end{lemma}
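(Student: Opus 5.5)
Given an $\SCUi$-consistent set $\Sigma_0$, I will build a chain $\Sigma_0\subseteq\Sigma_1\subseteq\cdots$ of $\SCUi$-consistent sets that decides every formula and, whenever it kills the conclusion of an $\omega$-introduction instance, also kills one of that instance's premises. Since $P$ and $I$ are countable, $\ELBU$ is countable, so I fix an enumeration $\phi_0,\phi_1,\dots$ of all formulas. At stage $k$, if $\Sigma_k\cup\{\phi_k\}$ is consistent, I let $\Sigma_{k+1}:=\Sigma_k\cup\{\phi_k\}$. Otherwise I add $\neg\phi_k$, and if $\phi_k$ is the conclusion of an $\omega$-introduction instance I add a failure witness as well. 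A formula may be the conclusion of several instances, because it can be parsed as $\chi\to\K_{j_1}(\theta_1\to\cdots)$ in different ways (for example, with different choices of the outer $\chi$ or of where the nesting stops). There are only finitely many such parsings, so I handle them one at a time and add a witness for each.

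The key step is the witness lemma. Suppose $\Sigma_k$ is consistent, $\Sigma_k\cup\{\phi_k\}$ is inconsistent, and $\phi_k=\chi\to B_\omega$ is the conclusion of an instance with premises $\chi\to B_n$. I claim there is some $n$ such that $\Sigma_k\cup\{\neg(\chi\to B_n)\}$ is consistent. Suppose not. By Lemma~\ref{lem:scui-neg-cons} (in contrapositive form), $\Sigma_k\vdash\chi\to B_n$ for every $n$. Applying $(\omega\mathrm{I})$ to these proof trees gives $\Sigma_k\vdash\chi\to B_\omega$, that is, $\Sigma_k\vdash\phi_k$. Together with the inconsistency of $\Sigma_k\cup\{\phi_k\}$, the deduction property established in the proof of Lemma~\ref{lem:scui-neg-cons} gives $\Sigma_k\vdash\neg\phi_k$. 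Cut (Lemma~\ref{lem:scui-cut}) then yields $\Sigma_k\vdash\bot$, a contradiction. So I may add $\neg\phi_k$ together with $\neg(\chi\to B_n)$ for the least such $n$. Consistency of the extended set is exactly this lemma, using that $\neg(\chi\to B_n)$ classically implies $\neg(\chi\to B_\omega)$ given $\neg\phi_k$. For the remaining parsings I repeat the argument, with the current finite extension of $\Sigma_k$ in place of $\Sigma_k$; the argument only used consistency of the current set.

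Let $\Gamma:=\bigcup_k\Sigma_k$. Maximality is immediate because each $\phi_k$ or its negation was added. Closure under $\omega$-introduction follows as well: if every premise $\chi\to B_n$ lies in $\Gamma$ but the conclusion $\phi_k$ does not, then at stage $k$ some $\neg(\chi\to B_n)$ was added, which contradicts consistency.

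The main obstacle is proving that $\Gamma$ is consistent. Because derivations are infinitary, a proof tree of $\bot$ from $\Gamma$ may use infinitely many assumptions, so the usual compactness argument fails. My plan is to show that $\Gamma$ is deductively closed, meaning $\Gamma\vdash\psi$ implies $\psi\in\Gamma$, by induction on well-founded proof trees.
\begin{itemize}
\item Leaves: assumptions lie in $\Gamma$, and axioms lie in $\Gamma$ because $\neg A$ for a theorem $A$ would make some $\Sigma_k$ inconsistent.
\item $(\MP)$: if $\psi\to\chi,\psi\in\Gamma$ but $\chi\notin\Gamma$, then all three of $\psi\to\chi$, $\psi$, $\neg\chi$ lie in some finite stage $\Sigma_k$, making it inconsistent.
\item $(\Nn)$ and $(\NE)$: their conclusions are theorems, so the leaf case applies.
\item $(\omega\mathrm{I})$: this is the closure property just established.
\end{itemize}
Hence if $\Gamma\vdash\bot$ then $\bot\in\Gamma$, so $\bot$ belongs to some consistent $\Sigma_k$, which is impossible.
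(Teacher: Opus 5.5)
Your proposal is correct and follows essentially the paper's proof. Both build a Lindenbaum--Henkin chain with finite failure witnesses, whose existence is secured by Lemma~\ref{lem:scui-neg-cons} together with $(\omega\mathrm{I})$, and both then run an induction on well-founded proof trees to show that the union is deductively closed and hence consistent. The only real difference is scheduling: you attach the witness at the stage where the conclusion $\phi_k$ itself is rejected, while the paper lists the $\omega$-introduction instances separately, each infinitely often. There is also a small slip: $\neg(\chi\to B_n)\to\neg(\chi\to B_\omega)$ is not classical but an $\SCUi$-theorem via $(\IYB_\omega^n)$ and normal modal reasoning, though you do not need it, since $\Sigma_k\vdash\neg\phi_k$ already makes adding $\neg\phi_k$ harmless.
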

			
			\begin{proof}
				Assume $\Sigma$ is $\SCUi$-consistent.
				Note that $\ELBU$ and the set of $\omega$-introduction instances are countable.
				Enumerate the formulas of $\ELBU$ as $(\theta_s)_{s\in\mathbb N}$.
				Let $\mathcal R$ be the set of all $\omega$-introduction instances, and fix a
				listing $r:\mathbb N\to\mathcal R$ in which every instance occurs infinitely
				often. At stage $s$, write
				$r(s)$ as the rule with premises $\chi_s\to B_s^n$ $(n\in\mathbb N^{+})$
				and conclusion $\chi_s\to B_s^\omega$, where $B_s^\omega$ is obtained from
				$B_s^n$ by replacing the displayed $\UO^n$ with $\UO^\omega$.
				We define the sequence $(\Gamma_s)_{s\in\mathbb N}$ recursively by
			$	\Gamma_0:=\Sigma$, and, given $\Gamma_s$, first set
				\[
			H_s:=
			\begin{cases}
				\Gamma_s\cup\{\theta_s\}, & \Gamma_s\cup\{\theta_s\}\text{ is }\SCUi\text{-consistent},\\
					\Gamma_s\cup\{\neg\theta_s\}, & \text{otherwise}.
				\end{cases}
				\]
				If $\neg(\chi_s\to B_s^\omega)\in H_s$, let
				\[
				\Gamma_{s+1}:=
				H_s\cup\{\neg(\chi_s\to B_s^{n_s})\},
				\]
				where $n_s$ is the least positive integer such that
				$H_s\cup\{\neg(\chi_s\to B_s^{n_s})\}$ is $\SCUi$-consistent.
				Otherwise let $\Gamma_{s+1}:=H_s$.

			We first show that this recursion is well-defined and that every $\Gamma_s$ is
			$\SCUi$-consistent. The basis is immediate. For the step, suppose $\Gamma_s$ is
			consistent. At least one of $\Gamma_s\cup\{\theta_s\}$ and
			$\Gamma_s\cup\{\neg\theta_s\}$ is consistent; otherwise, Lemma~\ref{lem:scui-neg-cons}
			gives $\Gamma_s\vdash_{\SCUi}\theta_s$, while inconsistency of
			$\Gamma_s\cup\{\theta_s\}$ implies inconsistency of
			$\Gamma_s\cup\{\neg\neg\theta_s\}$ by tautology and Lemma~\ref{lem:scui-cut}, so another
			application of Lemma~\ref{lem:scui-neg-cons} gives
			$\Gamma_s\vdash_{\SCUi}\neg\theta_s$, contradiction. Thus $H_s$ is consistent.
				If no such $n_s$ existed in this finite-failure-witness step, then for every $n\in\mathbb N^{+}$,
				$H_s\cup\{\neg(\chi_s\to B_s^{n})\}$ would be inconsistent; by
				Lemma~\ref{lem:scui-neg-cons},
				$H_s\vdash_{\SCUi}\chi_s\to B_s^{n}$ for all $n$, and
				the $\omega$-introduction instance $r(s)$ would yield
				$H_s\vdash_{\SCUi}\chi_s\to B_s^{\omega}$, contradicting consistency of
				$H_s$. 
				
				Let $	\Gamma^*:=\bigcup_{s\in\mathbb N}\Gamma_s$. The construction has the
				following finite-failure property: if the conclusion $\chi\to B^\omega$ of an
				$\omega$-introduction instance has its negation in $\Gamma^*$, then
				$\neg(\chi\to B^n)\in\Gamma^*$ for some $n\in\mathbb N^{+}$.
				Indeed, once $\neg(\chi\to B^\omega)$ has entered the construction, choose a
				later stage $s$ at which $r(s)$ is that same instance. Such a stage exists
				because every instance occurs infinitely often. The witness step at stage $s$
				then adds $\neg(\chi\to B^n)$ for some $n$.

			It remains to check that $\Gamma^*$ is a maximal $\SCUi$-consistent set and is closed under all $\omega$-introduction instances.
			First, $\Gamma^*$ decides every formula: if $\theta=\theta_s$, then stage
			$s+1$ puts either $\theta$ or $\neg\theta$ into $\Gamma^*$.

			Every $\SCUi$-theorem belongs to $\Gamma^*$. Otherwise, since $\Gamma^*$
			decides formulas, $\neg\eta\in\Gamma^*$ for some theorem $\eta$. Then
			$\neg\eta\in\Gamma_s$ for some $s$, and the theorem proof of $\eta$ together
			with the tautology $\eta\to(\neg\eta\to\bot)$ makes $\Gamma_s$ inconsistent.

			The set $\Gamma^*$ is closed under $(\MP)$ in the usual way. It is also closed
			under $\omega$-introduction: if all premises $\chi\to B^n$ of an instance are
			in $\Gamma^*$ but the conclusion $\chi\to B^\omega$ is not, then
			$\neg(\chi\to B^\omega)\in\Gamma^*$. By the finite-failure property,
			$\neg(\chi\to B^n)\in\Gamma^*$ for some $n$. Since the construction is
			increasing, some $\Gamma_s$ contains both $\chi\to B^n$ and its negation,
			contradicting consistency of $\Gamma_s$.

			Finally, $\Gamma^*$ is $\SCUi$-consistent. Otherwise take a proof tree of
			$\bot$ from $\Gamma^*$. By induction on the tree, every node label belongs to
			$\Gamma^*$: assumptions by definition; axioms and $(\NE)$-nodes because
			theorems belong to $\Gamma^*$; $(\MP)$- and $\omega$-introduction nodes by
			the closure just proved; and $(\Nn)$-nodes because their premise subtrees are
			closed and hence prove theorems. Thus $\bot\in\Gamma^*$, so
			$\bot\in\Gamma_s$ for some $s$, contradicting consistency of $\Gamma_s$.
		\end{proof}
		
		Now let $\Omega_\infty$ be the set of maximal $\SCUi$-consistent sets closed under
		all $\omega$-introduction instances.
		\begin{lemma}\label{lem:canonical-realization-omega}
	Let $W_\infty^c$ be the world set obtained by rebuilding
			Definition~\ref{def:cm-xu} with $\Omega$ replaced by $\Omega_\infty$,
			$\ELBU^\kappa$ replaced by $\ELBU$, and every bounded index range
			$\{1,\dots,\kappa\}$ replaced by $\mathbb N^{+}$.
			For every $\Gamma\in\Omega_\infty$, there exist
			$\mathcal F=\{F_i\mid i\in I\}$ and a witness family $\vec f$ such that
		$\langle\Gamma,\mathcal F,\vec f\rangle\in W_\infty^c$.
		\end{lemma}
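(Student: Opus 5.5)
The plan is to repeat the construction of Lemma~\ref{lem:canonical-realization} essentially verbatim, now over the full language and with the unbounded index range. Fix $\Gamma\in\Omega_\infty$.

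\textbf{Witness maps.} For each $i\in I$ and each $n\in\mathbb N^{+}$, define $f_i^n$ on $\{\varphi\mid \UO_i^n\varphi\in\Gamma\}$ by $f_i^n(\varphi):=\varphi^n$. Here the rebuilt term algebra $E^c$ is generated by $c$, $\varphi^n$ (for $\varphi\in\ELBU$, $n\in\mathbb N^{+}$), $\cdot$, $+$ and $!$. It carries the same inductive grade map $\gr^c$, with $\gr^c(\varphi^n)=n$. This makes each $f_i^n$ well-typed, i.e.\ its values lie in $\{t\in E^c\mid \gr^c(t)=n\}$. The index $\omega$ does not occur in $\vec f$, since $W_\infty^c$ only replaces the finite range $\{1,\dots,\kappa\}$ by $\mathbb N^{+}$. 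So no witness for $\UO_i^\omega$-formulas is needed at this stage; those are handled later through the finite levels.

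\textbf{Explanation stores.} For each $i$, form the base set
\[
S_{\Gamma,\vec f,i}:=\{\langle c,\phi\rangle\mid\phi\in\Lambda\}\cup\{\langle f_i^n(\varphi),\varphi\rangle\mid n\in\mathbb N^{+},\ \UO_i^n\varphi\in\Gamma\}.
\]
Iterate the one-step closure $\mathsf{cl}^{\,i}$ of Definition~\ref{def:iter-closure}, read over $E^c\times\ELBU$, and set $F_i:=S_{\Gamma,\vec f,i}^\infty$.

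The conditions (Base$_i$), (App$_i$), (Sum$_i$) and (Int$_i$) then follow by the usual finiteness argument. Base holds because $S^0\subseteq S^\infty$. For each closure condition, the finitely many (at most two) premise pairs lie in $S^\infty=\bigcup_k S^k$, hence in a single $S^k$, because the $S^k$ form an increasing chain. The conclusion pair therefore lies in $S^{k+1}\subseteq S^\infty$. Taking $\mathcal F:=\{F_i\mid i\in I\}$ gives $\langle\Gamma,\mathcal F,\vec f\rangle\in W_\infty^c$.

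\textbf{Main point to check.} The only step needing real care is that nothing in the rebuilt definition of $W_\infty^c$ uses properties of $\Gamma$ beyond its being a set of formulas. The membership conditions are purely structural (typing of $\vec f$ and closure of the stores). So the closure of $\Gamma$ under $\omega$-introduction plays no role here; it becomes relevant only for the truth lemma and the existence lemmas.

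I would also note, though it is not part of the statement, that $\Omega_\infty\neq\emptyset$ by Lemma~\ref{lem:lindenbaum-omega}, applied to the consistent set $\{\top\}$ (consistent by soundness, Theorem~\ref{thm:soundness-inf}). Consequently $W_\infty^c$ is nonempty.
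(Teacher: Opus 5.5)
Your proposal is correct and is essentially the paper's proof: the paper also sets $f_i^n(\psi):=\psi^n$, takes each $F_i$ to be the least closure of the corresponding base under \textup{(App$_i$)}, \textup{(Sum$_i$)} and \textup{(Int$_i$)}, and notes that the construction is purely syntactic, so it does not depend on $\Gamma$ being closed under $\omega$-introduction. You additionally write out the increasing-chain argument for the closure conditions and the nonemptiness remark, both of which the paper leaves implicit.
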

		\begin{proof}
			Exactly as in Lemma~\ref{lem:canonical-realization}: define
			$f_i^n(\psi):=\psi^n$ on $\{\psi\mid \UO_i^n\psi\in\Gamma\}$,
			and let each $F_i$ be the least closure of the corresponding base under
			\textup{(App$_i$)}, \textup{(Sum$_i$)}, and \textup{(Int$_i$)}.
			The construction is purely syntactic and independent of whether
			$\Gamma\in\Omega$ or $\Gamma\in\Omega_\infty$.
		\end{proof}
		
		Let
	$
				\M_\infty^c=(W_\infty^c,\{R_i^c\}_{i\in I},V^c,E^c,\gr^c,\{\Ex_i^c\}_{i\in I})$	be the canonical model over $\Omega_\infty$ obtained in this way.
				\begin{lemma}[Full $\K_i$-Existence Lemma]\label{lem:k-full}
					If $w=\langle \Gamma, \mathcal F, \vec f\rangle\in W_\infty^c$ and
					$\neg \K_i\phi \in \Gamma$, then there exists
					$w'=\langle\Delta,\mathcal G,\vec g\rangle\in W_\infty^c$ such that
					$wR_i^c w'$ and $\neg\phi\in \Delta$.
				\end{lemma}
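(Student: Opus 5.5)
The plan mirrors Lemma~\ref{k}, but the $\omega$-rule has to be respected. The only new difficulty is that the successor's first component must lie in $\Omega_\infty$, not merely be maximal consistent. Fix $w=\langle\Gamma,\mathcal F,\vec f\rangle$ with $\neg\K_i\phi\in\Gamma$ and set $\Delta^-:=\{\psi\mid\K_i\psi\in\Gamma\}\cup\{\neg\phi\}$.

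\emph{Consistency of $\Delta^-$.} Suppose $\Delta^-\vdash_{\SCUi}\bot$. By the deduction property established inside the proof of Lemma~\ref{lem:scui-neg-cons}, $\{\psi\mid\K_i\psi\in\Gamma\}\vdash_{\SCUi}\phi$. We cannot pass to a finite subset, since proof trees may be infinitary. Instead we prove by induction on proof trees the following transfer claim: if $\{\psi\mid\K_i\psi\in\Gamma\}\vdash_{\SCUi}\eta$, then $\K_i\eta\in\Gamma$. The cases run as follows.
\begin{itemize}
\item Assumption leaves are immediate.
\item Axioms and $(\NE)$-conclusions are theorems, so $(\Nn)$ puts their $\K_i$-versions in $\Gamma$.
\item $(\Nn)$-nodes have closed premises, so their conclusions are theorems as well.
\item $(\MP)$ follows from $(\Kg)$ and closure of $\Gamma$ under $(\MP)$.
\item For an $(\omega\mathrm{I})$-node with premises $\chi\to B^n$, the induction hypothesis gives $\K_i(\chi\to B^n)\in\Gamma$ for all $n$. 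Write $\K_i(\chi\to B^n)$ as $\top\to\K_i(\chi\to B^n)$. This is a premise of the $(\omega\mathrm{I})$ instance with one extra outer layer $\K_i(\chi\to\cdots)$. Since $\Gamma\in\Omega_\infty$ is closed under all $\omega$-introduction instances, $\top\to\K_i(\chi\to B^\omega)\in\Gamma$, and hence $\K_i(\chi\to B^\omega)\in\Gamma$.
\end{itemize}
This nested-context case is exactly why the rule was stated with the $\K_{j_1}(\theta_1\to\cdots)$ prefixes. With the claim in hand, $\K_i\phi\in\Gamma$, which contradicts $\neg\K_i\phi\in\Gamma$.

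\emph{Extension into $\Omega_\infty$.} Apply Lemma~\ref{lem:lindenbaum-omega} to $\Delta^-$ to obtain $\Delta\in\Omega_\infty$ with $\Delta^-\subseteq\Delta$.

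\emph{Agreement on $\UO_i^n$.} Show that $\UO_i^n\chi\in\Gamma$ iff $\UO_i^n\chi\in\Delta$ for all $n\in\mathbb N^{+}$. This goes exactly as in Lemma~\ref{k}, using $(\4^*)$ and $(\5^*)$ in $\SCUi$ from Proposition~\ref{prop:derived-order} together with the fact that $\Gamma$ contains all theorems. The same holds for $\UO_i^\omega$ via $(\4^*_\omega)$ and $(\5^*_\omega)$, although this is not needed for typing, since the witness maps range only over finite $n$.

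\emph{Building $w'$.} Set $g_i^n:=f_i^n$, which is type-correct by the previous step. For $j\neq i$ let $g_j^n(\psi):=\psi^n$. Build each $G_j$ as the iterated closure $S^\infty$ of its base as in Definition~\ref{def:iter-closure}, exactly as in Lemma~\ref{lem:canonical-realization-omega}. Then $w'=\langle\Delta,\mathcal G,\vec g\rangle\in W^c_\infty$, we have $wR_i^cw'$, and $\neg\phi\in\Delta$.

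The main obstacle is the consistency step. Finitary compactness is unavailable there, and the proof depends on the transfer induction through $(\omega\mathrm{I})$ nodes together with the $\omega$-closure of $\Gamma$.
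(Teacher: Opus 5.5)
Your proof is correct and follows essentially the same route as the paper: a proof-tree induction for the consistency of $\Delta^-$, with the $(\omega\mathrm{I})$ case handled through the extra outer $\K_i(\chi\to\cdots)$ layer and the $\omega$-closure of $\Gamma$, then the $\omega$-Lindenbaum extension, $(\4^*)/(\5^*)$-agreement, and the same construction of $w'$. The only cosmetic difference is that you first discharge $\neg\phi$ using the deduction property from Lemma~\ref{lem:scui-neg-cons} and then show that $\{\psi\mid\K_i\psi\in\Gamma\}$ is closed under $\SCUi$-derivability, whereas the paper keeps $\neg\phi$ as an assumption and proves $\K_i(\neg\phi\to\alpha)\in\Gamma$ at every node.
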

			\begin{proof}
				Let
				$\Delta^-:=\{\psi\mid \K_i\psi\in\Gamma\}\cup\{\neg\phi\}$.
				We first show that $\Delta^-$ is $\SCUi$-consistent. Suppose otherwise.
				Let a proof tree of $\bot$ from $\Delta^-$ be given. We claim, by induction
				on this tree, that for every node label $\alpha$,
				$\K_i(\neg\phi\to\alpha)\in\Gamma$.
				
					For assumption leaves, either $\alpha=\psi$ with $\K_i\psi\in\Gamma$, or
					$\alpha=\neg\phi$. In the first case, normal modal reasoning gives
					$\K_i(\neg\phi\to\psi)\in\Gamma$ from $\K_i\psi\in\Gamma$; in the second,
					$\K_i(\neg\phi\to\neg\phi)\in\Gamma$ follows by necessitation. Axiom leaves,
					$(\NE)$-nodes, and closed $(\Nn)$-nodes are theorems, so the desired formula
					$\K_i(\neg\phi\to\alpha)$ also follows by necessitation. The $(\MP)$ case is standard.
				
				For an $\omega$-introduction node, write its premises as
				$\chi\to B_n$ $(n\in\mathbb N^{+})$ and its conclusion as
				$\chi\to B_\omega$.
				By the induction hypothesis,
				$\K_i(\neg\phi\to(\chi\to B_n))\in\Gamma$ for every $n$.
				Equivalently, by normal modal reasoning,
				$\K_i((\neg\phi\wedge\chi)\to B_n)\in\Gamma$ for every $n$.
					By propositional reasoning, these give the premises
					$\top\to \K_i((\neg\phi\wedge\chi)\to B_n)$ of the corresponding
					$(\omega \mathrm{I})$-instance, whose conclusion is
					$\top\to\K_i((\neg\phi\wedge\chi)\to B_\omega)$. Since $\Gamma$ is closed under all
				$\omega$-introduction instances, we obtain
				$\top\to\K_i((\neg\phi\wedge\chi)\to B_\omega)\in\Gamma$, hence
				$\K_i((\neg\phi\wedge\chi)\to B_\omega)\in\Gamma$, and therefore
				$\K_i(\neg\phi\to(\chi\to B_\omega))\in\Gamma$.
				
				At the root we obtain $\K_i(\neg\phi\to\bot)\in\Gamma$, and hence
				$\K_i\phi\in\Gamma$, contradicting $\neg\K_i\phi\in\Gamma$.
				Thus $\Delta^-$ is consistent. By Lemma~\ref{lem:lindenbaum-omega}, extend
				$\Delta^-$ to some $\Delta\in\Omega_\infty$.
				
				For every $n\in\mathbb N^{+}$ and formula $\chi$,
				$\UO_i^n\chi\in\Gamma$ iff $\UO_i^n\chi\in\Delta$: the forward direction uses
				$(\4^*)$, and the backward direction uses $(\5^*)$, exactly as in the bounded
				$\K_i$-Existence Lemma. Hence we may set $g_i^n:=f_i^n$ for all $n$.
				For $j\neq i$, let $g_j^n(\psi):=\psi^n$ whenever $\UO_j^n\psi\in\Delta$.
				Build $\mathcal G=\{G_j\mid j\in I\}$ from $\Delta$ and $\vec g$ by the same
				base-and-closure construction used in Lemma~\ref{lem:canonical-realization-omega}.
				Then $w':=\langle\Delta,\mathcal G,\vec g\rangle$ belongs to $W_\infty^c$.
				Since $\{\psi\mid \K_i\psi\in\Gamma\}\subseteq\Delta$ and $g_i^n=f_i^n$ for all
				$n$, we have $wR_i^c w'$. Finally, $\neg\phi\in\Delta$ by construction.
				\end{proof}

				\begin{lemma}\label{lem:finite-u-full}
					Let $w=\langle\Gamma,\mathcal F,\vec f\rangle\in W_\infty^c$, fix
					$i\in I$ and $n\in\mathbb N^{+}$, and assume
					$\UO_i^n\phi\notin\Gamma$ and $\langle t,\phi\rangle\in F_i$ for some
					$t\in E_n^c$.
					There exists $w'=\langle\Delta,\mathcal G,\vec g\rangle\in W_\infty^c$ such that
					$wR_i^c w'$ and $\langle t,\phi\rangle\notin G_i$.
				\end{lemma}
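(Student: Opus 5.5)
The plan is to replay the proof of the bounded $\UO_i^n$-Existence Lemma (Lemma~\ref{lem:uo-refute-transfer}) inside $\M_\infty^c$. The infinitary machinery never enters, because the successor we build reuses the same maximal set $\Gamma$, and $\Gamma$ already lies in $\Omega_\infty$.

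\textbf{Least store.} Fix $w=\langle\Gamma,\mathcal F,\vec f\rangle\in W_\infty^c$. Let
\[
C^0:=\{\langle c,\lambda\rangle\mid\lambda\in\Lambda\}\cup\{\langle f_i^m(\psi),\psi\rangle\mid m\in\mathbb N^{+},\ \UO_i^m\psi\in\Gamma\},
\]
and let $C^\infty$ be its iterated closure under $\mathsf{cl}^{\,i}$ as in Definition~\ref{def:iter-closure}, now read over $\ELBU$.

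\textbf{Claim 1.} If $\langle u,\psi\rangle\in C^\infty$ and $1\leqslant\ell\leqslant\gr^c(u)$, then $\UO_i^\ell\psi\in\Gamma$. I prove this by induction on the stage $k$ with $\langle u,\psi\rangle\in C^k$, using the same case split.
\begin{itemize}
\item \emph{Base.} For $\langle c,\lambda\rangle$ use $(\NE)$, since theorems of $\SCUi$ belong to $\Gamma$ by maximality. For witness pairs use $(\IYB)$, whose instances remain finite-index axioms in $\SCUi$.
\item \emph{Application.} With $h=\min$ of the two grades, the induction hypothesis and $(\IYB)$ give $\UO_i^h(\chi\to\psi),\UO_i^h\chi\in\Gamma$. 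Then $(\Kgg)$ gives $\UO_i^h\psi\in\Gamma$, and $(\IYB)$ lowers this to $\ell$.
\item \emph{Sum.} The strict drop $\gr^c(r+s)=\min-1$ gives $\ell<\gr^c(r)$ for the summand carrying the pair, so the induction hypothesis and $(\IYB)$ suffice.
\item \emph{Introspection.} Here $\gr^c(!r)=2$ forces $\gr^c(r)\geqslant1$. The induction hypothesis gives $\UO_i^1\K_i\psi\in\Gamma$, then $(\KYU)$ gives $\UO_i^2\K_i\psi\in\Gamma$, and $(\IYB)$ gives level $\ell$.
\end{itemize}
Only finite-index axioms of $\SCUi$ are used, so the argument transfers verbatim. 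Grades of canonical terms are finite naturals, so no $\omega$-level term ever appears.

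\textbf{Claim 2 (the successor).} Take $\Delta:=\Gamma\in\Omega_\infty$. Let $g_i^m:=f_i^m$ for all $m\in\mathbb N^{+}$, and for $j\neq i$ let $g_j^m(\psi):=\psi^m$ when $\UO_j^m\psi\in\Delta$. Put $G_i:=C^\infty$, which is base-extending and closed under (App$_i$), (Sum$_i$) and (Int$_i$) by construction. Build each $G_j$ for $j\neq i$ by the base-and-closure recipe of Lemma~\ref{lem:canonical-realization-omega}. Then $w':=\langle\Delta,\mathcal G,\vec g\rangle\in W_\infty^c$. Since $\K_i\psi\in\Gamma$ implies $\psi\in\Gamma$ by $(\T)$, and the $i$-witness maps coincide, we get $wR_i^cw'$.

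\textbf{Conclusion.} If $\langle t,\phi\rangle\in C^\infty$, Claim 1 with $\ell=n\leqslant\gr^c(t)$ yields $\UO_i^n\phi\in\Gamma$, contradicting the hypothesis. Hence $\langle t,\phi\rangle\notin G_i$, as required.

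\textbf{Where care is needed.} The only delicate point is confirming that nothing in the bounded proof depended on $\kappa$. The base grades are now unbounded, but every step compares finitely many finite grades, so it is harmless. Likewise, $\Gamma$ being closed under $\omega$-introduction is needed only to ensure $\Delta\in\Omega_\infty$, which holds trivially since $\Delta=\Gamma$.
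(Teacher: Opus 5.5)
Your proposal is correct and matches the paper's proof, which simply says the argument is that of Lemma~\ref{lem:uo-refute-transfer} with $\mathbb N^{+}$ in place of $\{1,\dots,\kappa\}$ and no use of $\omega$-introduction. Your replay has the same structure: the least store $C^\infty$, Claim~1 by induction on closure stages, and the successor $\langle\Gamma,\mathcal G,\vec g\rangle$ with $G_i:=C^\infty$, whose membership in $W_\infty^c$ is automatic because $\Delta:=\Gamma\in\Omega_\infty$.
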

				\begin{proof}
			It is the same as Lemma~\ref{lem:uo-refute-transfer}, with
			$\mathbb N^{+}$ in place of $\{1,\dots,\kappa\}$. $\omega$-introduction is not used.
				\end{proof}
			\begin{lemma}[Truth Lemma for $\ELBU$]\label{lem:truth-full}
				For $\varphi\in\ELBU$ and 
				$w=\langle\Gamma,\mathcal F,\vec f\rangle\in W_\infty^c$:
			$\M_\infty^c,w\vDash\varphi
			$ iff $
				\varphi\in\Gamma$.
		
		\end{lemma}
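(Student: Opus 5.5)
We argue by induction on the structure of $\varphi$, following Lemma~\ref{lem:truth-ku} case by case and isolating the two places where the full language differs: the new operator $\UO_i^\omega$, and the comparative clause, whose bounded proof relied on $(\CMP\kappa)$. Before the induction we record three facts about $\M_\infty^c$. First, it is an $\ELBU$-model: $R_i^c$ is an equivalence by the argument of Lemma~\ref{lem:Rc-equivalence-xu}, the admissibility of $\Ex_i^c$ follows as in Lemma~\ref{lem:exi-adm-c}, and $W_\infty^c\neq\emptyset$ by Lemmas~\ref{lem:lindenbaum-omega} and~\ref{lem:canonical-realization-omega}. Second, every $\Gamma\in\Omega_\infty$ contains all $\SCUi$-theorems and is closed under $(\MP)$ and $(\omega\mathrm{I})$, as shown inside the proof of Lemma~\ref{lem:lindenbaum-omega}.

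The Boolean cases are standard. For $\K_i\psi$ we copy the bounded argument, using Lemma~\ref{lem:k-full} for the existence direction. For $\UO_i^n\psi$ with $n$ finite we also copy the bounded argument. The forward direction uses the witness $f_i^n(\psi)$, the requirement $g_i^n=f_i^n$ along $R_i^c$, $(\IMP)$, and the $\K_i$ case. The backward direction uses reflexivity of $R_i^c$ together with Lemma~\ref{lem:finite-u-full}.

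For $\UO_i^\omega\psi$ the semantics says $w\vDash\UO_i^\omega\psi$ iff $w\vDash\UO_i^n\psi$ for all $n$, which by the induction hypothesis holds iff $\UO_i^n\psi\in\Gamma$ for all $n$. The forward direction is then $(\IYB_\omega^n)$. The backward direction takes the $m=0$ instance of $(\omega\mathrm{I})$ with $\chi=\top$: the premises $\top\to\UO_i^n\psi$ all lie in $\Gamma$, so closure gives $\top\to\UO_i^\omega\psi\in\Gamma$, and hence $\UO_i^\omega\psi\in\Gamma$. This is exactly why we restricted to $\Omega_\infty$. For the induction to be legitimate, we order formulas so that $\UO_i^n\psi$ counts as below $\UO_i^\omega\psi$, for instance by the modal-depth/length of $\psi$ first. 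The comparative case likewise appeals to the hypothesis for $\UO_i^n\psi$, $\UO_j^n\psi$ and $\K_j\psi$, which all have the same argument $\psi$, so the same measure works.

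The comparative case $(i\succ j)\psi$ is the main obstacle. When $i=j$ we use $(\Irref)$ from Proposition~\ref{prop:derived-order}.

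\emph{From $(i\succ j)\psi\in\Gamma$ to truth.} By $(\CMP1)$ we get $\UO_i^1\psi,\K_j\psi\in\Gamma$, and by $(\CMP\omega)$ we get $\neg\UO_j^\omega\psi\in\Gamma$. Let $S_j=\{n\mid\UO_j^n\psi\in\Gamma\}$, which is downward closed by $(\IYB)$. We claim $S_j\neq\mathbb N^{+}$: otherwise closure under $(\omega\mathrm{I})$, as in the $\UO^\omega$ case, puts $\UO_j^\omega\psi$ into $\Gamma$, contradicting $\neg\UO_j^\omega\psi\in\Gamma$. Hence $S_j$ is either empty or equal to $\{1,\dots,m\}$ for some finite $m$. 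If it is empty, $\UO_i^1\psi$ gives $\deg_i>0=\deg_j$. If $S_j=\{1,\dots,m\}$, then $(\CMPn)$ gives $\UO_i^{m+1}\psi\in\Gamma$, so $\deg_i\geqslant m+1>m=\deg_j$, where $\deg_j=m$ exactly by the induction hypothesis. In both cases $\K_j\psi\in\Gamma$ yields $w\vDash\K_j\psi$ via the induction hypothesis.

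\emph{From truth to $(i\succ j)\psi\in\Gamma$.} The premise $\deg_i>\deg_j$ forces $\deg_j$ to be finite. If $\deg_j=0$ we apply $(\UYC0)$. If $\deg_j=m\geqslant1$, then $\UO_j^m\psi\wedge\neg\UO_j^{m+1}\psi$ and $\UO_i^{m+1}\psi$ are all true at $w$, hence lie in $\Gamma$ by the induction hypothesis, and $(\UYC)$ gives $(i\succ j)\psi\in\Gamma$. The point requiring care is that $\deg_i$ may equal $\omega$. Even then, $\deg_i>m$ still yields $\UO_i^{m+1}\psi$ at $w$, because a supremum above $m$ means some level $\geqslant m+1$ holds, and $(\IYB)$ takes it down to $m+1$.
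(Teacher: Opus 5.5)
Your proposal is correct and follows the paper's proof essentially step for step. The $\UO_i^\omega$ case goes through $(\IYB_\omega^n)$ and the basic ($m=0$, $\chi=\top$) instance of $(\omega\mathrm{I})$, and the comparative case goes through $(\CMP1)$, $(\CMP\omega)$, the downward-closed set $S_j\neq\mathbb N^{+}$, $(\CMPn)$, and $(\UYC0)$/$(\UYC)$. Two of your additions make explicit points the paper leaves implicit: the choice of an induction measure that places $\K_j\psi$ and $\UO_i^n\psi$ below $\UO_i^\omega\psi$ and $(i\succ j)\psi$, and the remark that $\deg_i=\omega$ still yields $\UO_i^{m+1}\psi$.
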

			\begin{proof}
					Induction on $\varphi$.
				The Boolean cases are standard. The $\K_i$ case uses
				Lemma~\ref{lem:k-full}. The finite-level $\UO_i^n$ case is the same as in
				Lemma~\ref{lem:truth-ku}, using Lemma~\ref{lem:finite-u-full} for the
				refutation direction.
			
				 For $\varphi=\UO_i^\omega\psi$.
						\begin{itemize}
					\item If $\UO_i^\omega\psi\in\Gamma$, then by $(\IYB_\omega^n)$,
					$\UO_i^n\psi\in\Gamma$ for every $n\geqslant1$. By IH,
					$\M_\infty^c,w\vDash\UO_i^n\psi$
					for every $n\geqslant1$, hence
					$\M_\infty^c,w\vDash\UO_i^\omega\psi$.
					\item If $\M_\infty^c,w\vDash\UO_i^\omega\psi$,
					then by semantics
					$\M_\infty^c,w\vDash\UO_i^n\psi$
					for all $n\geqslant1$, so by the finite-level $\UO$-case
					$\UO_i^n\psi\in\Gamma$ for all $n\geqslant1$.
						By closure under the basic $\omega$-introduction instance,
						$\UO_i^\omega\psi\in\Gamma$.
				\end{itemize}
				
					 For $\varphi=(i\succ j)\psi$.
					The case $i=j$ is the same as in Lemma~\ref{lem:truth-ku}. Assume $i\neq j$.
					\begin{itemize}
							\item Assume $(i\succ j)\psi\in\Gamma$.
							By $(\CMP1)$ and $(\CMP\omega)$,
							$\K_j\psi,\UO_i^1\psi,\neg\UO_j^\omega\psi\in\Gamma$; hence
							$\M_\infty^c,w\vDash\K_j\psi$ by the preceding $\K$-case.
							Let $S_j:=\{n\geqslant1\mid \UO_j^n\psi\in\Gamma\}$.
						By $(\IYB)$, $S_j$ is downward closed and $S_j\neq\mathbb N^{+}$; otherwise the basic
							$\omega$-introduction instance would contradict
							$\neg\UO_j^\omega\psi\in\Gamma$.
							If $S_j=\emptyset$, then $\deg_j(w,\psi)=0$, while
							$\UO_i^1\psi\in\Gamma$ gives $\deg_i(w,\psi)\geqslant1$ by the	finite-level $\UO$-case.
							If $S_j\neq\emptyset$, let $m+1$ be the least not in $S_j$.
							Then $m\geqslant1$ and $S_j=\{1,\dots,m\}$.
							Hence $\UO_j^m\psi,\neg\UO_j^{m+1}\psi\in\Gamma$ and, by $(\CMPn)$,
							$\UO_i^{m+1}\psi\in\Gamma$.
						By the finite-level $\UO$-case,
						$\deg_j(w,\psi)=m$ and $\deg_i(w,\psi)\geqslant m+1$.
						In either case, $\deg_i(w,\psi)>\deg_j(w,\psi)$, so
						$\M_\infty^c,w\vDash(i\succ j)\psi$.
							\item Assume
							$\M_\infty^c,w\vDash(i\succ j)\psi$.
							Then $\K_j\psi\in\Gamma$ by the $\K$-case. The degree inequality
							also gives $\deg_j(w,\psi)\neq\omega$.
							If $\deg_j(w,\psi)=0$, then
							$\deg_i(w,\psi)\geqslant1$ and $\UO_j^1\psi$ is false. By the finite-level $\UO$-case, $\UO_i^1\psi,\neg\UO_j^1\psi\in\Gamma$; hence
							$(i\succ j)\psi\in\Gamma$ by $(\UYC0)$.
							If $\deg_j(w,\psi)=m\geqslant1$, then
							$\deg_i(w,\psi)\geqslant m+1$. By the finite-level $\UO$-case,
							$\UO_j^m\psi,\neg\UO_j^{m+1}\psi, \UO_i^{m+1}\psi \in\Gamma$; then $(\UYC)$ yields
							$(i\succ j)\psi\in\Gamma$.
					\end{itemize}
			
		\end{proof}
			\begin{theorem}[Strong completeness for $\SCUi$]\label{thm:strong-full-inf}
			For all $\Sigma\cup\{\varphi\}\subseteq\ELBU$,
			$	\Sigma\models\varphi
			$ implies $
			\Sigma\vdash_{\SCUi}\varphi$.
			
		\end{theorem}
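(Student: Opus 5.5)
We argue contrapositively, exactly as in the bounded case (Theorem~\ref{thm:ku-strong}), but now relying on the infinitary machinery. Assume $\Sigma\nvdash_{\SCUi}\varphi$. By Lemma~\ref{lem:scui-neg-cons}, the set $\Sigma\cup\{\neg\varphi\}$ is $\SCUi$-consistent. By the $\omega$-Lindenbaum extension (Lemma~\ref{lem:lindenbaum-omega}), we extend it to some $\Gamma\in\Omega_\infty$, that is, a maximal $\SCUi$-consistent set closed under all $\omega$-introduction instances. Lemma~\ref{lem:canonical-realization-omega} then supplies $\mathcal F$ and $\vec f$ such that $w:=\langle\Gamma,\mathcal F,\vec f\rangle\in W_\infty^c$.

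Next we must check that $\M_\infty^c$ is a genuine $\ELBU$-model, i.e.\ the full analogue of Proposition~\ref{prop:cm-well-defined}. Three points need verification.
\begin{itemize}
\item $E^c$ is generated with indices in $\mathbb N^{+}$, and the clauses of $\gr^c$ meet the grade constraints. This is verified clause by clause as before; the $+$ clause gives $\min-1<\min$ whenever $\min\geqslant1$.
\item Each $R_i^c$ is an equivalence relation. The argument of Lemma~\ref{lem:Rc-equivalence-xu} transfers, since it uses only $(\T),(\4),(\5)$ and equality of the $f_i^n$, all of which are available in $\SCUi$.
\item Each $\Ex_i^c$ satisfies the admissibility conditions. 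This holds because every $F_i$ contains the $\Lambda$-base and is closed under (App$_i$), (Sum$_i$), and (Int$_i$), as in Lemma~\ref{lem:exi-adm-c}.
\end{itemize}
No $\omega$-specific condition appears in the model definition: $\UO^\omega$ is evaluated derivatively from the finite levels.

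With the model in place, the Truth Lemma for $\ELBU$ (Lemma~\ref{lem:truth-full}) yields $\M_\infty^c,w\vDash\sigma$ for every $\sigma\in\Sigma\subseteq\Gamma$, and also $\M_\infty^c,w\vDash\neg\varphi$. Hence $\Sigma\not\models\varphi$, which completes the argument.

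The main obstacle is already discharged by the earlier lemmas. It is that $\Gamma$ and every existence witness must lie in $\Omega_\infty$ rather than merely in the set of maximal consistent sets, since otherwise the $\UO^\omega$ and comparative cases of the Truth Lemma fail. The only thing to confirm here is that the set produced by Lemma~\ref{lem:lindenbaum-omega} is indeed in $\Omega_\infty$, and that the successors built in Lemma~\ref{lem:k-full} and Lemma~\ref{lem:finite-u-full} stay in $W_\infty^c$. For Lemma~\ref{lem:finite-u-full} this holds because $\Delta:=\Gamma$. For Lemma~\ref{lem:k-full} it holds because the extension is again taken via Lemma~\ref{lem:lindenbaum-omega}.
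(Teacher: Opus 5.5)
Your proposal is correct and follows essentially the same route as the paper. The paper's chain of steps is: consistency of $\Sigma\cup\{\neg\varphi\}$ by Lemma~\ref{lem:scui-neg-cons}, extension into $\Omega_\infty$ by Lemma~\ref{lem:lindenbaum-omega}, realization of a canonical world by Lemma~\ref{lem:canonical-realization-omega}, and then the full Truth Lemma~\ref{lem:truth-full}. Your explicit check that $\M_\infty^c$ is a genuine $\ELBU$-model (the full analogue of Proposition~\ref{prop:cm-well-defined}) is a step the paper leaves implicit, and your verification of it is sound.
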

			\begin{proof}
				Assume $\Sigma\nvdash_{\SCUi}\varphi$.
				By Lemmas~\ref{lem:scui-neg-cons} and~\ref{lem:lindenbaum-omega}, choose
				$\Gamma^\infty\in\Omega_\infty$ with
				$\Sigma\cup\{\neg\varphi\}\subseteq\Gamma^\infty$.
				By Lemma~\ref{lem:canonical-realization-omega}, choose $\mathcal F,\vec f$
				such that $w:=\langle\Gamma^\infty,\mathcal F,\vec f\rangle\in W_\infty^c$.
				Let $\M_\infty^c$ be the corresponding canonical model over $\Omega_\infty$.
				By Lemma~\ref{lem:truth-full},
				$\M_\infty^c,w\vDash\Sigma$ and $\M_\infty^c,w\vDash\neg\varphi$.
				Thus $\Sigma\not\models\varphi$.
		\end{proof}

		\section{Conclusion}\label{sec:conclusion}
		We have developed a comparative epistemic logic of understanding. 
	 	On the semantic side, the framework is based on agent-indexed graded explanations;
		on the proof-theoretic side, it separates a bounded finitary layer from a full
		infinitary layer. In this way, the logic captures understanding in degrees and
		comparative understanding with respect to the same formula at issue, while also supporting
		strong completeness for the intended calculi and decidability for each fixed
		finite-level fragment.

		The framework also suggests several natural directions for further work.
		On the technical side, it would be worth studying richer proof-theoretic
		presentations for the full language. 
		On the dynamic side, one may ask how graded and comparative understanding behaves
		under information change, communication, or learning. More broadly, the present
		logic invites closer comparison both with other comparative epistemic frameworks
		and with neighboring formal accounts of explanation and understanding.

		\paragraph{Acknowledgements.}
		The author thanks the anonymous reviewers for their helpful suggestions that led to
		many improvements. The author also thanks Qiang Wang for commenting on a very early
		draft of this paper. This work is supported by the grant 24CZX085 from the National
		Social Science Fund of China.
		
		\bibliographystyle{eptcs}
		\bibliography{generic}
	\end{document}